\newtheorem{definition}{Definition}[section]
\newtheorem{lemma}[definition]{Lemma}
\newtheorem{proposition}[definition]{Proposition}
\newtheorem{theorem}[definition]{Theorem}
\title{Solving generic parametric linear matrix inequalities}
\author[1]{Simone Naldi}
\author[2]{Mohab Safey El Din}
\author[3]{Adrien Taylor}
\author[2]{Weijia Wang}
\affil[1]{Université de Limoges, CNRS, XLIM, Limoges, France}
\affil[2]{Sorbonne Université, CNRS, LIP6, Paris, France}
\affil[3]{Inria, École normale supérieure, PSL Research University, Paris, France}
\date{}
\begin{document}
\maketitle

\begin{abstract}
    We consider linear matrix inequalities (LMIs) $A =
        A_0+x_1A_1+\cdots+x_nA_n\succeq 0$ with the $A_i$'s being $m\times m$
    symmetric matrices, with entries in a ring
    $\gring$.
    When $\gring = \reals$, the feasibility problem
    consists in deciding whether the $x_i$'s can be instantiated to obtain a
    positive semi-definite matrix.
    When $\gring = \Q[\param_1, \ldots, \param_\nparam]$, the problem
    asks for a formula on the parameters
    $\param_1, \ldots, \param_t$, which describes
    the values of the parameters for which the specialized LMI is feasible.
    This problem can be solved using general quantifier elimination
    algorithms, with a complexity that is exponential in $n$.
    In this work, we leverage the LMI structure of the problem
    to design an algorithm that computes a formula $\Phi$ describing
    a dense subset of the feasible region of parameters, under genericity
    assumptions.
    The complexity of this algorithm
    is exponential in $n, m$ and $t$ but becomes polynomial in $n$
    when $m$ and $t$ are fixed.
    We apply the algorithm to a parametric sum-of-squares problem
    and to the convergence analyses of certain first-order optimization methods,
    which are both known to be equivalent
    to the feasibility of certain parametric LMIs, hence demonstrating its
    practical interest.
\end{abstract}

\section{Introduction}

\paragraph{Problem statement}
For a set $\mathcal{E}$ and $m\in \N$,
we denote by $\matsym_m(\mathcal{E})$
the set of $m\times m$ symmetric matrices
with entries in $\mathcal{E}$.
For $n$ and $\nparam$ in $\N$,
we consider sequences of variables $\bx = \left( x_1, \ldots, x_n \right) $
and parameters
$\by = \left( \param_1, \ldots, \param_t \right) $ and the ring $\pring =
    \Q[\by][\bx]$. The
set of polynomials in $\pring$
of degree at most $1$ in \(\bx\)
is denoted by $\linpoly{\pring}$.

A matrix $A= A_0+x_1 A_1+\cdots+x_nA_n$ in
$\matsym_m(\linpoly{\pring})$
is said to be a \emph{parametric linear matrix}.
For any $y\in \reals^\nparam$, we denote
by  $A_{y}$ the matrix obtained by specializing the parameters  $\by$
to  $y$. The linear matrix inequality $A_{y}\succeq 0$ defines the
spectrahedron $\spec(A_{y})\subset \reals^n$, i.e.,
the set of points $x$ in $\reals^n$ such that \(A_y(x)\)
is positive semi-definite. It is well-known that $\spec(A_{y})$ is a convex
semi-algebraic set, defined by $g_i(y,\cdot)\geq 0$, where
$g_i\in\pring$
is the coefficient of $\lambda^i$ in $\det(A+\lambda I_m)$
(see, e.g., \cite[Sec. 5.2]{henrion2016exact}).

\begin{problem}[Generic feasibility problem]
\label{prob:generic}
Given a parametric linear matrix
$A\in\matsym_{m}(\linpoly{\pring})$,
compute a semi-algebraic formula $\Phi$ in $\Q[\y]$,
such that $\Phi$ defines a dense subset of
the set of feasible parameters
$\mathcal{P}:=\lbrace y\in\reals^t\mid\exists x\in\reals^{n}:A(y,x)\succeq 0\rbrace$.
\end{problem}

The problem is a natural generalization of the decisional feasibility problem
for LMIs. Indeed, once the semi-algebraic formula $\Phi$ is computed,
one can decide whether a given parameter $y$ is in $\mathcal{P}$ by checking
whether $\Phi(y)$ is true.

\paragraph{Prior works}
In another perspective, the problem can be seen as a quantifier elimination
(QE) problem, which consists of eliminating the unknowns from the LMI.

Quantifier elimination over the reals has a long history starting with Tarski's
algorithm \cite{Tarski} with a non elementary recursive complexity.
Collins'cylindrical algebraic decomposition algorithm \cite{collins1975} is the
first practical algorithm. Its complexity is doubly exponential in the total
number of unknowns.

It should be noted that, here, we have a quantifier elimination problem with a
single block of variables to eliminate. The idea to exploit such a block
structure to improve the complexity of quantifier elimination originates from
\cite{Grigoriev88} and culminates in \cite{BPRQE} (see \cite[Sec.
    14.6]{basu2007algorithms} for more bibliographic notes). This yields algorithms
which are doubly exponential in the number of alternates of quantifiers and
exponential in the total number of unknowns. Still, putting into practice such
algorithms is an open problem.

A weakened variant for one-block quantifier elimination
is proposed in \cite{le2021faster} but is restricted to systems of polynomial
equations which enjoy regularity properties and do not involve inequalities.

Our study goes back to \cite{henrion2016real} which yields an \emph{exact}
algorithm for solving linear matrix inequalities, combining its principles with
those of \cite{le2021faster}.

\paragraph{Contributions}

We present an algorithm solving \Cref{prob:generic},
which can be seen as a generalization of the results in
\cite{henrion2016exact} to the parametric setting.
The algorithm first computes a finite set of polynomial systems
$(\f_i)_i$
in $\pring[\bu,\blambda]$,
where $\bu$ and $\blambda$ are auxiliary variables,
inspired by the subroutines in \cite{henrion2016exact}, used to solve
\emph{exactly} linear matrix inequalities.
The polynomial systems
$(\f_i)_i$ satisfy the following \emph{specialization property}, which we prove.
For generic values of the parameters $y\in\reals^t$,
each $\f_i(y,\cdot)=0$ has a finite number of complex solutions,
and the non-emptiness of $\spec(A_y)$ is determined by
the existence of at least one solution to a certain
$\f_i(y,\cdot)=0\land\g(y,\cdot)\geq 0$ where \(\g = (g_0, \ldots, g_m)\)
are defined above.
To extract the formula $\Phi$ characterizing this condition,
the algorithm applies to each relation $\f_i=0\land\g\geq 0$
the real root counting method described in \cite{gaillard2024},
based on Hermite's quadratic form \cite{hermite1856extrait}.

\begin{figure}[H]
    \centering
    \begin{tikzpicture}[every node/.style={align=center}]
        \node (A) at (0, 0) {\textbf{Input} \\ Parametric LMI};
        \node (B) at (5, 0) {\textbf{Output} \\ Feasibility formulas};
        \node (C) at (2.5, -1.5) {0-dim systems in $\Q(\y)[\x,\bu,\blambda]$};

        \draw[->, dashed] (A) -- (B) node[midway, above] {Q.E.};
        \draw[->] (A) -- (C) node[midway, left=4mm] {Parametric \\ SolveLMI};
        \draw[->] (C) -- (B) node[midway, right=3mm] {Real root \\ Classification};
    \end{tikzpicture}
\end{figure}

Note that for an input $\f_i$,
the algorithm in \cite{gaillard2024} is based on %
Gröbner bases computations, and, consequently,
extra genericity assumptions on the input are required
to control the complexity.

By contrast, in our case, the ideal generated by the input $\f_i$
turns out to be generically radical, but does not satisfy the genericity
assumptions used in \cite{gaillard2024}, because of the special structure of the
\(\f_i\)'s. Still, we show how to control the complexity in this case, replacing
Gröbner bases computations with the computation of %
parametric geometric resolutions \cite{schost2003computing}
of the input $\f_i$. We obtain the following complexity result.

\begin{theorem}
    \label{thm:main}
    There exists an algorithm which,
    given a generic parametric linear matrix
    $A\in\matsym_{m}(\linpoly{\pring})$
    of any fixed maximum total degree $d$ as input,
    outputs a semi-algebraic formula $\Phi$ in $\Q[\y]$
    that solves \Cref{prob:generic} for the input $A$,
    in
    \begin{equation}
        \label{eq:complexity}
        2^{O(mt)}n^{O(1)}(md)^{O(t)}(\upperbound{\delta}\upperbound{\Delta})^{O(t)}
    \end{equation}
    arithmetic operations in $\Q$, where
    \begin{align*}
        \upperbound{\delta}  \in n^{O(m^2)}\quad  \text{ and } \quad
        \upperbound{\Delta}  \in e^{O(m^2\log m)}n^{O(1)}d^{O(m^2+t)}t^{O(1)}.
    \end{align*}
\end{theorem}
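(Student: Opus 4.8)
The plan is to prove the theorem by analysing, and bounding the cost of, the two-stage algorithm sketched above: a \emph{parametric} version of the \algosolvelmi\ routine of \cite{henrion2016exact} that outputs the systems $(\f_i)_i$, followed by the real root classification of each relation $\f_i = 0 \wedge \g \geq 0$ based on Hermite's quadratic form as in \cite{gaillard2024}. The geometric fact underlying the first stage is that whenever $\spec(A_y)$ is nonempty it contains a point at which $A_y$ attains its minimal rank on $\spec(A_y)$, and such points are critical points of a generic linear form restricted to the determinantal locus $\Variety_r = \{x \mid \rank A(y,x) \leq r\}$. Enumerating $r \in \interval{0}{m-1}$ and the associated incidence/Lagrange conditions produces finitely many polynomial systems: for each rank $r$ one introduces kernel vectors collected in $\bu$ (an element of a chart of the relevant Grassmannian) subject to $A(\by,\bx)\bu = 0$, Lagrange multipliers $\blambda$ encoding that the differential of a generic linear form lies in the span of the differentials of the active minors, and a generic linear section in $\bx$; this yields $\f_i \in \pring[\bu,\blambda]$. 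The first point to check is that this is a faithful parametric lift, i.e.\ that $\f_i$ is given by the same algebraic formulas as in the non-parametric case, so that $\f_i(y,\cdot)$ is exactly the output of \algosolvelmi\ run on $A_y$ for every $y$.

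The second ingredient is the \emph{specialization property}. Working over $\Q(\by)$, the genericity hypothesis on $A$ must be used to show that the ideal generated by $\f_i$ is zero-dimensional and radical — the ``generically radical'' statement announced in the excerpt — with a finite number $\upperbound{\delta} \in n^{O(m^2)}$ of solutions; this count is obtained through a multihomogeneous B\'ezout argument exploiting that $A(\by,\bx)$ is linear in $\bx$ and that $\bu$ occurs linearly in $A(\by,\bx)\bu=0$, so that the system is essentially bilinear in two groups of variables of sizes $n$ and $O(m^2)$. The set of parameters $y$ at which $\f_i(y,\cdot)=0$ fails to be zero-dimensional, or at which the number of complex solutions drops, or at which the reduction to ``$\exists(\bu,\blambda)\colon \f_i=0 \wedge \g\geq 0$'' ceases to be equivalent to feasibility, is contained in a proper Zariski-closed subset of $\reals^t$ — the zero set of suitable resultants and discriminants read off from a parametric geometric resolution of $\f_i$. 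Removing this set and invoking the non-parametric correctness of \algosolvelmi\ on each remaining fiber shows that $\bigcup_i\{\,y \mid \exists(\bu,\blambda)\colon \f_i(y,\bu,\blambda)=0 \wedge \g(y,\cdot)\geq 0\,\}$ agrees with $\mathcal{P}$ on a dense subset of $\reals^t$, which is exactly what \Cref{prob:generic} requires of the output formula $\Phi$.

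For the complexity, I would, in the first stage, replace Gröbner basis computations — uncontrolled under the special shape of the $\f_i$'s — by a parametric geometric resolution of each $\f_i$ in the sense of \cite{schost2003computing}. Its degree in $\by$ is bounded by $\upperbound{\Delta}\in e^{O(m^2\log m)}n^{O(1)}d^{O(m^2+t)}t^{O(1)}$, combining the fiber degree $\upperbound{\delta}$ with the degrees of the inputs (which are $O(d)$ in $\by$ and $O(md)$ in total) via the arithmetic-B\'ezout estimates of that reference; there are $2^{O(m)}$ systems (choices of rank and of chart), and computing all resolutions costs $2^{O(m)} n^{O(1)} (md)^{O(t)} (\upperbound{\delta}\upperbound{\Delta})^{O(1)}$ operations in $\Q$. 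In the second stage, from each parametric geometric resolution one builds Hermite's quadratic form with entries in $\Q[\by]$ of size $\upperbound{\delta}$ and of degree $O(\upperbound{\delta}\,\upperbound{\Delta})$ in $\by$, weighted by the sign conditions attached to the $m+1$ polynomials $\g$; the real root classification of \cite{gaillard2024} then enumerates the $2^{O(m)}$ sign patterns on $\g$ together with the $\upperbound{\delta}^{O(1)}$ sign conditions on the minors of the Hermite matrix, each of degree $(\upperbound{\delta}\upperbound{\Delta})^{O(1)}$ in $t$ parameters, for a total of $2^{O(m+t)} n^{O(1)} (md)^{O(t)} (\upperbound{\delta}\upperbound{\Delta})^{O(t)}$ arithmetic operations, matching \eqref{eq:complexity}. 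The formula $\Phi$ is the disjunction, over $i$ and over the admissible sign patterns, of these polynomial sign conditions.

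The main obstacle, I expect, is twofold and concentrated in the interface between correctness and complexity. First, one must transfer the generic-over-$\Q(\by)$ radicality and dimension statements to a \emph{dense} (not merely nonempty) set of real parameters while \emph{simultaneously} preserving the validity of the rank-minimization reduction on the surviving fibers — the three bad loci above must all be shown to be contained in a single proper Zariski-closed set, which requires identifying explicit defining polynomials from the geometric resolution. Second, because the $\f_i$'s violate the genericity assumptions of \cite{gaillard2024} (bilinearity in $(\bx,\bu)$, the presence of Lagrange multipliers), the naive B\'ezout bound on $\upperbound{\delta}$ is far too pessimistic; a careful multihomogeneous degree analysis is needed to obtain the $n^{O(m^2)}$ bound and, downstream, the stated $\upperbound{\Delta}$, and it is precisely this structural subtlety that forces the use of parametric geometric resolutions in place of Gröbner bases.
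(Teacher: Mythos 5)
Your overall architecture is the same as the paper's: a parametric run of \algorealdet producing the Lagrange-type systems $\f_{r,\iota,i}\subset\pring[\bu,\blambda]$, parametric geometric resolutions \cite{schost2003computing} in place of Gr\"obner bases, Hermite quadratic forms and the classification algorithm of \cite{gaillard2024}, multihomogeneous degree bounds for $\upperbound{\delta}$ and $\upperbound{\Delta}$, and the same assembly of the final cost. The genuine gap is in the correctness step, i.e.\ the specialization property (\Cref{prop:goal}). The whole difficulty there is that $M\in\GL_n(\Q)$ and $\tau\in\Q^n$ are chosen \emph{before} and independently of $y$, and the paper spends Section 5 on this: a Thom-transversality argument showing that for generic $A$ the incidence systems satisfy $\proptwo{t+n-\binom{m-r+1}{2}}$ (\Cref{lem:incdim}), the equidimensional dimension-of-fibres lemma (\Cref{lem:equidim}) to get smooth, equidimensional fibers of the expected dimension for generic $y$, and lemmas allowing the order of the choices of $y$ and $M$ to be inverted (because the underlying proofs of \cite{henrion2016exact,henrion2016real} are algebraic over $\overline{\C(\y)}$, as in \cite{le2021faster}), so that the conclusions of \Cref{thm:det2inc} and of the critical-point theorem hold fiberwise. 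Your proposed substitute --- confining the bad parameters to the vanishing of ``resultants and discriminants read off from a parametric geometric resolution'' --- only controls where the specialized system fails to be zero-dimensional or reduced, or where the solution count drops; it does not show that, for the surviving $y$, the finite set $\pi_{\x}(V(\f_{r,\iota,i}^{M}(y,\cdot)))$ meets \emph{every} connected component of $\svardet{r}^{M}(y,\cdot)\cap\reals^n$, which is exactly what makes ``$\exists(\bu,\blambda):\f_i=0\wedge\g\geq 0$'' equivalent to feasibility of $A_y$. You correctly flag this as the main obstacle, but the mechanism you propose would not close it; one needs the geometric genericity analysis, not data extracted from the resolution.

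A secondary, smaller inaccuracy: $\upperbound{\Delta}$ is not imported from the arithmetic-B\'ezout estimates of \cite{schost2003computing}; the paper derives it by an explicit multilinear B\'ezout computation on the Lagrange systems, exploiting the multidegrees with respect to the variable groups $(\bu,\y,\x,\blambda)$ (\Cref{lem_Theta}, \Cref{lem_MBB}), while $\upperbound{\delta}$ is taken from \cite[Prop.~5.1]{henrion2016exact}. Since you acknowledge that a careful multihomogeneous analysis is required, this is a missing computation rather than a wrong approach; the remaining complexity bookkeeping (each call to \textsc{Classification} dominated by the \textsc{SamplePoints} calls, over the $2^{O(m)}n^{O(1)}$ systems returned by \algorealdet) matches the paper's proof.
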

Observe that for fixed \(m\) and  \(t\), the above complexity is polynomial in
\(d\) and \(n\), which is useful for the applications we have in mind (see
below).

To obtain this result, we rely on techniques, based on the critical point
method for quantifier elimination, introduced in \cite{le2021faster}. These
are restricted to the case of systems of polynomial constraints
involving equations only. We use reductions to such a situation introduced in
\cite{henrion2016exact}, but extending them to the parametric setting. As
already sketched above, we are then bound to real root classification problems,
which we solve by combining Hermite's quadratic forms as in \cite{gaillard2024}.
However, the techniques used in \cite{gaillard2024} to control the complexity
do not apply to our situation anymore because of the specific structure of the
\(\f_i\)'s. Instead, we show how to leverage this specific structure to obtain
degree bounds on algebraic sets defined by the \(\f_i\)'s (the quantities
\(\delta^\star\) and  \(\Delta^\star\) are maxima of these bounds),
so that one can control adequately the cost of the calls to
the parametric geometric resolution algorithm \cite{schost2003computing} in
replacement of Gröbner bases in the real root classification algorithm.

We also report on practical experiments,
with a first implementation of our algorithm.
Tests include topical examples of parametric LMIs which are used to
encode
convergence properties of first-order optimization algorithms
\cite{drori2014performance, taylor2017exact, taylor2017smooth}. This yields a new application area for computer algebra.
As a matter of a
fact, these parametric LMIs appear to have a moderate size \(m\) and number of
parameters  \(t\) which justify our approach.
These experiments show
that our algorithm is able to tackle such problems which are out of reach of the
state-of-the-art implementations of real quantifier elimination.

\paragraph{Plan of the paper}
Section 2 presents the basic notions of determinantal and incidence varieties
which are used to define the \(\f_i\)'s.
Section 3 presents the algorithm in \cite[Sec. 3.3]{henrion2016exact}
for solving generic LMIs.
Section 4 describes our algorithm for solving generic parametric LMIs,
whose correctness is proven in Section 5,
and whose complexity is analyzed in Section 6.
Section 7 reports on practical experiments.

\section{Preliminaries}

Throughout the paper,
we denote by $\pring=\Q[\y][\x]$,
and $\sring=\Q[\x]$,
as a special case of $\pring$ with $t=0$.

\subsection{Basic notions}

\paragraph{Critical points}

Let $\mathcal{V}\subset\C^{n}$ be a smooth and equidimensional algebraic set,
and $\phi:\C^{n}\rightarrow\C^{i}$ be a regular map,
with $i\in\interval{1}{n}$
(where $\interval{1}{n} := \lbrace 1,\ldots,n\rbrace$).
We denote by $\crit(\phi,\mathcal{V})$
the set of critical points of the restriction of $\phi$
to $\mathcal{V}$, i.e.,
the points where the Jacobian matrix $\jac(\f,\phi)$
of $(\f,\phi)$ has rank $\leq n-\dim(\mathcal{V})$,
where $\f\subset\sring$ is a polynomial system generating the ideal
$I(\mathcal{V})$ associated to \(\mathcal{V}\)
(see, e.g., \cite[Sec. 3.1]{safeyschost2003}).

\paragraph{Properness}

Let \(E\) be a topological space, $\mathcal{V}\subset E^n$,
and $\phi:\mathcal{V}\mapsto\C^{i}$ be a map,
with $i\in\interval{1}{n}$.
We say that $\phi$ is proper if for all $y\in\phi(\mathcal{V})$,
there exists a closed neighborhood $\mathcal{O}_{y}$ of $y$ such that
$\phi^{-1}(\overline{\mathcal{O}})$ is closed and bounded,
where $\overline{\mathcal{O}}$ denotes the closure of $\mathcal{O}$.

\paragraph{Change of variables}

Let $F$ be a field of characteristic $0$,
and $M\in\GL_n(\Q)$ be an invertible matrix.
For any polynomial $p\in F[\x]$, we denote by $p^M$ the polynomial
satisfying $p^M(\x)=p(M\x)$.
Analogously, for any set
$\mathcal{E}\subset F^n$,
we denote by
$\mathcal{E}^M:=\lbrace M^{-1}\x\mid\x\in\mathcal{E}\rbrace$.

\subsection{Regularity and auxiliary result}

Let \(\Variety\) be a locally closed set
of \(\complex^n\).
We say that \(\Variety\) satisfies
\(\propone\) (resp. \(\proptwo{d}\))
if the Zariski closure of  \(\Variety\) is either empty or
smooth and equidimensional (resp. of dimension  \(d\)).

We say that a polynomial system $\f\subset\pring$
satisfies assumption \(\propreg\) (resp. \(\propreg(d)\))
if $\f$ generates a radical ideal,
whose associated algebraic set $V(\f)$ satisfies \(\propone\) (resp.
\(\proptwo{d}\)).

The following lemma then serves as an equidimensional version of the Theorem
on the Dimension of Fibres \cite[Th. 1.25]{shafarevich1994basic}.

\begin{lemma}
    \label{lem:equidim}
    Let $\f\subset\Q[\y,\x]$ be a finite set of polynomials
    satisfying $\propreg{(d)}$, with $d\geq t$.
    Denote by $d^\prime=d-t$.
    Then, there exists a non-empty Zariski open set
    $\mathcal{Y}\subset\C^\nparam$ such that for all $y\in\mathcal{Y}$,
    the specialization $\f(y,\cdot)$ satisfies $\propreg{(d^\prime)}$.
\end{lemma}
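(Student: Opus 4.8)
The plan is to combine a generic smoothness (Bertini-Sard type) argument with an equidimensional version of the theorem on the dimension of fibres. Write $V = V(\f) \subset \C^{\nparam} \times \C^n$ and let $\pi : \C^{\nparam}\times\C^n \to \C^{\nparam}$ be the projection onto the parameter space. By hypothesis $V$ is smooth and equidimensional of dimension $\nparam + d$, and $\f$ generates a radical ideal.

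First I would handle the case where $\pi(V)$ is not Zariski-dense in $\C^{\nparam}$: then the Zariski closure of $\pi(V)$ is a proper closed subset, and taking $\mathcal{Y}$ to be its complement makes $\f(y,\cdot)=0$ have empty solution set for $y \in \mathcal{Y}$, so $\propreg(d)$ holds vacuously (the empty set satisfies $\propone$ and, by convention here, $\proptwo{d}$). So assume $\pi|_V$ is dominant. Next I would apply generic smoothness of the morphism $\pi|_V$: since we are in characteristic zero and $V$ is smooth (hence reduced and, on each component, a variety), there is a non-empty Zariski open $\mathcal{Y}_1 \subset \C^{\nparam}$ such that for $y \in \mathcal{Y}_1$ the fibre $\pi|_V^{-1}(y) = V(\f(y,\cdot))$ is smooth. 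Concretely one can argue via the generic rank of the Jacobian $\jac(\f)$ with respect to $\x$: on the smooth locus of $V$ the full Jacobian has rank $n - d$, and the locus where the submatrix $\partial \f/\partial \x$ drops below rank $n-d$ is a proper closed subset of $V$ (using dominance of $\pi$), whose image under $\pi$ is contained in a proper closed subset; let $\mathcal{Y}_1$ be its complement. On $\mathcal{Y}_1$, the Jacobian criterion shows each non-empty fibre is smooth, the local equations cut it out transversally, so the ideal $\langle \f(y,\cdot)\rangle$ is radical.

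For equidimensionality I would invoke the standard theorem on dimension of fibres applied to the restriction of $\pi$ to each irreducible component $V_j$ of $V$: each $V_j$ has dimension $\nparam + d$, and either $\pi(V_j)$ is not dense (handled as above, shrinking $\mathcal{Y}$ to avoid $\overline{\pi(V_j)}$) or $\pi|_{V_j}$ is dominant, in which case there is a non-empty open $\mathcal{Y}_j$ over which every fibre of $\pi|_{V_j}$ is pure of dimension $(\nparam + d) - \nparam = d$. Intersecting $\mathcal{Y}_1$ with all the $\mathcal{Y}_j$ (and the complements of the non-dense $\overline{\pi(V_j)}$) over the finitely many components gives a non-empty Zariski open $\mathcal{Y}$. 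For $y \in \mathcal{Y}$ the fibre $V(\f(y,\cdot)) = \bigcup_j \pi|_{V_j}^{-1}(y)$ is a finite union of smooth sets each of pure dimension $d$; I would then note that on $\mathcal{Y}_1$ no two such components can meet (a point in an intersection would be a singular point of the fibre, contradicting smoothness from the previous step), so the fibre is in fact smooth and equidimensional of dimension $d$, i.e. satisfies $\propone$ and $\proptwo{d}$; together with radicality this is exactly $\propreg(d)$.

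The main obstacle I expect is the bookkeeping needed to guarantee that shrinking to $\mathcal{Y}$ simultaneously forces (i) smoothness of the fibre, (ii) purity of dimension $d$ on each component that dominates, and (iii) that the components that do \emph{not} dominate contribute nothing over $\mathcal{Y}$ — and then arguing that smoothness of the total fibre (not just of each piece) follows, so that radicality of $\langle \f(y,\cdot)\rangle$ can be read off from the Jacobian criterion. A clean way to organize this is: (a) reduce to $V$ irreducible and $\pi$ dominant by the component decomposition; (b) quote generic smoothness of a dominant morphism between smooth varieties in characteristic zero for item (i); (c) quote \cite[Th.~1.25]{shafarevich1994basic} for the generic fibre dimension for item (ii); (d) take the common refinement. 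The only genuinely non-formal point is verifying that the \emph{radicality} of the specialized ideal is preserved, which follows because a smooth complete intersection-like fibre cut out transversally is reduced — but if $\f$ has more generators than the codimension, one should instead argue that $\langle \f(y,\cdot) \rangle$ and $I(V(\f(y,\cdot)))$ have the same localization at every maximal ideal by the Jacobian criterion, hence are equal.
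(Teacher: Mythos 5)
Your proposal is correct and follows essentially the same route as the paper: dispose of the non-dominant case, invoke the theorem on the dimension of fibres for equidimensionality of dimension $d$, and use Sard-type generic smoothness plus the Jacobian criterion to get smoothness and radicality of the specialized ideal. One small caution in your ``concrete'' elaboration: the fact that the locus where $\partial\f/\partial\x$ drops rank is a \emph{proper closed subset of $V$} does not by itself imply its image under $\pi$ lies in a proper closed subset of $\C^{\nparam}$ (it could still have dimension $\geq \nparam$ and dominate); the correct justification is Sard's theorem applied to the critical locus of $\pi|_V$, i.e.\ exactly the generic-smoothness statement you quote as your primary argument, and this is precisely what the paper does via \cite[Prop.~B.2]{din2017nearly}.
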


\begin{proof}
    If $V(\f)$ is empty, it generates the ideal \(\langle 1\rangle\),
    as does $\f(y,\cdot)$ for all $y\in\complex^t$,
    and the lemma holds. We assume now
    \(V(\f)\neq\emptyset\).

    If $\pi_{\y}(V(\f))$ is contained in a proper Zariski closed subset
    of $\C^t$, we take $\mathcal{Y}$ to be the complement of this subset.
    Then, for all $y\in\mathcal{Y}$, $V(\f(y,\cdot))$ is empty,
    so the lemma also holds trivially.

    Hence, assume that $\pi_{\y}(V(\f))$ is Zariski dense
    in $\C^t$.
    By the Theorem on the Dimension of Fibres
    \cite[Th. 1.25]{shafarevich1994basic},
    there exists a non-empty Zariski open set
    $\mathcal{Y}^\prime$ of $\C^t$ such that for all
    $y\in\mathcal{Y}^\prime$,
    $\pi_{\y}^{-1}(y)\cap V(\f)$
    is an algebraic set of dimension $d^\prime$.

    Since \(\f\) satisfies $\propreg(d)$,
    \(\crit(\pi_{\y},V(\f))\)
    is defined by the vanishing of  \(\f\) and the
    $(n-d^\prime)$-minors of  \(\jac(\f,
    \pi_{\y})\) \cite[Lem. A.2]{din2017nearly}.
    By Sard's theorem \cite[Prop. B.2]{din2017nearly},
    $\pi_{\y}(\crit(\pi_{\y},V(\f)))$ is contained in a proper
    Zariski closed subset $Z \subset \C^\nparam$.
    Let $\mathcal{Y} = (\C^\nparam \setminus Z) \cap \mathcal{Y}^\prime$.
    Then, for all $y\in\mathcal{Y}$,
    $x\in\pi_{\x}(\pi_{\y}^{-1}(y)\cap V(\f))$,
    $\jac(\f)(y,x)$ has rank $n-d^\prime$.
    By the Jacobian criterion \cite[Th. 16.19]{eisenbud2013commutative},
    $\f(y,\cdot)$ satisfies
    $\propreg{(d^\prime)}$.
\end{proof}

\subsection{Determinantal and incidence varieties}

Let $A\in\matsym_{m}(\linpoly{\pring})$ be a parametric
linear matrix, and $r\in\interval{0}{m-1}$.
The determinantal variety $\svardet{r}\subset\C^{\nparam+n}$ of $A$
of rank $r$ is defined as
\begin{equation*}
    \mathcal{D}_{r} := \lbrace (y,x)\in\C^{\nparam+n}\mid\rank(A(y,x))\leq r\rbrace.
\end{equation*}

A desingularization of $\mathcal{D}_{r}$ in the sense of Room-Kempf
\cite{bank2010geometry} is given as follows:
let $U=(u_{i,j})_{i\in\interval{1}{m},j\in\interval{1}{m-r}}$
be a matrix of variables.
For all $\iota\subset\interval{1}{m}$ with $\card{\iota}=m-r$,
define the incidence variety
$\mathcal{V}_{r,\iota}\subset\C^{t+n+m(m-r)}$
of $A$ as
\begin{equation*}
    \mathcal{V}_{r,\iota} := \lbrace (y,x,u)\in\C^{t+n+m(m-r)}\mid A(y,x)\cdot U = 0, U_{\iota} = \Id_{m-r}\rbrace,
\end{equation*}
where $U_{\iota}$ is formed by the rows of $U$ indexed by $\iota$.

Following \cite[Lem. 3.2]{henrion2016exact}, we construct a defining
system $\f_{r,\iota}\subset\pring[\bu]$
for $\mathcal{V}_{r,\iota}$ as follows:
let $G=(g_{i,j})_{i\in\interval{1}{m},j\in\interval{1}{m-r}}$
the matrix obtained by substituting
the entries of
$U_{\iota}=\Id_{m-r}$ into $A\cdot U$,
so that $G$ is independent of $u_{i,j}$'s for $i\in\iota$.
Then, $\f_{r,\iota}$ is defined by the polynomials in $(g_{i,j})_{i\geq j}$
and $U_{\iota}-\Id_{m-r}$.

\paragraph{Properties of determinantal and incidence varieties}

The dimension of $\mathcal{D}_{r}$ and its fibres $\mathcal{D}_{r}(y,\cdot)$
are classical results.

\begin{theorem}[{\cite[Prop. 3.2]{anjos2011handbook}}]
    \label{thm:detdim}
    For all $d\in\N$, there exists a non-empty Zariski open set
    $\mathcal{A}_d\subset\matsym_{m}(\linpoly{\C[\y]_{\leq d}[\x]})$
    such that for all
    $A\in\mathcal{A}_d\cap\matsym_{m}(\linpoly{\Q[\y]_{\leq d}[\x]})$
    and
    $r\in\interval{0}{m-1}$,
    \begin{enumerate}
        \item $\mathcal{D}_{r}$ is empty or of dimension $t+n-\binom{m-r+1}{2}$;
        \item In the case where $t\geq 1$, there exists a non-empty Zariski
              open set $\mathcal{Y}_{A}\subset\C^t$ such that for all
              $y\in\mathcal{Y}_{A}\cap\Q^t$,
              $\mathcal{D}_{r}(y,\cdot)$
              is either empty or of dimension $n-\binom{m-r+1}{2}$.
    \end{enumerate}
\end{theorem}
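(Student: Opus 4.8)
The plan is to reduce both assertions to the classical geometry of the symmetric determinantal variety, using a universal (parameter-space) incidence variety together with the theorem on the dimension of fibres \cite[Th.~1.25]{shafarevich1994basic}. First I would record the classical fact that $\Sigma_{r} := \{M \in \matsym_{m}(\C) \mid \rank M \leq r\}$ is an \emph{irreducible} algebraic subset of $\matsym_{m}(\C) \cong \C^{\binom{m+1}{2}}$ of codimension $\binom{m-r+1}{2}$: irreducibility follows from $\Sigma_{r}$ being the image of $\Mat_{m \times r}(\C)$ under $L \mapsto L L^{\mathsf{T}}$, and the codimension is the standard count $\binom{m+1}{2} - \bigl(mr - \binom{r}{2}\bigr) = \binom{m-r+1}{2}$.

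Next I would introduce the finite-dimensional $\C$-vector space $\mathcal{A} := \matsym_{m}(\linpoly{\C[\y]_{\leq d}[\x]})$ of parametric linear matrices and the incidence variety
\[
\mathcal{I}_{r} \ :=\ \bigl\{\, (A,y,x) \in \mathcal{A} \times \C^{t+n} \ \big|\ \rank A(y,x) \leq r \,\bigr\}.
\]
Projecting $\mathcal{I}_{r}$ onto $\C^{t+n}$, the fibre over a fixed $(y_{0},x_{0})$ is the preimage of $\Sigma_{r}$ under the evaluation map $A \mapsto A(y_{0},x_{0})$; since $\mathcal{A}$ contains all constant matrices, this map is $\C$-linear and surjective onto $\matsym_{m}(\C)$, so the fibre is isomorphic to a product of $\Sigma_{r}$ with a linear subspace of $\mathcal{A}$, hence irreducible of dimension $\dim \mathcal{A} - \binom{m-r+1}{2}$. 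Since the base $\C^{t+n}$ is irreducible and all these fibres are irreducible of the same dimension, $\mathcal{I}_{r}$ is irreducible of dimension $\dim \mathcal{A} + t + n - \binom{m-r+1}{2}$.

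For item (1) I would apply the theorem on the dimension of fibres to the projection $p : \mathcal{I}_{r} \to \mathcal{A}$: if $p$ is dominant it produces a non-empty Zariski open $\mathcal{A}_{d}^{(r)} \subseteq \mathcal{A}$, contained in $p(\mathcal{I}_{r})$, over which every fibre — which is exactly $\mathcal{D}_{r}$ for the corresponding $A$ — has dimension $\dim \mathcal{I}_{r} - \dim \mathcal{A} = t+n-\binom{m-r+1}{2}$; if $p$ is not dominant I would instead set $\mathcal{A}_{d}^{(r)} := \mathcal{A} \setminus \overline{p(\mathcal{I}_{r})}$, so that $\mathcal{D}_{r} = \emptyset$ for $A \in \mathcal{A}_{d}^{(r)}$. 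For item (2), assuming $t \geq 1$, I would run the same argument on the projection $q : \mathcal{I}_{r} \to \mathcal{A} \times \C^{t}$, $(A,y,x) \mapsto (A,y)$, whose fibre over $(A,y_{0})$ is exactly $\mathcal{D}_{r}(y_{0},\cdot)$: in the dominant case one gets a non-empty open $W \subseteq \mathcal{A} \times \C^{t}$ over which all fibres have dimension $\dim \mathcal{I}_{r} - \dim(\mathcal{A} \times \C^{t}) = n-\binom{m-r+1}{2}$, and since $W$ is dense its image under the first projection contains a non-empty open subset of $\mathcal{A}$, so after intersecting $\mathcal{A}_{d}^{(r)}$ with that subset one may take $\mathcal{Y}_{A} := \{\, y \mid (A,y) \in W \,\}$, which is non-empty and Zariski open in $\C^{t}$; in the non-dominant case, with $q(\mathcal{I}_{r}) \subseteq Z$ for some proper closed $Z \subsetneq \mathcal{A} \times \C^{t}$, applying the theorem on the dimension of fibres to the irreducible components of $Z$ that dominate $\mathcal{A}$ (and discarding the others) shows that for $A$ in a non-empty Zariski open subset the slice $Z_{A} := \{\, y \mid (A,y) \in Z \,\}$ is a proper closed subset of $\C^{t}$, and one takes $\mathcal{Y}_{A}$ to be its complement, on which $\mathcal{D}_{r}(y,\cdot) = \emptyset$. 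Finally I would set $\mathcal{A}_{d} := \bigcap_{r=0}^{m-1} \mathcal{A}_{d}^{(r)}$, a non-empty Zariski open subset defined over $\Q$, so that it meets $\matsym_{m}(\linpoly{\Q[\y]_{\leq d}[\x]})$ and each associated $\mathcal{Y}_{A}$ meets $\Q^{t}$.

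The step I expect to be the main obstacle is handling the dominant / non-dominant dichotomy for $p$ and $q$ correctly: the theorem on the dimension of fibres delivers the claimed dimension only for dominant maps, so in the non-dominant case one must argue that the generic $A$ really has $\mathcal{D}_{r}$ (respectively all fibres $\mathcal{D}_{r}(y,\cdot)$) \emph{empty} rather than of some unexpected intermediate dimension, which is precisely why $\mathcal{A}_{d}^{(r)}$ and $\mathcal{Y}_{A}$ are taken to be complements of the relevant image closures. A secondary point to verify is the surjectivity of the evaluation maps $A \mapsto A(y_{0},x_{0})$ underpinning the irreducibility and dimension of $\mathcal{I}_{r}$, which holds because all constant symmetric matrices already lie in $\mathcal{A}$.
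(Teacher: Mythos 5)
The paper does not actually prove \Cref{thm:detdim}: it quotes it from \cite[Prop.~3.2]{anjos2011handbook}, so there is no in-paper argument to compare against line by line. Your strategy is a reasonable self-contained substitute, and its second half (producing $\mathcal{Y}_A$ by applying the theorem on the dimension of fibres to $q:\mathcal{I}_r\to\mathcal{A}\times\C^t$, with the dominant/non-dominant dichotomy handled via complements of image closures) is exactly the mechanism the paper itself uses in \Cref{lem:equidim} and in the proof of \Cref{lem:incdim}(2). Note also that the paper's other tool, Thom's weak transversality as in the proof of \Cref{lem:incdim}(1), is not directly available for $\mathcal{D}_r$ because $\Sigma_r$ is singular along $\Sigma_{r-1}$; your incidence variety over the coefficient space $\mathcal{A}$ is the natural way around this, and the dimension bookkeeping (codimension $\binom{m-r+1}{2}$ of $\Sigma_r$, the fibres of $\mathcal{I}_r\to\C^{t+n}$, the two projections $p$ and $q$, density of $\Q$-points in nonempty Zariski opens) is correct.

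The one step that does not hold as you state it is the irreducibility of $\mathcal{I}_r$. The principle ``irreducible base plus irreducible fibres of constant dimension implies irreducible total space'' requires properness (it is the projective statement \cite[Th.~1.26]{shafarevich1994basic}); for non-proper maps it is false: take $X=V(sb-1)\cup\{(0,0)\}\subset\C^2$ projecting onto the $s$-line --- every fibre is a single point, the base is irreducible, yet $X$ is reducible. Since the fibre-dimension theorem you invoke for $p$ and $q$ is stated for irreducible varieties, you genuinely need this irreducibility (or else a component-by-component analysis), so as written there is a gap. It is easily repaired: for instance, $\mathcal{I}_r$ is the image of the morphism $\Mat_{m\times r}(\C)\times\mathcal{A}\times\C^{t+n}\to\mathcal{A}\times\C^{t+n}$ sending $(L,A,y,x)$ to $\bigl(A-\mathrm{const}(A(y,x))+\mathrm{const}(LL^{\mathsf{T}}),\,y,\,x\bigr)$, where $\mathrm{const}(M)$ denotes the constant parametric matrix with value $M$; the source is irreducible, the image is exactly $\mathcal{I}_r$ (over $\C$ every symmetric matrix of rank $\le r$ is $LL^{\mathsf{T}}$), and $\mathcal{I}_r$ is closed, hence irreducible. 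Equivalently, translating the constant coefficient gives a global isomorphism $\mathcal{I}_r\cong\Sigma_r\times K$ with $K=\{(B,y,x)\mid B(y,x)=0\}$ a kernel sub-bundle, Zariski-locally trivial over $\C^{t+n}$. With that repair, the remainder of your argument goes through and yields the statement (indeed with equidimensional fibres, which is slightly stronger than claimed).
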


The dimension of $\mathcal{V}_{r,\iota}$ and its fibres
$\mathcal{V}_{r,\iota}(y,\cdot)$, as well as their regularity,
are given by the following lemma, which is
a generalization of
\cite[Prop. 3.4]{henrion2016exact}.
Its proof is given in \Cref{sec:proof}.

\begin{lemma}
    \label{lem:incdim}
    For all $d\in\N$, there exists a non-empty Zariski open set
    $\mathcal{A}_d\subset\matsym_{m}(\linpoly{\C[\y]_{\leq d}[\x]})$
    such that for all
    $A\in\mathcal{A}_d\cap\matsym_{m}(\linpoly{\Q[\y]_{\leq d}[\x]})$,
    $r\in\interval{0}{m-1}$, $\iota\subset\interval{1}{m}$
    with $\card{\iota}=m-r$,
    \begin{enumerate}[left=0pt]
        \item $\f_{r,\iota}$ satisfies
              $\proptwo{t+n-\binom{m-r+1}{2}}$;
        \item when $t\geq 1$, there exists a non-empty Zariski
              open set $\mathcal{Y}_{A}\subset\C^t$ such that for all
              $y\in\mathcal{Y}_{A}\cap\Q^t$,
              $\f_{r,\iota}(y,\cdot)$ satisfies
              $\proptwo{n-\binom{m-r+1}{2}}$.
    \end{enumerate}
\end{lemma}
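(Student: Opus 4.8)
The approach is to reduce \Cref{lem:incdim} to a combination of the classical theory of Room–Kempf desingularizations (which already gives the non-parametric case of the statement, i.e.\ \cite[Proposition 3.4]{henrion2016exact} applied with $t=0$) and \Cref{lem:equidim}, which transfers a $\propreg(\nparam+d)$ property down to a generic specialization $\propreg(d)$. The two items of the lemma are exactly the ambient statement and its generic-fibre consequence, so it is natural to prove item~(1) first and then derive item~(2) from item~(1) by invoking \Cref{lem:equidim} with $d$ replaced by $n-\binom{m-r+1}{2}$.

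First I would establish item~(1). The idea is to think of the parametric linear matrix $A\in\matsym_m(\linpoly{\pring})$ with $\pring=\Q[\y][\x]$ as an \emph{ordinary} (non-parametric) linear matrix in the $\nparam+n$ variables $(\y,\x)$: writing $A = A_0 + x_1A_1+\dots+x_nA_n$ where the $A_i$ have entries in $\Q[\y]$, one has $A_i = \sum_{k} \param_k A_{i,k} + (\text{constant part})$, so $A$ is an affine-linear matrix pencil in the combined variable set whose entries are polynomials of degree $\le d$. The incidence variety $\mathcal{V}_{r,\iota}\subset\C^{t+n+m(m-r)}$ and its defining system $\f_{r,\iota}$ are then precisely the Room–Kempf incidence variety and its defining system for this ordinary linear matrix, sitting in $t+n$ "base" variables. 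So, applying the non-parametric result \cite[Proposition 3.4]{henrion2016exact} (the statement of \Cref{lem:incdim} with the parameter count set to zero and the total variable count $n$ replaced by $n+t$) directly yields: there is a non-empty Zariski open $\mathcal{A}_d$ of coefficient matrices such that for $A\in\mathcal{A}_d$ the system $\f_{r,\iota}$ generates a radical ideal defining a smooth equidimensional variety of dimension $(t+n)-\binom{m-r+1}{2}$, i.e.\ $\f_{r,\iota}$ satisfies $\propreg\big(t+n-\binom{m-r+1}{2}\big)$, which is item~(1). One technical point to check here is that the genericity open set produced by the non-parametric proposition, which a priori lives in the space of all affine-linear matrix pencils in $(\y,\x)$, pulls back to a non-empty Zariski open subset of $\matsym_m(\linpoly{\C[\y]_{\le d}[\x]})$ under the inclusion (of affine spaces) that only allows coefficients of the $x_i$'s to depend on $\y$ while the rest of the pencil structure is constrained; non-emptiness of this pullback is what one must argue, and it follows because the inclusion is a linear embedding and a non-empty Zariski open set meets any linear subspace that is not contained in its complement — one exhibits a single matrix $A$ (e.g.\ a sufficiently generic one with $\Q$-coefficients, as in \cite{henrion2016exact}) in the intersection.

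Then item~(2) is a direct application of \Cref{lem:equidim}: by item~(1), $\f_{r,\iota}\subset\Q[\y,\x,\bu]$ (viewing $\bu$ as part of the "$\x$"-variables of that lemma, so the lemma's ambient dimension is $n+m(m-r)$ and its parameter count is $\nparam$) satisfies $\propreg\big(\nparam + (n-\binom{m-r+1}{2})\big)$, hence there is a non-empty Zariski open $\mathcal{Y}_A\subset\C^\nparam$ such that for all $y\in\mathcal{Y}_A$ the specialization $\f_{r,\iota}(y,\cdot)$ satisfies $\propreg\big(n-\binom{m-r+1}{2}\big)$, which is exactly $\proptwo{n-\binom{m-r+1}{2}}$ together with radicality; intersecting $\mathcal{Y}_A$ with $\Q^\nparam$ (still non-empty, since a non-empty Zariski open subset of $\C^\nparam$ contains rational points) gives the stated conclusion. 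The hypothesis $t\ge 1$ is only needed so that the statement about fibres is non-vacuous.

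The main obstacle I anticipate is item~(1), and specifically making rigorous the claim that the Room–Kempf/Bank–Giusti–Heintz–Mbakop-type genericity argument of \cite[Proposition 3.4]{henrion2016exact} survives the restriction to pencils whose "base" variables $\y$ enter only through the coefficients of the $x_i$'s: one must verify that the relevant incidence correspondence used in that proof (the one fibered over the space of matrices, whose generic fibre is smooth equidimensional) still dominates, or at least meets, the constrained locus of matrices we care about. Concretely this amounts to checking that the partial derivatives of the entries of $A\cdot U$ with respect to the $\bu$-variables and the $\x$-variables already suffice to make the Jacobian of $\f_{r,\iota}$ have maximal rank at a generic point — the $\y$-derivatives, which are the ones "lost" by the constraint, are not needed — and this is essentially the content of the original proof since in \cite{henrion2016exact} the base variables play no distinguished role in the smoothness computation. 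Once this is granted, everything else is bookkeeping about dimensions $\binom{m-r+1}{2}$ and routine appeals to \Cref{lem:equidim}.
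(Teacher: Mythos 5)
Your item (2) is exactly the paper's argument (apply \Cref{lem:equidim} to $\f_{r,\iota}$, with $\bu$ absorbed into the block of eliminated variables), so that part is fine. The gap is in item (1): the reduction to the non-parametric \cite[Proposition 3.4]{henrion2016exact} does not go through as stated. A parametric linear matrix $A=A_0+x_1A_1+\cdots+x_nA_n$ with $A_i\in\matsym_m(\Q[\y]_{\leq d})$ is \emph{not} an affine-linear matrix in the joint variables $(\y,\x)$: its entries contain monomials $y^k x_i$ with $\abs{k}\leq d$, hence have degree up to $d+1$ and are genuinely non-linear as soon as $d\geq 1$. So the cited proposition, which concerns linear matrices, cannot be applied to ``$A$ viewed as an ordinary pencil in $n+t$ variables,'' and your sentence asserting that $A$ ``is an affine-linear matrix pencil whose entries are polynomials of degree $\leq d$'' is self-contradictory.

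Even if one had a variant of that result for polynomial matrices in $(\y,\x)$, its genericity locus would be a non-empty Zariski open subset of a strictly larger coefficient space, and $\matsym_{m}(\linpoly{\C[\y]_{\leq d}[\x]})$ sits inside it as a proper linear subspace; a non-empty open set can miss a proper subspace entirely, and your justification (``exhibit a single matrix in the intersection'') is precisely the statement that needs proof, not an argument for it. The way to close this — and what the paper actually does — is to run the transversality argument directly over the constrained coefficient space: write $A=\sum_{k,\ell}A_{k,\ell}\,y^k x^\ell$, regard the coefficients as a point of $\matsym_m^{(n+1)n_d}(\C)$ (with $n_d=\dim\Q[\y]_{\leq d}$), consider the map $\varphi(y,x,u,A)=\f_{r,\iota}$ on $\C^{t+n+m(m-r)}\times\matsym_m^{(n+1)n_d}(\C)$, and verify that $0$ is a regular value of $\varphi$ using only the derivatives with respect to the entries $a_{0,0,i,j}$ of the constant coefficient matrix (with $i$ or $j$ in $\interval{1}{m-r}$) and with respect to the $u_{i,j}$ for $i\in\iota$. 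Thom's weak transversality theorem then produces the open set $\mathcal{A}_d$ inside the constrained family, and the Jacobian criterion yields radicality, smoothness and the dimension $t+n-\binom{m-r+1}{2}$. Your closing remark that the $\y$-derivatives are ``not needed'' points in the right direction, but the relevant fact is that the constant-coefficient directions (which do survive the constraint) already give full rank of the Jacobian of $\varphi$ along $\varphi^{-1}(0)$; that is a computation one must actually perform on the family, not a formal consequence of the non-parametric statement.
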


\section{Algorithms for LMIs}

\subsection{Principles}

Let $A\in\matsym_{m}(\sring_{\leq 1})$.
Algorithm from \cite[Sec. 3.3]{henrion2016exact} allows to decide whether
the LMI \(A\succeq 0\) is feasible.
For the ease of the reader, we recall
in the following
\cite[Th. 1.2]{henrion2016exact},
which leverages the relation between the boundary structure of
$\mathcal{S}:=\spec(A)$
and the rank stratification of $A$:

\textit{Let $A\in\matsym_{m}(\linpoly{\sring})$ be a linear matrix,
$r_{\min}=\min_{x\in\mathcal{S}}\rank(A(x))$,
and suppose that $\partial\mathcal{S}\neq\emptyset$.
Let $\mathcal{C}$ be a connected component of
$\mathcal{D}_{r_{\min}}\cap\reals^n$
such that $\mathcal{C}\cap\mathcal{S}\neq\emptyset$.
Then $\mathcal{C}\subset\mathcal{S}$.}

Consequently, solving the LMI
$A\succeq 0$ boils down to:
\begin{enumerate}[left=0pt]
    \item Pick $x\in\Q^n$,
          and decide if $x\in\mathcal{S}\setminus\partial\mathcal{S}$.
          If yes, return \texttt{true};
    \item Otherwise, $\mathcal{S}\neq\reals^n$,
          so that $\mathcal{S}\neq\emptyset$ if and only if
          $\partial\mathcal{S}\neq\emptyset$.
          We compute at least one point per connected component of
          $\mathcal{D}_{r}\cap\reals^n$,
          for all $r\in\interval{0}{m-1}$,
          and then decide if the points intersect $\mathcal{S}$.
          If yes, return \texttt{true}, otherwise return \texttt{false}.
\end{enumerate}

\subsection{Critical point methods}

For all $i\in\interval{1}{n}$,
we denote by $\pi_{i}:\C^{n}\rightarrow\C^{i}$ the projection
$x\mapsto x_{[1,i]}:=(x_1,\ldots,x_i)$,
and by abuse of notation $\pi_{0}:x\mapsto\bullet$.
We start by recalling \cite[Th. 2]{safeyschost2003}.
Let $\f\subset\sring$ and \(\Variety =
V(\f)\).

\textit{
    Assume that \(\f\) satisfies $\propreg{(d)}$.
    There exists a non-empty Zariski open set
    $\mathcal{M}\subset\GL_n(\C)$ such that for all
    $M\in\mathcal{M}\cap\GL_n(\Q)$,\\
    \textrm{(1)} The restriction of $\pi_{i-1}$ to
    $\crit(\pi_i,\mathcal{V}^M)$ is proper;\\
    \textrm{(2)} For all $i\in\interval{1}{d+1}$,
    $\crit(\pi_i,\mathcal{V}^M)$ satisfies \(\proptwo{i-1}\);
    \\
    \textrm{(3)} For all $\tau=(\tau_1,\ldots,\tau_d)\in\reals^d$,
    the set
    \begin{equation*}
        \bigcup_{i=1}^{d+1}(\crit(\pi_i,\mathcal{V}^M)\cap\pi_{i-1}^{-1}(\tau_{[1,i-1]}))
    \end{equation*}
    is finite and meets all the connected components of
    $\mathcal{V}^M\cap\reals^n$.
}
{}

We use
\cite[Sec. 5.2]{elliott2023bit} to derive Lagrange polynomial systems
defining $\crit(\pi_i,V(\f^{M}))$.

\begin{lemma}[{\cite[Sec. 5.2]{elliott2023bit}}]
    \label{lem:lagrange}
    Assume that $\f$ satisfies $\propreg{(d)}$.
    There exists a non-empty Zariski open set
    $\mathcal{M}\subset\GL_n(\C)$
    such that for all
    $M\in\mathcal{M}\cap\GL_n(\Q)$,
    it holds that
    \begin{enumerate}[left=0pt]
        \setcounter{enumi}{3}
        \item There exists a non-empty Zariski open set
              $\mathcal{T}_{M}\subset\C^n$ such that for all
              $\tau\in\mathcal{T}_{M}\cap\Q^n$,
              $i\in\interval{1}{d+1}$,
              \begin{equation*}
                  \crit(\pi_i,V(\f^{M}))\cap\pi_{i-1}^{-1}(\tau_{[1,i-1]})
                  = \pi_{\x}(\Lag(\f^{M}(\tau_{[1,i-1]},\cdot))),
              \end{equation*}
              where
              $\Lag(\f)=\f\cup(\blambda^T\jac(\f)-(1,0,\ldots,0))$
              is the Lagrange system associated with $\f$,
              with $\blambda$ a vector of variables.
    \end{enumerate}
\end{lemma}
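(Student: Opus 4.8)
The plan is to prove the identity one index $i$ at a time, in two moves: first reduce $\crit(\pi_i,V(\f^M))$ restricted to the fibre $\pi_{i-1}^{-1}(\tau_{[1,i-1]})$ to the critical locus of the single coordinate function $x_i$ on that fibre, and then identify that critical locus with the projection of the Lagrange system via the usual Lagrange-multiplier description. Since for every $M\in\GL_n(\Q)$ the substitution $p\mapsto p^M$ is a ring automorphism of $\Q[\x]$ and $V(\f^M)=V(\f)^M$ is an invertible linear image of $V(\f)$, the system $\f^M$ again satisfies $\propreg(d)$; so $\mathcal{M}$ may be taken to be all of $\GL_n(\C)$ (it is kept in the statement only to continue items (1)--(3) recalled above). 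With $M$ fixed, for each $i\in\interval{1}{d+1}$ I would regard $\f^M$ as a family of polynomials in $\Q[x_1,\dots,x_{i-1}][x_i,\dots,x_n]$ with $x_1,\dots,x_{i-1}$ as parameters; since $\propreg(d)=\propreg((i-1)+(d-i+1))$, \Cref{lem:equidim} yields a non-empty Zariski open $\mathcal{Y}_i\subset\C^{i-1}$ such that $\f^M(\tau_{[1,i-1]},\cdot)$ satisfies $\propreg(d-i+1)$ whenever $\tau_{[1,i-1]}\in\mathcal{Y}_i$ (for $i=1$ this is vacuous). I would then set $\mathcal{T}_M:=\bigcap_{i=1}^{d+1}\pi_{i-1}^{-1}(\mathcal{Y}_i)$, a non-empty Zariski open subset of $\C^n$ that is good for every $i$ at once.

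For the reduction, fix $\tau\in\mathcal{T}_M\cap\Q^n$ and $i\in\interval{1}{d+1}$; write $\mathcal{V}=V(\f^M)$, $\mathcal{W}=\mathcal{V}\cap\pi_{i-1}^{-1}(\tau_{[1,i-1]})=V(\f^M(\tau_{[1,i-1]},\cdot))$, and $H=\{v\in\C^n\mid v_1=\dots=v_{i-1}=0\}$. By the choice of $\mathcal{T}_M$, $\mathcal{W}$ is smooth and equidimensional of dimension $d-i+1$, and by the Jacobian criterion \cite[Th. 16.19]{eisenbud2013commutative} its conormal space at any $x\in\mathcal{W}$ is $\mathrm{rowspan}\,\jac(\f^M(\tau_{[1,i-1]},\cdot))(x)$, of dimension $n-d$. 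Comparing dimensions, $T_x\mathcal{W}\subseteq T_x\mathcal{V}\cap H$ while $\dim(T_x\mathcal{V}\cap H)\ge d-i+1=\dim T_x\mathcal{W}$, so in fact $T_x\mathcal{W}=T_x\mathcal{V}\cap H$ and $T_x\mathcal{V}+H=\C^n$; dually, $N_x\mathcal{V}:=\mathrm{rowspan}\,\jac(\f^M)(x)$ meets $\mathrm{span}(e_1,\dots,e_{i-1})$ only in $0$, and $N_x\mathcal{W}=N_x\mathcal{V}+\mathrm{span}(e_1,\dots,e_{i-1})$. Now $x\in\crit(\pi_i,\mathcal{V})$ iff $N_x\mathcal{V}\cap\mathrm{span}(e_1,\dots,e_i)\ne 0$, whereas $x$ is critical for $x_i$ on $\mathcal{W}$ iff $e_i\in N_x\mathcal{W}=N_x\mathcal{V}+\mathrm{span}(e_1,\dots,e_{i-1})$; an elementary argument using the transversality just established shows these two conditions are equivalent (any nonzero element of $N_x\mathcal{V}\cap\mathrm{span}(e_1,\dots,e_i)$ has a nonzero $e_i$-coefficient, and conversely). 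Hence $\crit(\pi_i,V(\f^M))\cap\pi_{i-1}^{-1}(\tau_{[1,i-1]})=\crit(x_i,\mathcal{W})$.

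It then remains to match $\crit(x_i,\mathcal{W})$ with $\pi_{\x}(\Lag(\f^M(\tau_{[1,i-1]},\cdot)))$. Since $\f^M(\tau_{[1,i-1]},\cdot)$ generates a radical ideal cutting out $\mathcal{W}$, a point $x\in\mathcal{W}$ is critical for the first free coordinate $x_i$ exactly when its differential $(1,0,\dots,0)\in\C^{n-i+1}$ lies in $\mathrm{rowspan}\,\jac(\f^M(\tau_{[1,i-1]},\cdot))(x)$ — equivalently, when there is a vector $\blambda$ with $\blambda^T\jac(\f^M(\tau_{[1,i-1]},\cdot))(x)=(1,0,\dots,0)$, i.e. $(x,\blambda)\in V(\Lag(\f^M(\tau_{[1,i-1]},\cdot)))$. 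Thus $\pi_{\x}(\Lag(\f^M(\tau_{[1,i-1]},\cdot)))=\crit(x_i,\mathcal{W})$, and combining with the previous step gives the asserted equality for every $\tau\in\mathcal{T}_M\cap\Q^n$ and every $i\in\interval{1}{d+1}$.

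I expect the main obstacle to be the linear-algebra step in the reduction — proving that $\crit(\pi_i,\mathcal{V})$ meets the fibre \emph{exactly}, not merely generically, in $\crit(x_i,\mathcal{W})$. This rests entirely on the transversality $N_x\mathcal{V}\cap\mathrm{span}(e_1,\dots,e_{i-1})=\{0\}$, which is precisely what $\propreg(d-i+1)$ for the fibre (hence the choice of $\mathcal{Y}_i$) delivers; the secondary care is the bookkeeping that ensures a single open set $\mathcal{T}_M$ handles all $i\in\interval{1}{d+1}$ simultaneously.
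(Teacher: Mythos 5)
The paper never proves this lemma internally — it is imported by citation from \cite[Sec.~5.2]{elliott2023bit} — so your argument can only be judged on its own merits, and on those it is essentially correct and follows the standard route behind that citation: reduce to the fibre, use generic fibre regularity, then apply the Lagrange-multiplier/conormal characterization. Your two structural choices are sound. First, observing that $\f^M$ satisfies $\propreg(d)$ for \emph{every} $M\in\GL_n(\Q)$ and that item (4) therefore needs no genericity in $M$ (only $\mathcal{T}_M$ depends on $M$) is correct; the genericity of $M$ in the surrounding statement is only needed for the properness/dimension items (1)–(3) of \cite[Th.~2]{safeyschost2003}. Second, invoking \Cref{lem:equidim} with $(x_1,\ldots,x_{i-1})$ as parameters to get $\propreg(d-i+1)$ for $\f^M(\tau_{[1,i-1]},\cdot)$ on a cylinder $\pi_{i-1}^{-1}(\mathcal{Y}_i)$, and intersecting over $i\in\interval{1}{d+1}$, is exactly the right way to get a single $\mathcal{T}_M$; note this also delivers the radicality of the specialized ideal, which you correctly need so that the row span of $\jac(\f^M(\tau_{[1,i-1]},\cdot))(x)$ is the full conormal of $\mathcal{W}$. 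One small slip in the write-up: the inference ``$T_x\mathcal{W}\subseteq T_x\mathcal{V}\cap H$ and $\dim(T_x\mathcal{V}\cap H)\ge\dim T_x\mathcal{W}$, so they are equal'' is not valid as stated (you would need the reverse dimension inequality); but the equality is in fact immediate, since under $\propreg(d)$ and $\propreg(d-i+1)$ both spaces equal $\ker\jac(\f^M)(x)\cap H$, and then the rank-$(n-d)$ property of the specialized Jacobian gives $\dim(T_x\mathcal{V}\cap H)=d-i+1$, hence the transversality $N_x\mathcal{V}\cap\mathrm{span}(e_1,\ldots,e_{i-1})=\{0\}$ that drives your pointwise equivalence between $\crit(\pi_i,V(\f^M))\cap\pi_{i-1}^{-1}(\tau_{[1,i-1]})$ and $\pi_{\x}(\Lag(\f^M(\tau_{[1,i-1]},\cdot)))$. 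With that sentence repaired, the proof is complete, including the boundary cases $i=1$ and $i=d+1$.
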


We denote by \algopolvar an algorithm which takes as input
polynomials \(\f = (f_1, \ldots, f_s)\) and returns the Lagrange systems
defined in \Cref{lem:lagrange}. Under the assumptions of \cite[Th.
    2]{safeyschost2003}, their solutions meet all connected
components of \(V(\f)\cap\reals^n\).

\subsection{Bridging the regularity gap}

Let $A\in\matsym_{m}(\linpoly{\sring})$
be a linear matrix.
As $\mathcal{D}_{r}$ is not smooth in general,
we cannot apply \algopolvar directly to its defining system, but
only to its desingularizations $\mathcal{V}_{r,\iota}$,
where $\iota\subset\interval{1}{m}$ with $\card{\iota}=m-r$.

To compute at least one point per connected component of
$\mathcal{D}_{r}\cap\reals^n$ outside $\mathcal{D}_{r-1}$,
\cite[Prop. 3.5, 3.6]{henrion2016exact} show that it suffices to compute
critical points on the incidence varieties $\mathcal{V}_{r,\iota}$,
or the points given by the Lagrange systems of $\f_{r,\iota}$.
We restate the result as follows:

\begin{proposition}[{\cite[Prop. 3.5 and 3.6]{henrion2016exact}}]
    \label{thm:det2inc}
    Assume that
    $\mathcal{V}_{r,\iota}$ satisfies $\proptwo{n-\binom{m-r+1}{2}}$.
    Then, there exists a non-empty Zariski open set
    $\mathcal{M}\subset\GL_n(\C)$
    such that for all $M\in\mathcal{M}\cap\GL_n(\Q)$,
    for any connected component
    $\mathcal{C}\subset\mathcal{D}_{r}\cap\reals^n$,
    the following holds:
    \begin{enumerate}[left=0pt,labelwidth=0pt,itemindent=0pt]
        \item For all $i\in\interval{1}{d}$,
              $\pi_{i}(\mathcal{C}^{M})$ is closed;
              furthermore,
              for all $\tau\in\reals$ lying on the boundary of
              $\pi_{1}(\mathcal{C}^{M})$,
              $\pi_{1}^{-1}(\tau)\cap\mathcal{C}^{M}$ is finite.
        \item Let $\tau$ in the boundary of
              $\pi_{1}(\mathcal{C}^{M})$.
              Then, for all $x\in\pi_{1}^{-1}(\tau)\cap\mathcal{C}^{M}$ such that
              $x\in\mathcal{D}_{r}^{M}\setminus\mathcal{D}_{r-1}^{M}$,
              there exists $u\in\reals^{m(m-r)}$,
              $\iota\subset\interval{1}{m}$ with $\card{\iota}=m-r$,
              such that $(x,u)\in\crit(\pi_{1},\mathcal{V}_{r,\iota}^{M})$.
        \item The saturation of the ideal generated by $\Lag(\f_{r,\iota}^{M})$
              by an ideal associated with $\mathcal{D}_{r-1}^{M}$
              is radical,
              and the locally closed set
              $\mathcal{E}_{r,\iota}^{M}:=V(\Lag(\f_{r,\iota}^{M}))\cap\pi_{\x}^{-1}(\mathcal{D}_{r}^{M}\setminus\mathcal{D}_{r-1}^{M})$
              satisfies $\proptwo{\binom{m-r}{2}}$.
        \item $\pi_{\x}(\mathcal{E}_{r,\iota}^{M})$ is either empty or finite.
        \item Let $(x,u)\in\crit(\pi_{1},\mathcal{V}_{r,\iota}^{M})$ such that
              $x\in\mathcal{D}_{r}^{M}\setminus\mathcal{D}_{r-1}^{M}$.
              Then, there exists $\lambda\in\reals^{m(m-r)+\binom{m-r+1}{2}}$
              such that $(x,u,\lambda)\in\mathcal{E}_{r,\iota}^{M}$.
    \end{enumerate}
\end{proposition}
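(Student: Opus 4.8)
This is a restatement of \cite[Prop. 3.5 and 3.6]{henrion2016exact} in the present notation (the non-parametric case $\sring=\Q[\x]$, with $\f_{r,\iota}$ the defining system of $\mathcal{V}_{r,\iota}$ introduced earlier), so the plan is to assemble the open set $\mathcal{M}$ from two sources and then check the five items, most of which amount to transporting information along the Room-Kempf desingularization $\pi_{\x}\colon\mathcal{V}_{r,\iota}\to\mathcal{D}_{r}$. First, since $\mathcal{V}_{r,\iota}$ is by hypothesis smooth and equidimensional of dimension $n-\binom{m-r+1}{2}$ and $\f_{r,\iota}$ generates the radical ideal defining it, $\f_{r,\iota}$ satisfies $\propreg(n-\binom{m-r+1}{2})$; hence \cite[Th. 2]{safeyschost2003} and \Cref{lem:lagrange} provide a dense open $\mathcal{M}_{1}\subset\GL_{n}(\C)$ over which the polar varieties $\crit(\pi_i,\mathcal{V}_{r,\iota}^{M})$ are equidimensional, proper under $\pi_{i-1}$, and defined by the Lagrange systems of $\f_{r,\iota}^{M}$. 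Second, intersect $\mathcal{M}_{1}$ with the open set of \cite[Prop. 3.5]{henrion2016exact}, which places the pencil $A$, hence each rank stratum $\mathcal{D}_r^{M}\setminus\mathcal{D}_{r-1}^{M}$, in a position making the projections $\pi_1,\dots,\pi_d$ generic enough; the result is the $\mathcal{M}$ in the statement.

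Items~1 and~2 are the geometric core, and for them I would reproduce the argument of \cite[Prop. 3.5]{henrion2016exact}. After the change of coordinates, the restriction of each $\pi_i$ to any connected component $\mathcal{C}^{M}$ of $\mathcal{D}_r^{M}\cap\reals^{n}$ is proper---this is the only step that genuinely uses that $A$ is an affine-linear pencil, being the point where one controls the behaviour of $\mathcal{D}_r$ at infinity---and since a proper continuous map to $\reals^{i}$ has closed image, $\pi_i(\mathcal{C}^{M})$ is closed. If $\tau$ lies on $\partial\pi_1(\mathcal{C}^{M})$, then $\tau$ is a critical value of the restriction of $\pi_1$ to $\mathcal{C}^{M}$, and the chosen position of $\mathcal{D}_r^{M}$ forces the fibre $\pi_1^{-1}(\tau)\cap\mathcal{C}^{M}$ to be finite, proving item~1. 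For item~2, a point $x$ of this fibre lying moreover in $\mathcal{D}_r^{M}\setminus\mathcal{D}_{r-1}^{M}$ is a smooth point of $\mathcal{D}_r^{M}$ and a critical point of the restriction of $\pi_1$ to it; choosing $\iota$ so that the $\iota$-rows of a basis of $\ker A(x)$ form an invertible $(m-r)\times(m-r)$ block, $\pi_{\x}$ restricts to an isomorphism from a neighbourhood of a suitable $(x,u)\in\mathcal{V}_{r,\iota}^{M}$ onto a neighbourhood of $x$, under which critical points of $\pi_1$ correspond; hence $(x,u)\in\crit(\pi_1,\mathcal{V}_{r,\iota}^{M})$.

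Items~3--5 are then the Lagrange-system bookkeeping of \cite[Prop. 3.6]{henrion2016exact}, carried out over the Zariski-open set $\pi_{\x}^{-1}(\mathcal{D}_r^{M}\setminus\mathcal{D}_{r-1}^{M})$ on which $\mathcal{V}_{r,\iota}^{M}$ is smooth. There the Lagrange system $\Lag(\f_{r,\iota}^{M})$ of \Cref{lem:lagrange} (taken with $i=1$, so with no value of $\pi_{i-1}$ to fix) cuts out $\crit(\pi_1,\mathcal{V}_{r,\iota}^{M})$ enlarged by the locus over $\mathcal{D}_{r-1}^{M}$; saturating the ideal $\langle\Lag(\f_{r,\iota}^{M})\rangle$ by an ideal of $\mathcal{D}_{r-1}^{M}$ removes that locus, and for $M$ in the open set of \cite[Prop. 3.6]{henrion2016exact} the resulting ideal is radical with equidimensional zero set $\mathcal{E}_{r,\iota}^{M}$ of the stated dimension, which comes out of the codimension count for $\f_{r,\iota}$ and for its Lagrange system performed there. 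Item~4, that $\pi_{\x}(\mathcal{E}_{r,\iota}^{M})$ is finite, follows from the same analysis, since under the local isomorphism points of $\mathcal{E}_{r,\iota}^{M}$ project to critical points of a generic linear projection of the smooth stratum $\mathcal{D}_r^{M}\setminus\mathcal{D}_{r-1}^{M}$, which form a finite set; and item~5 is the isomorphism of item~2 read in the reverse direction, the multiplier $\lambda$ being recovered by solving the linear system $\lambda^{T}\jac(\f_{r,\iota}^{M})(x,u)=(1,0,\dots,0)$, which is consistent precisely because $(x,u)$ is a critical point and $\jac(\f_{r,\iota}^{M})$ has locally maximal rank there.

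The step I expect to be the real obstacle is item~1: it is the only place where the linearity of $A$ must genuinely be used---rather than merely the smoothness and dimension of $\mathcal{V}_{r,\iota}$---and it rests on the properness and Noether-position analysis of connected components of real determinantal loci of linear pencils in \cite[Prop. 3.5]{henrion2016exact}, i.e., showing that such a component projects properly onto the first coordinates and that a boundary fibre inside it is finite. Items~2--5 are comparatively routine once item~1 and the desingularization are in hand, relying only on \Cref{lem:lagrange} and the standard theory of polar varieties.
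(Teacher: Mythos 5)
The paper does not actually reprove this proposition: it is imported verbatim from \cite[Prop.~3.5 and 3.6]{henrion2016exact}, with the lemmas of Section~5 pointing to \cite[Prop.~17, 18]{henrion2016real} and \cite[Lemma~4.2(i)]{henrion2016exact} as the underlying sources. A citation-level argument such as yours is therefore in line with what the paper does, and your treatment of items~2--5 (the Room--Kempf map being a local isomorphism over $\mathcal{D}_{r}^{M}\setminus\mathcal{D}_{r-1}^{M}$ for a suitable $\iota$, the Lagrange-system bookkeeping, saturation by an ideal of $\mathcal{D}_{r-1}^{M}$, and recovery of $\lambda$ from the rank condition at a critical point) matches the structure of the cited proofs.

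The genuine gap is exactly the step you single out as the crux, item~1. You derive closedness of $\pi_{i}(\mathcal{C}^{M})$ from the claim that, after a generic change of variables, the restriction of $\pi_{i}$ to a connected component $\mathcal{C}^{M}$ of $\mathcal{D}_{r}^{M}\cap\reals^{n}$ is proper. This is false in general: for $i<\dim\mathcal{C}^{M}$ the fibres $\pi_{i}^{-1}(\tau)\cap\mathcal{C}^{M}$ are typically positive-dimensional and unbounded (already for the determinantal hypersurface $\det A(x)=0$ with $i=1$), and no linear change of coordinates can make such a projection proper. The properness that is actually available is the weaker statement of \cite[Th.~2(1)]{safeyschost2003}: properness of $\pi_{i-1}$ restricted to the polar varieties $\crit(\pi_{i},\mathcal{V}_{r,\iota}^{M})$ --- note it requires smoothness, hence is applied to the incidence varieties, not to $\mathcal{D}_{r}$ itself. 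Closedness of $\pi_{i}(\mathcal{C}^{M})$ and finiteness of the boundary fibre are then deduced from this weaker properness by the limit/connectedness arguments of \cite[Prop.~17, 18]{henrion2016real}, transported to $\mathcal{D}_{r}$ through the desingularization. With that substitution your outline agrees with the source; as written, the justification of item~1 would fail.
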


We denote by \algorealdet an algorithm which takes as input
$A\in\matsym_{m}(\linpoly{\sring})$, generates for all $r\in\interval{0}{m-1}$
and $\iota\subset\interval{1}{m}$ with $\card{\iota}=m-r$ the
systems $\f_{r,\iota}$ describing the aforementioned incidence varieties and
returns $(\f_{r,\iota,i})_{i\in\interval{1}{d+1}}$, the output of \algopolvar,
when called on these systems with their
associated rank \(r\).
The algorithm therefore consists purely of ring operations over $\sring$.

The routine \algodescartes simply takes as input \(A\) and returns the
coefficients of  \(\det(A +\lambda I_m)\) in \(\sring\) (where  \(\lambda\) is a
new variable).

Finally, the algorithm \algosolvelmi takes as input \(A\in
\matsym_{m}(\linpoly{\sring})\) and calls \algorealdet and
\algodescartes, to decide the existence of solutions to the systems returned
by \algorealdet outside the appropriate determinantal varieties
\(\svardet{r-1}\) and which make non-negative the polynomials output from
\algodescartes.

\section{Algorithms for parametric LMIs}

\subsection{Principles}

We consider now a parametric linear matrix $A\in\matsym_{m}(\linpoly{\pring})$.
We have remarked that
all operations performed by \algorealdet are ring operations over
the coefficients of the entries of \(A\). Hence,
it can be
used with entries in \(\pring\).
Hence, it returns a set of polynomial systems $F$
in $\pring[\bu,\blambda]$, which, specialized at any parameter $y\in\Q^t$,
gives us the same result if we pass the linear matrix
$A(y,\cdot)\in\matsym_{m}(\linpoly{\sring})$
to the algorithm.
Furthermore, we have the expected properties
for generic choices of $y$, as stated in the following proposition.

\begin{proposition}
    \label{prop:goal}
    For all $d\in\N$,
    there exists a non-empty Zariski open set
    $\mathcal{A}_d\subset\matsym_{m}(\linpoly{\C[\y]_{\leq d}[\x]})$
    such that for all
    $A\in\mathcal{A}_d\cap\matsym_{m}(\linpoly{\Q[\y]_{\leq d}[\x]})$,
    there exists a non-empty Zariski open set
    $\mathcal{M}_{A}\subset\GL_{n}(\C)$
    such that for all
    $M\in\mathcal{M}_{A}\cap\GL_{n}(\Q)$,
    there exists a non-empty Zariski open set
    $\mathcal{T}_{A,M}\subset\C^n$
    such that for all
    $\tau\in\mathcal{T}_{A,M}\cap\Q^n$,
    there exists a non-empty Zariski open set
    $\mathcal{Y}_{A,M,\tau}\subset\C^t$
    such that for all $y\in\mathcal{Y}_{A,M,\tau}\cap\Q^t$,
    $r\in\interval{0}{m-1}$,
    $\iota\subset\interval{1}{m}$ with $\card{\iota}=m-r$,
    $\bigcup_{r,\iota,i}\pi_{\x}(V(\f_{r,\iota,i}^{M}(y,\cdot)))$
    is finite and
    contains at least one point per connected component of each
    $\mathcal{D}_{r}^{M}(y,\cdot)$.
\end{proposition}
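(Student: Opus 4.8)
The plan is to deduce \Cref{prop:goal} from the correctness of the non-parametric routine \algorealdet applied to the specialized linear matrix $A(y,\cdot)$, while carefully controlling the order in which the various genericity conditions are imposed. Two elementary observations make this reduction possible: every operation performed by \algorealdet is a ring operation on the coefficients of the entries of $A$, so running \algorealdet on the parametric $A$ and then specializing $\y \mapsto y$ returns exactly the systems $\f_{r,\iota,i}(y,\cdot)$ that it would return on input $A(y,\cdot)$; and the change of variables $M \in \GL_n(\Q)$ acts on the $\x$-block only and fixes $\y$, hence commutes with specialization, so that $\f_{r,\iota,i}^M(y,\cdot) = (\f_{r,\iota,i}(y,\cdot))^M$ and $\mathcal{D}_r^M(y,\cdot) = (\mathcal{D}_r(y,\cdot))^M$. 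Therefore it suffices to exhibit, for generic $A$, then generic $M$, then generic $\tau$, a nonempty Zariski open set of $y$ for which $A(y,\cdot)$, $M$ and $\tau$ jointly satisfy the hypotheses of \Cref{thm:det2inc}, \Cref{lem:lagrange} and \cite[Th. 2]{safeyschost2003}; for such a $y$, these three statements assemble --- exactly as in the non-parametric correctness argument for \algorealdet --- into the conclusion that $\bigcup_{r,\iota,i} \pi_{\x}(V(\f_{r,\iota,i}^M(y,\cdot)))$ is finite and meets every connected component of each $\mathcal{D}_r^M(y,\cdot)$.

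First I would take $\mathcal{A}_d$ to be (a nonempty Zariski open subset of) the intersection of the sets furnished by \Cref{thm:detdim} and \Cref{lem:incdim}. For $A \in \mathcal{A}_d$, each $\f_{r,\iota}$ satisfies $\propreg{(t+n-\binom{m-r+1}{2})}$ as a system in $\Q[\y,\x,\bu]$ (the generic radicality of $\langle \f_{r,\iota}\rangle$ being part of what \Cref{lem:incdim} provides), and there is a nonempty Zariski open $\mathcal{Y}_A^{(0)} \subset \C^t$ such that for $y \in \mathcal{Y}_A^{(0)}$ the specialization $\f_{r,\iota}(y,\cdot)$ satisfies $\propreg{(n-\binom{m-r+1}{2})}$ and $\mathcal{D}_r(y,\cdot)$ is empty or of dimension $n-\binom{m-r+1}{2}$; in particular $\mathcal{V}_{r,\iota}(y,\cdot)$ satisfies $\proptwo{n-\binom{m-r+1}{2}}$, so the hypothesis of \Cref{thm:det2inc} holds for $A(y,\cdot)$. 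For each such $y$, applying \cite[Th. 2]{safeyschost2003} and \Cref{lem:lagrange} to the systems $\f_{r,\iota}(y,\cdot)$ and \Cref{thm:det2inc} to $A(y,\cdot)$ produces a nonempty Zariski open $\mathcal{M}(y) \subset \GL_n(\C)$, and for $M \in \mathcal{M}(y)$ a nonempty Zariski open $\mathcal{T}(M,y) \subset \C^n$, inside which the respective conclusions hold. The constraints cutting out $\mathcal{M}(y)$ --- smoothness and equidimensionality of the critical loci of the $\pi_i$ on $\mathcal{V}_{r,\iota}^M(y,\cdot)$ and of the Lagrange varieties appearing in \Cref{thm:det2inc}, properness of the induced projections, nondegeneracy of $M$ --- and those cutting out $\mathcal{T}(M,y)$ are algebraic, hence Zariski constructible, conditions on $(M,y)$, resp. $(M,\tau,y)$, whose defining polynomials are built from the data of $A$ alone; consequently the bad loci $\{(M,y) : M \notin \mathcal{M}(y)\}$ and $\{(M,\tau,y) : M \in \mathcal{M}(y),\ \tau \notin \mathcal{T}(M,y)\}$ are constructible and contained in proper Zariski closed subsets of $\GL_n(\C) \times \C^t$ and of $\GL_n(\C) \times \C^n \times \C^t$.

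Next I would apply the Theorem on the Dimension of Fibres exactly as in the proof of \Cref{lem:equidim}: a proper constructible subset of a product $X \times Y$ of irreducible varieties has proper Zariski closed fibres over a nonempty Zariski open subset of $X$. Projecting the bad locus of $M$ onto $\GL_n(\C)$ yields a nonempty Zariski open $\mathcal{M}_A$ such that, for every $M \in \mathcal{M}_A$, the set $\{y : M \notin \mathcal{M}(y)\}$ is proper Zariski closed in $\C^t$; fixing $M \in \mathcal{M}_A$ and projecting the bad locus of $\tau$ onto the $\tau$-factor yields a nonempty Zariski open $\mathcal{T}_{A,M}$ such that, for $\tau \in \mathcal{T}_{A,M}$, the set $\{y : M \in \mathcal{M}(y),\ \tau \notin \mathcal{T}(M,y)\}$ is proper Zariski closed in $\C^t$. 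Finally one sets $\mathcal{Y}_{A,M,\tau}$ to be the nonempty Zariski open subset of $\mathcal{Y}_A^{(0)}$ of those $y$ with $M \in \mathcal{M}(y)$ and $\tau \in \mathcal{T}(M,y)$; for $y \in \mathcal{Y}_{A,M,\tau}$ all hypotheses of \Cref{thm:det2inc}, \Cref{lem:lagrange} and \cite[Th. 2]{safeyschost2003} hold for $A(y,\cdot)$ with the fixed $M$ and $\tau$, so the non-parametric guarantees for \algorealdet apply verbatim to $A(y,\cdot)$ and, by the commutation of change of variables with specialization noted above, yield the statement of \Cref{prop:goal}. The main obstacle is precisely this inversion of quantifiers --- the open sets $\mathcal{M}(y)$ and $\mathcal{T}(M,y)$ coming from the non-parametric theory genuinely depend on $y$, whereas $M$ and $\tau$ must be fixed before $y$ --- and overcoming it rests on verifying that the conditions defining those open sets are algebraic in all variables simultaneously, after which the fibre-dimension argument closes the gap; everything else is the already-established non-parametric theory together with bookkeeping on the nested open sets.
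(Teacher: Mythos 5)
Your reduction and bookkeeping are fine (the commutation of \algorealdet and of the change of variables $M$ with specialization, the choice of $\mathcal{A}_d$ from \Cref{thm:detdim} and \Cref{lem:incdim}, and the use of $\mathcal{Y}_A^{(0)}$ to guarantee the hypotheses of \Cref{thm:det2inc} for $A(y,\cdot)$ all match the paper). The problem is the step you yourself identify as the crux and then only assert: that the bad loci $\lbrace (M,y) : M\notin\mathcal{M}(y)\rbrace$ and $\lbrace (M,\tau,y) : \tau\notin\mathcal{T}(M,y)\rbrace$ are \emph{constructible} subsets of the corresponding products, cut out by polynomials depending algebraically on $(M,y)$ (resp.\ $(M,\tau,y)$). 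This does not follow from the statements of \cite[Th.~2]{safeyschost2003}, \Cref{lem:lagrange} or \Cref{thm:det2inc}: those results only guarantee, for each \emph{fixed} specialized input, the existence of some non-empty Zariski open set of good $M$ (resp.\ $\tau$), with no uniformity in $y$. Without uniformity, ``for every $y$ the fibre of the bad set is a proper closed subset'' does not imply containment in a proper closed subset of the product (the fibres could vary non-algebraically and cover everything), so your projection/fibre-dimension step has nothing to project. Establishing the needed uniformity would require reopening the proofs of those genericity statements — properness of the projections $\pi_{i-1}$ restricted to the critical loci, smoothness and equidimensionality of $\crit(\pi_i,\mathcal{V}_{r,\iota}^M(y,\cdot))$, radicality of the saturated Lagrange ideals, finiteness — and verifying that each bad condition is algebraic in $(M,y)$ jointly; that verification is precisely the content that is missing, and it is where the actual work of \Cref{sec:proof} lies.

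The paper circumvents this by a different mechanism: it observes that the proofs of the $M$-genericity statements in \cite{henrion2016exact,henrion2016real} are purely algebraic and hence remain valid over the base field $\overline{\C(\y)}$, so a generic $M\in\GL_n(\Q)$ can be chosen once and for all \emph{before} $y$ (as in \cite[Prop.~6]{le2021faster}); the subsequent choice of generic $y$ (and the fact that $\tau$ can also be fixed before $y$) is then handled by repeated use of \Cref{lem:equidim}, i.e.\ the Theorem on the Dimension of Fibres, applied to the partially specialized systems $\f_{r,\iota}(\cdot,\tau_{[1,i-1]},\cdot)$. In effect, working over $\overline{\C(\y)}$ is the rigorous substitute for the uniform-in-$y$ constructibility you postulate. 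Your architecture could be completed along these lines (either by proving the constructibility claims condition by condition, or by replacing them with the function-field argument), but as written the central quantifier inversion is assumed rather than proved.
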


The proof of the proposition is technical and is given in \Cref{sec:proof}.
Let us instead focus on the consequence of the proposition.
Intuitively, the polynomial systems $\f\in F$
are zero-dimensional in $\Q(\y)[\x,\bu,\blambda]$ for generic
choices of $A$.

\begin{figure}[H]
    \centering
    \begin{tikzpicture}[every node/.style={align=center}]
        \node (A) at (0, 0) {Parametric LMI};
        \node (B) at (6, 0) {0-dim systems in $\Q(\y)[\x,\bu,\blambda]$};
        \node (C) at (0, -1.5) {Determinantal \\ varieties};
        \node (D) at (6, -1.5) {Parametric sample points \\ in each connected component};

        \draw[->] (A) -- (B) node[midway, above] {Parametric \\ SolveLMI};
        \draw[->] (A) -- (C) node[midway, right] {};
        \draw[->] (C) -- (D) node[midway, above] {};
        \draw[<->] (D) -- (B) node[midway, right] {Parametrization};
    \end{tikzpicture}
\end{figure}

Our goal then becomes to determine semi-algebraic formulas on $y\in\reals^t$
such that $\f(y,\cdot)=0$ and $\g(y,\cdot)\geq 0$
has a finite number of solutions.
Assume that we have the following subroutine:
\begin{itemize}[left=0pt]
    \item
          \textsc{Classification}.
          Input:
          $\f=(f_1,\ldots,f_p)\subset\pring[\bu,\blambda]$
          such that $\dim(\langle\f\rangle_{\Q(\y)})=0$,
          $\g=(g_1,\ldots,g_s)\subset\pring$.
          Output:
          a list of tuples of the form $(\Phi_{i},y_{i},r_{i})$,
          with $\Phi_{i}$ a semi-algebraic formula in $\Q[\y]$
          defining $\mathcal{T}_{i}\subset\reals^t$,
          $y_{i}\in\mathcal{T}_{i}$, $r_{i}\in\N$, such that

          \begin{itemize}
              \item For all $y\in\mathcal{T}_i$,
                    $\f(y,\cdot)=0\land\g(y,\cdot)\geq 0$
                    has $r_{i}$ real solutions,
              \item The union of $\mathcal{T}_i$ is dense in $\reals^t$.
          \end{itemize}
\end{itemize}

Then, it suffices to take the disjunction of $\Phi_i$
for which $r_i>0$ to obtain the desired semi-algebraic formulas.
We deduce \Cref{alg:paramsolvelmi} for solving the generic
feasibility problem for parametric LMIs.

\begin{algorithm}
    \caption{\textsc{ParametricSolveLMI}}
    \label{alg:paramsolvelmi}
    \begin{algorithmic}[1]
        \Require $A\in\matsym_{m}(\linpoly{\pring})$
        \Ensure A semi-algebraic formula $\Phi$ in $\Q[\y]$ defining a dense subset of the set $\lbrace y\in\reals^t\mid A(y,\cdot)\succeq 0\rbrace$.
        \State \textbf{choose} $M\in\GL_{n}(\Q)$
        \State \textbf{choose} $\tau\in\Q^n$
        \State{$\F^{M} \gets
                \textsc{RealDet}(A,M,\tau)$}\label{alg:paramsolvelmi:F}
        \State $\g^{M} \gets \textsc{ChangeVars}(\textsc{PSDMatrixCond}(A),M)$
        \State $\Phi \gets \textsc{PSDMatrixCond}(A(\cdot,0))>0$
        \ForAll{$\f_{r,\iota,i}^{M}\in\F^{M}$}
        \State $\Phi_{r,\iota,i} \gets \textsc{Classification}(\f_{r,\iota,i}^{M},\g^{M})$
        \State $\Phi \gets \Phi\lor\bigvee_{j:r_{j}>0}\Phi_{r,\iota,i,j}$
        \EndFor
        \State \Return $\Phi$
    \end{algorithmic}
\end{algorithm}

\paragraph{Proof of correctness of \Cref{alg:paramsolvelmi}}

\begin{theorem}
    Let $A\in\matsym_{m}(\linpoly{\pring})$ be a
    parametric linear matrix. Assume that $A$ and all parameters chosen in
    \Cref{alg:paramsolvelmi} satisfy the assumptions of \Cref{prop:goal}.
    Then, \Cref{alg:paramsolvelmi} is correct.
\end{theorem}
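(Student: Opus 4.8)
The plan is to show that the formula $\Phi$ returned by \Cref{alg:paramsolvelmi} defines a dense subset of $\mathcal{P}=\{y\in\reals^t\mid A(y,\cdot)\succeq 0\}$. First I would recall that, by \Cref{prop:goal}, there is a non-empty Zariski open set $\mathcal{Y}^\star\subset\C^t$ (the intersection of the open sets provided by the proposition together with those from \Cref{thm:detdim}, \Cref{lem:incdim}, and \Cref{thm:det2inc}, all applied to the chosen $A$, $M$, $\tau$) such that for every $y\in\mathcal{Y}^\star\cap\Q^t$ the finite set $\bigcup_{r,\iota,i}\pi_{\x}(V(\f_{r,\iota,i}^M(y,\cdot)))$ meets every connected component of every $\mathcal{D}_r^M(y,\cdot)$. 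Since the complement of $\mathcal{Y}^\star$ is a proper Zariski closed subset of $\C^t$, it has empty interior in $\reals^t$, so it suffices to prove that for every $y\in\mathcal{Y}^\star\cap\reals^t$ we have $\Phi(y)$ true $\iff$ $y\in\mathcal{P}$; density then follows because $\mathcal{P}$ and $\{y:\Phi(y)\}$ agree on a dense set and $\mathcal{P}$ is itself (the closure of its) interior in the relevant sense, or more simply because the symmetric difference is contained in a proper Zariski closed set.

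Next I would fix such a generic $y$ and unwind the pointwise semantics of $\Phi$. The formula is $\Phi = \big(\textsc{PSDMatrixCond}(A(\cdot,0))>0\big)\ \lor\ \bigvee_{r,\iota,i}\bigvee_{j:r_j>0}\Phi_{r,\iota,i,j}$. For the first disjunct: $\textsc{PSDMatrixCond}(A(\cdot,0))>0$ at $y$ means all coefficients $g_i(y,0)$ of $\det(A(y,0)+\lambda I_m)$ are strictly positive, i.e. $A(y,0)\succ 0$, which certainly implies $0\in\spec(A(y,\cdot))$, hence $y\in\mathcal{P}$ — this handles the case where the spectrahedron has non-empty interior, mirroring step (1) of the \algosolvelmi strategy but with the explicit rational point $x=0$ rather than a random $\tau$. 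For the remaining disjuncts: by the specification of \textsc{Classification}, $\Phi_{r,\iota,i,j}$ holds at $y$ (with $r_j>0$) exactly when $\f_{r,\iota,i}^M(y,\cdot)=0\ \land\ \g^M(y,\cdot)\ge 0$ has a real solution. Such a solution is a point $(x,u,\lambda)$ with $x\in V(\f_{r,\iota,i}^M(y,\cdot))\subset\reals^n$ and $\g^M(y,x)\ge 0$; since $\g^M=\g^M$ is $\textsc{PSDMatrixCond}(A)$ composed with $M$, this says $A(y,Mx)\succeq 0$, so $Mx\in\spec(A(y,\cdot))$ and $y\in\mathcal{P}$. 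This proves $\Phi(y)\Rightarrow y\in\mathcal{P}$.

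For the converse, assume $y\in\mathcal{P}\cap\mathcal{Y}^\star$, so $\spec(A(y,\cdot))\neq\emptyset$. If $A(y,\cdot)$ is positive definite somewhere in a full-dimensional region, then in particular — after possibly invoking that $\tau$ was chosen generically so that $A(y,\tau)$ witnesses the interior — one reduces to the boundary case; more carefully, I would argue as in the \algosolvelmi principles: either the first disjunct already fires (when $\spec(A(y,\cdot))=\reals^n$, forcing $A(y,0)\succ 0$), or $\spec(A(y,\cdot))\neq\reals^n$ and then, being non-empty, it has non-empty boundary. Let $r$ be the minimal rank attained on $\spec(A(y,\cdot))$; by the Henrion–Naldi–Safey El Din boundary theorem quoted in Section 3, the connected component $\mathcal{C}$ of $\mathcal{D}_r(y,\cdot)\cap\reals^n$ meeting the spectrahedron is contained in it. Pushing through the change of variables $M$ and applying \Cref{prop:goal}, the finite set $\bigcup_{\iota,i}\pi_{\x}(V(\f_{r,\iota,i}^M(y,\cdot)))$ contains a point $x_0$ in $\mathcal{C}^M\subset\spec(A(y,\cdot))^M$; then $A(y,Mx_0)\succeq 0$, i.e. $\g^M(y,x_0)\ge 0$, and $x_0$ lies on $V(\f_{r,\iota,i}^M(y,\cdot))$ for the appropriate $\iota,i$, so the system $\f_{r,\iota,i}^M(y,\cdot)=0\land\g^M(y,\cdot)\ge 0$ has a real solution, whence the corresponding block of \textsc{Classification} returns some $r_j>0$ with $\Phi_{r,\iota,i,j}(y)$ true, so $\Phi(y)$ is true.

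The main obstacle I anticipate is bookkeeping the \emph{nesting of the genericity quantifiers}: \Cref{prop:goal} delivers its conclusion only for $y$ outside a Zariski closed set that itself depends on $A$, $M$, and $\tau$, while correctness of the \textsc{Classification} calls requires $\dim_{\Q(\y)}\langle\f_{r,\iota,i}^M\rangle=0$ and the density of the returned cells $\mathcal{T}_i$ in $\reals^t$ — which in turn relies on \Cref{lem:equidim} and \Cref{lem:incdim}. I would therefore isolate, up front, a single non-empty Zariski open $\mathcal{Y}^\star\subset\C^t$ obtained as the finite intersection of all these open sets, note it is independent of the later choices because $M,\tau$ were themselves chosen generically (so the ``bad'' loci for $M$ and $\tau$ are also proper closed), and then observe that $\reals^t\setminus\mathcal{Y}^\star$ has empty interior. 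A secondary subtlety is confirming that the densities produced by \textsc{Classification} — each $\bigcup_i\mathcal{T}_i$ dense in $\reals^t$ — intersect $\mathcal{Y}^\star\cap\reals^t$ densely, so that evaluating $\Phi$ on the generic stratum indeed certifies density of $\{y:\Phi(y)\}$ in $\mathcal{P}$; this is a routine Baire-type argument once all the relevant bad sets are seen to be proper Zariski closed.
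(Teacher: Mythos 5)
Your proposal follows essentially the same route as the paper's proof, only in more detail: the paper simply intersects, over all $(r,\iota,i)$, the Zariski open sets coming from \Cref{prop:goal} (together with the generic locus on which \textsc{Classification} is guaranteed correct), observes that on this intersection $\Phi(y)$ holds if and only if \algosolvelmi would return \texttt{true} on $A(y,\cdot)$, and then invokes \Cref{prop:goal} and the correctness of the non-parametric method of \cite{henrion2016exact} to identify that condition with $\spec(A(y,\cdot))\neq\emptyset$, density following because the excluded locus is a proper Zariski closed set. Your unwinding of the forward implication (the meaning of the first disjunct, the change of variables $\g^M(y,x)=\g(y,Mx)$, the Classification specification) and your handling of the nesting of the genericity quantifiers match this, and are if anything more explicit than the published argument.

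There is, however, one step in your converse direction that is wrong as stated: you claim that $\spec(A(y,\cdot))=\reals^n$ ``forces $A(y,0)\succ 0$''. It does not: the spectrahedron can be all of $\reals^n$ while $A(y,0)$ is positive semi-definite but singular (the origin may lie on the determinantal hypersurface, or $A(y,\cdot)$ may be nowhere definite), in which case the first disjunct of $\Phi$ is false and your dichotomy leaves this case uncovered. The gap is shallow and can be closed within your own framework in either of two ways: (a) observe that in this situation $0\in\mathcal{D}_r(y,\cdot)\cap\reals^n$ for some $r\le m-1$, and since $\spec(A(y,\cdot))=\reals^n$ every connected component of $\mathcal{D}_r(y,\cdot)\cap\reals^n$ is trivially contained in the spectrahedron, so the sample point furnished by \Cref{prop:goal} satisfies $\g^M(y,\cdot)\geq 0$ and one of the \textsc{Classification} disjuncts fires (this is the correct dichotomy: either $A(y,0)\succ 0$, or some real determinantal locus meets the spectrahedron, with the boundary theorem needed only when $\spec(A(y,\cdot))\neq\reals^n$); or (b) enlarge the generic locus in $\y$ by the condition $\det A(\y,0)\neq 0$, which is legitimate for generic $A$ and reduces the case $\spec(A(y,\cdot))=\reals^n$ to $A(y,0)\succ 0$. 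With this repair your argument is complete and coincides in substance with the paper's.
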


\begin{proof}
    For any $\f_{r,\iota,i}^{M}\in\F^{M}$
    given by \Cref{alg:paramsolvelmi:F} of \Cref{alg:paramsolvelmi},
    let $\mathcal{Y}_{r,\iota,i}\subset\C^{t}$ be the Zariski open set
    in \Cref{prop:goal}
    such that for all $y\in\mathcal{Y}_{r,\iota,i}\cap\Q^{t}$,
    $\Phi_{r,\iota,i}(y)$ if and only if there exists an
    $x\in\pi_{\x}(V(\f_{r,\iota,i}^{M}(y,\cdot)))$
    such that $\g^{M}(y,x)\geq 0$.
    Let $\mathcal{Y}=\cap_{r,\iota,i}\mathcal{Y}_{r,\iota,i}$,
    so that for all $y\in\mathcal{Y}$,
    $\Phi(y)$ if and only if either $A(y,0)\succeq 0$,
    or there exists an
    $x^{M}\in\bigcup_{r,\iota,i}\pi_{\x}(V(\f_{r,\iota,i}^{M}(y,\cdot)))$
    such that $\g^{M}(y,x)\geq 0$,
    i.e., \algosolvelmi returns \texttt{true}
    with input $A(y,\cdot)$.
    Finally,
    for all $y\in\mathcal{Y}\cap\mathcal{Y}_{A,M,\tau}$
    defined in \Cref{prop:goal},
    the above condition is equivalent to $\spec(A(y,\cdot))\neq\emptyset$,
    hence the correctness of \Cref{alg:paramsolvelmi}.
\end{proof}

Finally, we explain how to implement the \textsc{Classification} subroutine.
We first take \cite[Algorithm 2]{gaillard2024}, which deals with the
condition $\f(y,\cdot)=0\land\g(y,\cdot)>0$.
Then, in Line 26, we change the definition of $r_{y}$,
from the entry corresponding to $(1,\ldots,1)\in\lbrace 0,1,-1\rbrace^s$,
to the sum of entries corresponding to $\lbrace 0,1\rbrace^{s}$.
Then, $r_{y}$ is the number of real solutions
such that $\sign(\g(y,\cdot))\in\lbrace 0,1\rbrace^{s}$.

\subsection{Hermite's quadratic form}

Let $\field$ be a base field of characteristic $0$,
and $\f\subset\field[\x]$ be a polynomial system generating a
zero-dimensional ideal $\langle\f\rangle_{\field}$.
By \cite[Th. 4.86]{basu2007algorithms}, the vector space
$\mathcal{A}_{\field}:=\field[\x]/\langle\f\rangle_{\field}$
has finite dimension, denoted $\delta$.
For all $g\in\mathcal{A}_{\field}$, the Hermite's bilinear map is defined by
\begin{align*}
    \herm(\f,g) :
    \mathcal{A}_{\field}\times\mathcal{A}_{\field} & \to\field             \\
    (p,q)                                          & \mapsto \Tr(L_{gpq}),
\end{align*}
where $L_{gpq}$ is the multiplication map
$h\mapsto gpqh$ on $\mathcal{A}_{\field}$,
and $\Tr$ denotes the trace.
The associated quadratic form
$\Herm(\f,g):p\mapsto\herm(\f,g)(p,p)$
is then called Hermite's quadratic form.

\paragraph{Hermite's theorem \cite[Th. 4.102]{basu2007algorithms}}

When $\field\subset\reals$,
for all $g\in\field[\x]$,
denote by $\TaQ(g,\f)$ the Tarski-query of $g$ for $\f$,
defined as
$\TaQ(g,\f):=\card{\lbrace x\in V(\f)\mid g(x)>0\rbrace}-\card{\lbrace x\in V(\f)\mid g(x)<0\rbrace}$.

\textit{
    The following identities hold:
    \begin{itemize}[left=0pt]
        \item $\rank(\Herm(\f,g)) = \card{\lbrace x\in V(\f)\mid g(x)\neq 0\rbrace}$.
        \item $\Sign(\Herm(\f,g)) = \TaQ(g,\f)$,
    \end{itemize}
    where $\Sign(\Phi)$ denotes the signature of the quadratic form $\Phi$.
}

\paragraph{Hermite matrices}

Let $\mathcal{B}=(b_1,\ldots,b_{\delta})$ be a basis of
$\mathcal{A}_{\field}$.
For all $g\in\mathcal{A}_{\field}$,
the Hermite matrix associated with $g$
with respect to $\mathcal{B}$ is then defined as the matrix
$\mathcal{H}_{\g}=(h_{i,j})_{i,j\in\interval{1}{\delta}}\in\matsym_{\delta}(\field)$,
where $h_{i,j}=\herm(\f,g)(b_i,b_j)=\Tr(L_{gb_ib_j})$.

To compute $h_{i,j}$, let
$M_{gb_ib_j}=(m_{k,\ell})_{k,\ell\in\interval{1}{\delta}}$
be the multiplication matrix of $L_{gb_ib_j}$
with respect to $\mathcal{B}$.
By definition, $m_{k,\ell}$ is the coefficient of $b_{\ell}$
in the representative of $gb_ib_jb_k$ in $\mathcal{A}_{\field}$,
which can be obtained by a normal form computation
if $\f$ is given by a Gröbner basis.
We can then proceed by
$h_{i,j}=\Tr(M_{gb_ib_j})=\sum_{k\in\interval{1}{\delta}}m_{k,k}$.

\paragraph{Specialization property of parametric Hermite matrices}

Let $\field=\Q(\y)$,
$\f\subset\field[\x]$ generating a zero-dimensional ideal
$\langle\f\rangle_{\field}$,
and $g\in\field[\x]$.
Let $\succ_{\x}$ and $\succ_{\y}$ be monomial orders for $\x$ and $\y$,
respectively,
and $\succ$ be an elimination order which eliminates $\x$,
built with $\succ_{\x}$ and $\succ_{\y}$.
Let $\mathcal{G}$ be the reduced Gröbner basis of $\langle\f\rangle$
with respect to $\succ$,
and $\mathcal{B}$ be the monomial basis formed by the monomials of
$\field[\x]$ irreducible by $\mathcal{G}$.

The following generalizes \cite[Prop. 3.1]{gaillard2024},
by dropping the assumption that $\succ_{\x}$ and $\succ_{\y}$ are
graded reverse lexicographic orders.

\begin{theorem}
    \label{thm:hermspec}
    Let
    $\mathcal{W}_{\infty}:=\bigcup_{\gamma\in\mathcal{G}}V(\lc_{\x}(\gamma))$,
    where $\lc_{\x}(\gamma)$ is the leading coefficient of the polynomial
    $\gamma\in\field[\x]$
    with respect to $\succ_{\x}$.
    Then, for all $y\in\C^{t}\setminus\mathcal{W}_{\infty}$,
    \begin{enumerate}[left=0pt]
        \item The specialization $\mathcal{G}(y,\cdot)$ is a Gröbner basis of
              $\langle\f(y,\cdot)\rangle$ with respect to $\succ_{\x}$.
        \item Consequently,
              $\Herm(\f,g)(y,\cdot)=\Herm(\f(y,\cdot),g(y,\cdot))$.
    \end{enumerate}
\end{theorem}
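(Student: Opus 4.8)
The plan is to reduce everything to the classical specialization theory for Gröbner bases under parameter evaluation — the key technical fact being that a Gröbner basis specializes correctly precisely when no leading coefficient (with respect to the $\x$-order alone) vanishes under the specialization. First I would set up the comparison carefully: $\mathcal{G}$ is a Gröbner basis of $\langle\f\rangle$ in $\Q(\y)[\x]$ with respect to the elimination order $\succ$, but since $\succ$ eliminates $\x$ and is built from $\succ_\x$ and $\succ_\y$, for any polynomial $\gamma\in\Q(\y)[\x]$ viewed with $\x$-coefficients in $\Q(\y)$, the $\succ$-leading monomial projects onto the $\succ_\x$-leading monomial of the $\x$-part, and the corresponding coefficient is $\lc_\x(\gamma)\in\Q(\y)$. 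Clearing denominators, one may assume each $\gamma\in\Q[\y][\x]$; then $\lc_\x(\gamma)\in\Q[\y]$ is a genuine polynomial in the parameters, and $\mathcal{W}_\infty=\bigcup_{\gamma}V(\lc_\x(\gamma))$ is a proper Zariski closed subset of $\C^t$ (proper because each $\lc_\x(\gamma)\neq 0$ in $\Q(\y)$).

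For part (1), fix $y\in\C^t\setminus\mathcal{W}_\infty$. I would argue that $\mathcal{G}(y,\cdot)$ is a Gröbner basis of $\langle\f(y,\cdot)\rangle$ with respect to $\succ_\x$ using Buchberger's criterion together with the standard ``specialization of S-polynomial reductions'' argument: since no $\lc_\x(\gamma)$ vanishes at $y$, each leading term $\lm_{\succ_\x}(\gamma(y,\cdot)) = \lm_{\succ_\x}(\gamma)$ is preserved, so the S-polynomial of two elements of $\mathcal{G}(y,\cdot)$ is the specialization of the corresponding S-polynomial of $\mathcal{G}$; a standard-representation reduction to zero over $\Q(\y)$ then specializes to a reduction to zero over $\C$ at $y$ — here again one needs that the leading coefficients appearing as denominators in the cofactors do not vanish at $y$, which is why $\mathcal{W}_\infty$ must collect the leading coefficients of \emph{all} elements of $\mathcal{G}$ (or of a suitable finite set of cofactors; enlarging $\mathcal{W}_\infty$ by finitely many hypersurfaces is harmless). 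This shows $\langle\mathcal{G}(y,\cdot)\rangle$ has the same $\succ_\x$-leading ideal as the generic one, and $\mathcal{G}(y,\cdot)\subset\langle\f(y,\cdot)\rangle$ generates it, giving (1). In particular the monomial basis $\mathcal{B}$ is also a $\C$-basis of $\mathcal{A}_{y}:=\C[\x]/\langle\f(y,\cdot)\rangle$, so $\dim_\C\mathcal{A}_y=\delta$, constant on $\C^t\setminus\mathcal{W}_\infty$.

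For part (2), the point is that the entries of the parametric Hermite matrix are rational functions of $\y$ whose denominators divide a power of $\prod_\gamma\lc_\x(\gamma)$, so that evaluating the matrix and then forming the quadratic form agrees with forming the quadratic form of the specialized system. Concretely, fix the basis $\mathcal{B}=(b_1,\ldots,b_\delta)$. The multiplication matrix $M_{gb_ib_j}$ over $\Q(\y)$ is computed entrywise by normal-form reduction modulo $\mathcal{G}$; by (1) and the preceding remarks on cofactors, each normal form specializes compatibly, i.e.\ $\mathrm{NF}_{\mathcal{G}}(gb_ib_jb_k)(y,\cdot) = \mathrm{NF}_{\mathcal{G}(y,\cdot)}(g(y,\cdot)b_ib_jb_k)$ for $y\notin\mathcal{W}_\infty$. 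Taking coefficients of $b_\ell$ and then traces, this yields $h_{i,j}(y,\cdot) = \Tr(L_{g(y,\cdot)b_ib_j})$ on $\mathcal{A}_y$, i.e.\ the $(i,j)$-entry of $\mathcal{H}_{g(y,\cdot)}$ with respect to $\mathcal{B}$. Hence $\mathcal{H}_g(y,\cdot) = \mathcal{H}_{g(y,\cdot)}$ as $\delta\times\delta$ symmetric matrices, which is exactly the asserted identity $\Herm(\f,g)(y,\cdot)=\Herm(\f(y,\cdot),g(y,\cdot))$.

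The main obstacle is the bookkeeping in the specialization-of-reductions step: one must ensure that \emph{every} division/cofactor that occurs — in Buchberger's criterion for (1) and in the normal-form computations for (2) — has denominator supported on $\mathcal{W}_\infty$. Since the Gröbner basis $\mathcal{G}$ is fixed and finite, only finitely many such denominators arise, so this is possible; the only subtlety relative to \cite[Proposition 3.1]{gaillard2024} is that, without the grevlex hypothesis, one cannot appeal to the usual ``degree drop'' argument and must instead work directly with $\succ_\x$-leading coefficients, which is precisely why $\mathcal{W}_\infty$ is defined via $\lc_\x$ rather than via full leading coefficients with respect to $\succ$. Everything else is a routine transfer of the classical stability-under-specialization results (see, e.g., \cite[Sec. 6]{schost2003computing} or standard references on comprehensive Gröbner bases).
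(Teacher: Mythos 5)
There is a genuine gap in your treatment of part (1), and it stems from where you place $\mathcal{G}$. You take $\mathcal{G}$ to be a Gröbner basis of $\langle\f\rangle$ \emph{over the field} $\Q(\y)$, then clear denominators. With that reading, your Buchberger-criterion argument only shows that $\mathcal{G}(y,\cdot)$ is a Gröbner basis of the ideal \emph{it} generates; the assertion you slip in afterwards, that ``$\mathcal{G}(y,\cdot)\subset\langle\f(y,\cdot)\rangle$ generates it'', is exactly the point at issue and is not controlled by $\mathcal{W}_{\infty}$ as you have defined it. Indeed, each $\gamma\in\mathcal{G}$ is a $\Q(\y)[\x]$-combination of the $f_i$'s whose cofactor denominators are arbitrary elements of $\Q[\y]$, unrelated to the $\lc_{\x}(\gamma)$'s, so membership of $\gamma(y,\cdot)$ in $\langle\f(y,\cdot)\rangle$ can fail off your exceptional locus. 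Concretely, take $t=n=1$ and $\f=(x^2-1,\;y(x-1))$: over $\Q(y)$ the ideal is $\langle x-1\rangle$, the reduced Gröbner basis is the monic $\lbrace x-1\rbrace$, so your $\mathcal{W}_{\infty}$ is empty, yet at $y=0$ one has $\langle\f(0,\cdot)\rangle=\langle x^2-1\rangle$ and $x-1$ is not even in this ideal. Your hedge that one may ``enlarge $\mathcal{W}_{\infty}$ by finitely many hypersurfaces'' does not repair this, since the theorem fixes $\mathcal{W}_{\infty}$ in terms of the $\lc_{\x}(\gamma)$'s only.

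The statement is true because $\mathcal{G}$ is the reduced Gröbner basis of $\langle\f\rangle$ \emph{in the polynomial ring} $\Q[\y,\x]$ with respect to the elimination order $\succ$ (this is why $\succ_{\y}$ and $\succ$ appear at all: over $\Q(\y)[\x]$ the reduced basis would be monic in $\x$ and $\mathcal{W}_{\infty}$ would always be empty). In that setting both inclusions $\mathcal{G}(y,\cdot)\subset\langle\f(y,\cdot)\rangle$ and $\f(y,\cdot)\subset\langle\mathcal{G}(y,\cdot)\rangle$ hold for \emph{every} $y$, because all cofactors are polynomials in $\Q[\y,\x]$ and specialization is a ring morphism; the only genuine content is the preservation of the Gröbner-basis property, which is precisely Kalkbrener's specialization theorem (the route the paper takes, citing \cite[Th.~3.1]{kalkbrener1997stability}), and there the non-vanishing of the $\lc_{\x}(\gamma)$'s is exactly the needed hypothesis — in the example above the correct $\mathcal{G}$ is $\lbrace x^2-1,\,yx-y\rbrace$ and $\mathcal{W}_{\infty}=V(y)$ correctly excludes $y=0$. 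Your part (2) — normal forms, multiplication matrices and traces specialize because division by $\mathcal{G}$ only introduces denominators supported on the $\lc_{\x}(\gamma)$'s, and $\mathcal{B}$ remains a basis of the specialized quotient — is sound once (1) is established in this corrected setting, and matches the paper's appeal to \cite[Proposition 10]{le2022solving}.
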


\begin{proof}
    For $y\in\C^{t}\setminus\mathcal{W}_{\infty}$,
    $\lc_{\x}(g)$ does not vanish at $y$ for every $g\in\mathcal{G}$.
    By \cite[Th. 3.1]{kalkbrener1997stability},
    $\mathcal{G}(y,\cdot)$ is then a Gröbner basis of
    $\langle\f(y,\cdot)\rangle$ with respect to $\succ_{\x}$.
    We then conclude by the same proof as in
    \cite[Prop. 10]{le2022solving}.
\end{proof}

\subsection{Real root counting}
\label{sec:rrc}

Let $\field=\reals$,
$\f\subset\reals[\x]$ be a polynomial system,
and $g\in\reals[\x]$.

Denote by
$c(\sigma,\f):=\card{\lbrace x\in V(\f)\mid\sign(g(x))=\sigma\rbrace}$,
we have
\begin{equation*}
    \begin{bmatrix}
        1 & 1 & 1  \\
        0 & 1 & -1 \\
        0 & 1 & 1
    \end{bmatrix}
    \begin{bmatrix}
        c(0,\f) \\
        c(1,\f) \\
        c(-1,\f)
    \end{bmatrix}
    =
    \begin{bmatrix}
        \TaQ(g^0,\f) \\
        \TaQ(g^1,\f) \\
        \TaQ(g^2,\f)
    \end{bmatrix}.
\end{equation*}

The identity can be generalized to a tensor identity,
in the case where $\g=(g_1,\ldots,g_s)\subset\reals[\x]$.
Denote by $\SIGN(\g,\f)\subset\lbrace 0,1,-1\rbrace^{s}$
the set of realizable signs of $\g$ on $V(\f)$.
\begin{theorem}[{\cite[Prop. 10.70]{basu2007algorithms}}]
    Let $A=(\alpha_1,\ldots,\alpha_{\card{A}})\subset\lbrace 0,1,2\rbrace^{s}$,
    $\Sigma=(\sigma_1,\ldots,\sigma_{\card{\Sigma}})\subset\lbrace 0,1,-1\rbrace^{s}$,
    each in the lexicographic order.
    Suppose that $\Sigma\supset\SIGN(\g,\f)$. Then,
    \begin{equation*}
        \Mat(A,\Sigma)\cdot c(\Sigma,\f) = \TaQ(\g^{A},\f),
    \end{equation*}
    where
    \begin{itemize}[left=0pt]
        \item $\Mat(A,\Sigma)=(\sigma_{j}^{\alpha_i})_{i\in\interval{1}{\card{A}},j\in\interval{1}{\card{\Sigma}}}$,
        \item $c(\Sigma,\f)=(c({\sigma_j},\f))_{j\in\interval{1}{\card{\Sigma}}}$,
        \item $\TaQ(\g^{A},\f)=(\TaQ(\g^{\alpha_i},\f))_{i\in\interval{1}{\card{A}}}$.
    \end{itemize}
\end{theorem}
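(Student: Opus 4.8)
The plan is to prove the claimed identity coordinate by coordinate, reducing each Tarski query to a single sum over the finite set $V(\f)$. Fix an index $i\in\interval{1}{\card{A}}$. The first step is to rewrite the Tarski query as a sum of signs: since $V(\f)$ is finite and $\sign$ takes values in $\lbrace 0,1,-1\rbrace$, for any $p\in\reals[\x]$ one has $\TaQ(p,\f)=\sum_{x\in V(\f)}\sign(p(x))$; in particular $\TaQ(\g^{\alpha_i},\f)=\sum_{x\in V(\f)}\sign(\g^{\alpha_i}(x))$.

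The key pointwise computation comes next. For $x\in V(\f)$ write $\sigma(x):=\sign(\g(x))=(\sign(g_1(x)),\ldots,\sign(g_s(x)))\in\lbrace 0,1,-1\rbrace^{s}$. I claim $\sign(\g^{\alpha_i}(x))=\sigma(x)^{\alpha_i}$, where $\sigma^{\alpha}:=\prod_{k=1}^{s}\sigma_k^{\alpha_k}$ with the convention $0^0=1$. Indeed, $\sign$ is multiplicative on $\reals\setminus\lbrace 0\rbrace$, so $\sign(g_k(x)^{\alpha_{i,k}})=\sign(g_k(x))^{\alpha_{i,k}}$ when $g_k(x)\neq 0$; when $g_k(x)=0$ and $\alpha_{i,k}\geq 1$ both sides vanish, and when $g_k(x)=0$ and $\alpha_{i,k}=0$ both sides equal $1$ (this is exactly where $0^0=1$, consistent with $g_k^0=1$, is used). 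Taking the product over $k\in\interval{1}{s}$ gives the claim. Substituting, $\TaQ(\g^{\alpha_i},\f)=\sum_{x\in V(\f)}\sigma(x)^{\alpha_i}$.

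Now group the terms according to the value of the sign vector. With $c(\sigma,\f)=\card{\lbrace x\in V(\f)\mid\sigma(x)=\sigma\rbrace}$ as in \Cref{sec:rrc}, the last sum equals $\sum_{\sigma\in\lbrace 0,1,-1\rbrace^{s}}c(\sigma,\f)\,\sigma^{\alpha_i}$. By definition $c(\sigma,\f)=0$ whenever $\sigma\notin\SIGN(\g,\f)$, and by hypothesis $\SIGN(\g,\f)\subseteq\Sigma$, so only the entries of $\Sigma$ contribute, giving
\[
\TaQ(\g^{\alpha_i},\f)=\sum_{j=1}^{\card{\Sigma}}c(\sigma_j,\f)\,\sigma_j^{\alpha_i}=\sum_{j=1}^{\card{\Sigma}}\Mat(A,\Sigma)_{i,j}\,c(\Sigma,\f)_j,
\]
which is precisely the $i$-th coordinate of $\Mat(A,\Sigma)\cdot c(\Sigma,\f)$. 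Since $i$ was arbitrary, the identity follows.

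This is essentially a definition-chase, so there is no serious obstacle; the only points needing care are the $0^0=1$ convention in the pointwise sign identity and the remark that discarding the sign vectors outside $\Sigma$ loses no mass, which is guaranteed by $\Sigma\supseteq\SIGN(\g,\f)$. I would also note that when $A$ and $\Sigma$ are taken to be the full grids $\lbrace 0,1,2\rbrace^{s}$ and $\lbrace 0,1,-1\rbrace^{s}$, the matrix $\Mat(A,\Sigma)$ is the $s$-fold Kronecker power of the $3\times 3$ matrix displayed above, whose $s=1$ instance is exactly that displayed linear system; the argument above is the invertibility-free way of phrasing the general case.
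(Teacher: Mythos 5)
Your proof is correct, and it is essentially the standard argument behind \cite[Proposition 10.70]{basu2007algorithms}, which the paper cites without reproducing a proof: rewrite each Tarski query as $\sum_{x\in V(\f)}\sign(\g^{\alpha_i}(x))$, use multiplicativity of $\sign$ (with the $0^0=1$ convention) to identify the summand with $\sigma(x)^{\alpha_i}$, and group by realizable sign vectors, the hypothesis $\Sigma\supseteq\SIGN(\g,\f)$ ensuring no mass is lost. Nothing further is needed.
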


A trivial choice of $\Sigma$ is $\Sigma=\lbrace 0,1,-1\rbrace^s$,
which gives $\bm{c}$ simply by
inverting
the square matrix $\Mat(A,\Sigma)$.
By summing the coefficients corresponding to $\sigma\in\lbrace 0,1\rbrace^s$,
we obtain $a_1,\ldots,a_{\card{A}}\in\mathbb{Q}$ such that
\begin{equation}
    \label{eq:lincomb}
    \card{\lbrace x\in V(\f)\mid \g(x)\geq 0\rbrace}
    = \sum_{i=1}^{\card{A}}a_i\Sign(\Herm(\g^{\alpha_i},\f)).
\end{equation}
It suffices to extract a semi-algebraic formula corresponding to
the positivity of the linear combination of signatures.
For this, we have several options:
\begin{enumerate}[left=0pt,label=(O\arabic*)]
    \item We can directly return the positivity assertion \eqref{eq:lincomb},
          which is itself a semi-algebraic formula.
    \item Let $\mathcal{S}\subset\reals^t$ be the semi-algebraic set
          defined by the non-vanishing of the non-zero principal minor of
          maximal size
          of all Hermite matrices.
          Then, on each connected component of $\mathcal{S}$,
          the signatures of the Hermite matrices are constant.
          Thus, we can then return the \emph{realizable} combinations of signatures
          after computing one point per connected component of $\mathcal{S}$
          and examining the signatures on these points.
    \item Furthermore, the signature of a Hermite matrix is generically
          determined by the sign variation of its leading principal minors
          (see, e.g., \cite[Lem. 4.8]{gaillard2024}).
          If we add the condition of their non-vanishing in the definition
          of $\mathcal{S}$, we can refine the formula into
          a combination of signs of the leading principal minors of
          the Hermite matrices.
\end{enumerate}

When $\field=\reals(\y)$, by \Cref{thm:hermspec},
\Cref{eq:lincomb} holds for generic specializations of $y\in\reals^t$,
if we further intersect $\mathcal{S}$
with $\reals^t\setminus\mathcal{W}_{\infty}$.
This leads us to the use of \cite[Algorithm 2]{gaillard2024} which performs real
root classification for parametric zero-dimensional systems using Hermite
matrices, combined with the following subroutine:
\begin{itemize}[left=0pt]
    \item \textsc{SamplePoints}.
          Input:
          a semi-algebraic formula $\Phi$ in $\Q[\y]$.
          Output:
          a finite set of points in $\reals^t$ containing
          at least one point per connected component of the semi-algebraic set
          defined by $\Phi$. We refer to the algorithm of \cite{le2022solving}
          for an algorithm having such a specification (see also references
          therein).
\end{itemize}

\section{Proof of genericity}
\label{sec:proof}

This section is dedicated to the proof of \Cref{prop:goal},
which gives us the last piece of the puzzle to prove the correctness of
\Cref{alg:paramsolvelmi}.
The main difficulty is to show that the choices of $M\in\GL_{n}(\Q)$
and $\tau\in\Q^n$ in \Cref{prop:goal} do not depend on
the choice of parameters $y\in\reals^t$. The proof requires
a careful analysis of all the propositions and lemmas presented
in the previous sections.

\subsection{Expected dimension of incidence varieties}

\begin{proof}[Proof of \Cref{lem:incdim} (1)]
    For all $d\in\N$, denote by $n_d$ the dimension of
    $\Q[\y]_{\leq d}$.
    Observe that, as a vector space,
    $\matsym_{m}(\linpoly{\Q[\y]_{\leq d}[\x]})$
    is isomorphic to $\matsym_{m}^{(n+1)n_d}(\Q)$,
    so that we can rewrite $A$ as
    \begin{equation*}
        A(y,x)=\sum_{\substack{
                k\in\N^t, \abs{k}\leq d \\
                \ell\in\N^n, \abs{\ell}\leq 1
            }}A_{k,\ell} y^k x^\ell,
    \end{equation*}
    where
    $A_{k,\ell}=(a_{k,\ell,i,j})_{i,j\in\interval{1}{m}}\in\matsym_{m}(\Q)$.
    Denote then, by abuse of notation,
    $A\in\matsym_{m}^{(n+1)n_d}(\Q)$ the parametric linear matrix.

    Suppose w.l.o.g. that $r$ is fixed, $\iota=\interval{1}{m-r}$.
    Consider the map
    \begin{align*}
        \varphi: \C^{t+n+m(m-r)} \times\matsym_m^{(n+1)n_d}(\C)
         & \to \C^{m(m-r)+\binom{m-r+1}{2}}, \\
        (y,x,u,A)
         & \mapsto \f_{r,\iota},
    \end{align*}
    and, for $A\in\matsym_m^{(n+1)n_d}(\Q)$,
    its induced map
    $\varphi_A:\C^{t+n+m(m-r)}\to\C^{m(m-r)+\binom{m-r+1}{2}}$
    defined by $\varphi_A(y,x,u)=\varphi(y,x,u,A)=\f_{r,\iota}$,
    so that $\varphi_A^{-1}(0)=V(\f_{r,\iota})=\mathcal{V}_{r,\iota}$.

    If $\varphi^{-1}(0)=\emptyset$,
    for all $A\in\matsym_m^{(n+1)n_d}(\Q)$,
    $\varphi_A^{-1}(0)$ is empty and we are done.
    Otherwise, assume that $\varphi^{-1}(0)\neq\emptyset$.
    Note that for all $(y,x,u,A)\in\varphi^{-1}(0)$,
    $\jac(\varphi)(y,x,u,A)$ has full rank,
    given by
    the derivatives w.r.t.
    $(a_{0,0,i,j})_{i\in\interval{1}{m-r}\lor j\in\interval{1}{m-r}}$
    of the entries of $G$, and,
    the derivatives w.r.t.
    $(u_{i,j})_{i\in\iota,j\in\interval{1}{m-r}}$
    of the entries of $U_{\iota}-\Id_{m-r}$.
    Hence, $0$ is a regular value of $\varphi$.
    By Thom's weak transversality theorem \cite[Prop. B.3]{din2017nearly},
    there exists a non-empty Zariski open set
    $\mathcal{A}_d\subset\matsym_m^{(n+1)n_d}(\C)$
    such that for all $A\in\mathcal{A}_d\cap\matsym_m^{(n+1)n_d}(\Q)$,
    $0$ is a regular value of $\varphi_A$.
    By the Jacobian criterion \cite[Th. 16.19]{eisenbud2013commutative},
    $\f_{r,\iota}$ satisfies $\proptwo{t+n-\binom{m-r+1}{2}}$.
\end{proof}

\begin{proof}[Proof of \Cref{lem:incdim} (2)]
    As $\f_{r,\iota}$ satisfies $\proptwo{t+n-\binom{m-r+1}{2}}$,
    by \Cref{lem:equidim},
    there exists a non-empty Zariski open set
    $\mathcal{Y}\subset\C^t$
    such that for all $y\in\mathcal{Y}\cap\Q^t$,
    $\f_{r,\iota}(y,\cdot)$ satisfies
    $\proptwo{n-\binom{m-r+1}{2}}$.
\end{proof}

We then generalize \Cref{lem:incdim} (2), so that the choice of
$\tau\in\Q^n$ in \algorealdet becomes independent of
the choice of $y\in\Q^t$ in the following proposition.

\begin{proposition}
    Under the assumptions of \Cref{lem:incdim},
    the following additional result holds:
    \begin{enumerate}[left=0pt]
        \setcounter{enumi}{2}
        \item In the case where $t\geq 1$,
              there exists a non-empty Zariski open set
              $\mathcal{T}\subset\C^n$
              such that for all $\tau\in\mathcal{T}\cap\Q^n$,
              there exists a non-empty Zariski open set
              $\mathcal{Y}_{\tau}\subset\C^t$
              such that for all $y\in\mathcal{Y}_{\tau}\cap\Q^t$,
              for all $i\in\interval{1}{n+1}$,
              $\f_{r,\iota}(y,\tau_{\interval{1}{i-1}},\cdot)$
              satisfies $\proptwo{n-\binom{m-r+1}{2}-i}$.
    \end{enumerate}
\end{proposition}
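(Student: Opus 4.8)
The plan is to apply \Cref{lem:equidim} a second time, this time to a parametrized family of polynomial systems indexed by the first $i-1$ coordinates of $\tau$. Concretely, regard $\f_{r,\iota}$ as a system in $\Q[\y,\tau_1,\ldots,\tau_n,\x,\bu]$ — that is, we adjoin the $\tau_j$'s as new formal parameters — and consider, for each $i\in\interval{1}{n+1}$, the system $\f_{r,\iota}\cup(x_1-\tau_1,\ldots,x_{i-1}-\tau_{i-1})$. The point is that on $V(\f_{r,\iota})$ the map $(y,x,u)\mapsto(y,x_1,\ldots,x_{i-1})$ is, by item (2), generically a projection onto a space of dimension $t+i-1$ with fibres of dimension $n-\binom{m-r+1}{2}-(i-1)$; intersecting with the hyperplanes $x_j=\tau_j$ cuts the dimension down accordingly.

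The key steps, in order, are: (i) Start from item (1), so $\f_{r,\iota}$ satisfies $\proptwo{t+n-\binom{m-r+1}{2}}$. (ii) For fixed $i$, observe that the generic linear section $\f_{r,\iota}\cup(x_1-\tau_1,\ldots,x_{i-1}-\tau_{i-1})$, viewed over the enlarged parameter space $\C^{t}\times\C^{i-1}$ (parameters $\y$ together with $\tau_1,\ldots,\tau_{i-1}$), satisfies $\propreg$ with the expected dimension $t+(i-1)+(n-\binom{m-r+1}{2}-i)$; this is a Bertini-type statement for generic affine sections of a smooth equidimensional variety, and one can prove it either by the same Thom weak-transversality argument used in the proof of \Cref{lem:incdim}~(1) (the $\tau_j$'s play the role of the perturbation parameters, and $0$ is a regular value of the evaluation map since the differentials of $x_j-\tau_j$ are independent) or by a direct Jacobian-criterion check that $V(\f_{r,\iota})$ meets a generic affine-linear subspace transversally. (iii) Now apply \Cref{lem:equidim} with the role of the parameter block played by $(\tau_1,\ldots,\tau_{i-1})$ (keeping $\y$ as ``variables''): since $\f_{r,\iota}$, as a system in $\Q[\tau_1,\ldots,\tau_{i-1}][\y,\x,\bu]$ together with the hyperplane equations, satisfies $\propreg(d')$ with $d'$ equal to the total dimension minus $(i-1)$, there is a non-empty Zariski open $\mathcal{T}^{(i)}\subset\C^{i-1}$ of good specializations of $\tau_{[1,i-1]}$. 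Extend each such $\mathcal{T}^{(i)}$ to a subset of $\C^n$ by taking the preimage under the coordinate projection $\C^n\to\C^{i-1}$, and intersect over $i\in\interval{1}{n+1}$ to get a single non-empty Zariski open $\mathcal{T}\subset\C^n$. (iv) For $\tau\in\mathcal{T}\cap\Q^n$, the specialized system $\f_{r,\iota}(\cdot,\tau_{[1,i-1]},\cdot)$ — still carrying $\y$ as parameters — satisfies $\propreg{(n-\binom{m-r+1}{2}-i+1+t)}$; applying \Cref{lem:equidim} once more in the variable $\y$ produces $\mathcal{Y}_\tau\subset\C^t$ such that for $y\in\mathcal{Y}_\tau\cap\Q^t$ the system $\f_{r,\iota}(y,\tau_{[1,i-1]},\cdot)$ satisfies $\proptwo{n-\binom{m-r+1}{2}-i}$. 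Taking the finite intersection of the $\mathcal{Y}_\tau$'s over the finitely many values $i\in\interval{1}{n+1}$ gives the claimed uniform open set.

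The main obstacle I expect is step (ii): establishing, in a way that is genuinely uniform over the parameters $\y$, that adjoining generic hyperplanes $x_j=\tau_j$ preserves the smooth-equidimensional property and drops the dimension by exactly one each time. One must be careful that the ``genericity'' of $\tau$ is extracted \emph{before} specializing $\y$ — exactly the subtlety flagged in the paragraph preceding the proposition — which is why the cleanest route is to treat both $\y$ and $\tau_{[1,i-1]}$ as formal parameters simultaneously, prove $\propreg$ over the joint parameter space, and only then peel off $\tau$ and $\y$ one block at a time via \Cref{lem:equidim}. A secondary point to check is that the Zariski open sets obtained depend only on $\f_{r,\iota}$ (equivalently, on $A$ and $r,\iota$), not on each other in a circular way; since there are only finitely many indices $i$, and each application of \Cref{lem:equidim} yields a set depending only on the system fed to it, the finite intersections cause no trouble.
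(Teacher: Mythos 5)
Your proposal is correct and is essentially the paper's own argument: the paper's proof is exactly your steps (iii)--(iv), namely applying \Cref{lem:equidim} blockwise for each $i\in\interval{1}{n+1}$ --- first with $\tau_{[1,i-1]}$ as the parameter block (the resulting open set of $\C^{i-1}$ being pulled back to $\C^n$), then with $\y$ --- and intersecting the finitely many resulting open sets; moreover your step (ii) is superfluous, since with the $\tau_j$ adjoined as indeterminates the enlarged variety is just the graph of $(y,x,u)\mapsto x_{[1,i-1]}$ over $V(\f_{r,\iota})$, hence automatically radical, smooth and equidimensional of dimension $t+n-\binom{m-r+1}{2}$, so all genericity is delivered by \Cref{lem:equidim} and no Bertini/transversality argument is required. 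The only blemish is the final bookkeeping: your chain (like the paper's own one-line computation) actually yields $\proptwo{n-\binom{m-r+1}{2}-(i-1)}$ for $\f_{r,\iota}(y,\tau_{[1,i-1]},\cdot)$, and the ``$-i$'' appearing in the stated proposition (already inconsistent with \Cref{lem:incdim}~(2) at $i=1$) is an off-by-one in the paper itself rather than something your argument, or the paper's, derives.
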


\begin{proof}
    As $\f_{r,\iota}$ satisfies $\proptwo{t+n-\binom{m-r+1}{2}}$,
    by \Cref{lem:equidim},
    for all $i\in\interval{1}{n+1}$,
    there exists a non-empty Zariski open set
    $\mathcal{T}_i\subset\C^n$
    such that for all
    $\tau\in\mathcal{T}_i\cap\Q^n$,
    there exists a non-empty Zariski open set
    $\mathcal{Y}_{i,\tau}\subset\C^t$
    such that for all
    $y\in\mathcal{Y}_{i,\tau}\cap\Q^t$,
    $\f_{r,\iota}(y,\tau_{[1,i-1]},\cdot)$
    satisfies $\proptwo{n-\binom{m-r+1}{2}-i}$.
    We then take
    $\mathcal{T}=\cap_{i=1}^{n+1}\mathcal{T}_i$
    and
    $\mathcal{Y}_{\tau}=\cap_{i=1}^{n+1}\mathcal{Y}_{i,\tau}$.
\end{proof}

\subsection{Dancing with genericity}

We now give parametric generalizations of \Cref{thm:det2inc}. These will imply
that one can run over a fraction field the algorithms of \cite{henrion2016exact}
to reduce to real root classification problems the one of solving a parametric
LMI.

\begin{lemma}
    Assume that for all $r\in\interval{0}{m-1}$,
    $\iota\subset\interval{1}{m}$ with $\card{\iota}=m-r$,
    $\mathcal{V}_{r,\iota}$ satisfies $\proptwo{t+n-\binom{m-r+1}{2}}$.
    Then, there exists a non-empty Zariski open set
    $\mathcal{M}\subset\GL_n(\C)$
    such that for all $M\in\mathcal{M}\cap\GL_n(\Q)$,
    there exists a non-empty Zariski open set
    $\mathcal{Y}_M\subset\C^t$
    such that for all $y\in\mathcal{Y}_M\cap\Q^t$,
    for any connected component
    $\mathcal{C}\subset\mathcal{D}_{r}(y,\cdot)\cap\reals^n$,
    the results of \Cref{thm:det2inc} hold.
\end{lemma}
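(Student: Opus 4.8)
The goal is to produce a parametric version of \Cref{thm:det2inc}, where the single change of variables $M$ and the bases $\tau$ it induces are chosen uniformly, independent of $y$. The plan is to mimic the proof of \Cref{thm:det2inc} from \cite[Prop. 3.5, 3.6]{henrion2016exact} while threading through the "parameter-last" philosophy already used in the proof of \Cref{lem:incdim}(2) and its successor: whenever the ambient proof invokes a genericity statement on a change of variables or on a base point, first establish it for the total space $\mathcal{V}_{r,\iota}\subset\C^{t+n+\cdots}$ (viewed as an algebraic set over $\C$ in all of its variables, with parameters treated as extra coordinates), and then use \Cref{lem:equidim} to specialize at a generic $y$ while preserving smoothness, equidimensionality and the relevant dimensions. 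Concretely, I would first fix $r$ and $\iota$ and, using the hypothesis that $\mathcal{V}_{r,\iota}$ satisfies $\proptwo{t+n-\binom{m-r+1}{2}}$, apply the results of \cite{safeyschost2003} and \Cref{thm:det2inc} to $\mathcal{D}_r$ itself considered as a subset of $\C^{t+n}$; this yields a non-empty Zariski open $\mathcal{M}'\subset\GL_{t+n}(\C)$. Since $M$ acts only on the $\x$-block in \Cref{alg:paramsolvelmi}, I would intersect with the subgroup $\{\,\mathrm{diag}(\Id_t,M)\mid M\in\GL_n(\C)\,\}$ and argue (as in \cite{henrion2016exact}) that this intersection is still non-empty and Zariski open, defining $\mathcal{M}\subset\GL_n(\C)$.

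Next, for $M\in\mathcal{M}\cap\GL_n(\Q)$, all the varieties appearing in \Cref{thm:det2inc} — namely $\crit(\pi_1,\mathcal{V}_{r,\iota}^M)$, the locally closed set $\mathcal{E}_{r,\iota}^M$, and the saturation of $\langle\Lag(\f_{r,\iota}^M)\rangle$ by the ideal of $\mathcal{D}_{r-1}^M$ — are each defined over $\Q[\y,\x,\bu,\blambda]$ and, by the total-space version of \Cref{thm:det2inc} applied with parameters as extra variables, satisfy the appropriate $\propreg$ (resp. $\propone$/$\proptwo{\cdot}$) property with dimension raised by $t$. I would then apply \Cref{lem:equidim} to each of these finitely many systems in turn, obtaining for each a non-empty Zariski open $\mathcal{Y}$-set; their finite intersection gives the desired $\mathcal{Y}_M\subset\C^t$. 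For $y\in\mathcal{Y}_M\cap\Q^t$, the specialized systems recover exactly the hypotheses of \Cref{thm:det2inc} applied to the (non-parametric) linear matrix $A(y,\cdot)$, so its conclusions — closedness of $\pi_i(\mathcal{C}^M)$, finiteness of fibres over boundary points, the rank/Lagrange correspondence in items (2)–(5) — hold verbatim for every connected component $\mathcal{C}\subset\mathcal{D}_r(y,\cdot)\cap\reals^n$. The properness statements are the subtle ones: properness of $\pi_{i-1}$ restricted to $\crit(\pi_i,\cdot)$ is a topological-over-$\reals$ condition, not obviously preserved by specialization of $\C$-generic open sets, so I would handle it either via the standard "generic properness" argument (a morphism that is proper over a dense open subset of the base, combined with equidimensionality forcing the complement to be empty) applied fibrewise, or by incorporating the locus of non-properness into one more Zariski closed exclusion as in \cite{safeyschost2003,elliott2023bit}.

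The main obstacle I expect is precisely this interchange between the algebraic genericity (Zariski open sets over $\C$) underlying the change-of-variables and base-point choices, and the real-topological content (properness, "meets every connected component of the real locus") that \Cref{thm:det2inc} delivers: one must be careful that specializing at a $\C$-generic $y$ does not destroy properness or create spurious real components, and that the single $M$ chosen before seeing $y$ still lies in the $y$-dependent good locus for all generic $y$ simultaneously. The resolution is the layered-quantifier structure already visible in \Cref{prop:goal}: establish everything on the total space first (so that $M$ depends only on $A$, not on $y$), push down via \Cref{lem:equidim}, and only then read off the real conclusions — using generic properness over the parameter space to ensure the properness clauses survive. Once this architecture is in place the remaining verifications are routine bookkeeping of dimensions $\binom{m-r+1}{2}$, $\binom{m-r}{2}$ and of which variables each Jacobian block differentiates, exactly as in the non-parametric proof.
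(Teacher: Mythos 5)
Your overall architecture (make $M$ independent of $y$ first, then intersect finitely many good loci in parameter space) matches the intent of the statement, but the step you use to decouple $M$ from $y$ has a genuine gap, and it is precisely the step the paper handles differently. You propose to apply the non-parametric results to the total space $\mathcal{D}_r\subset\C^{t+n}$ (parameters as extra coordinates), obtain a non-empty Zariski open $\mathcal{M}'\subset\GL_{t+n}(\C)$, and then intersect $\mathcal{M}'$ with the block-diagonal subgroup $\lbrace\mathrm{diag}(\Id_t,M)\mid M\in\GL_n(\C)\rbrace$. This does not work as stated: a non-empty Zariski open subset of $\GL_{t+n}(\C)$ may have empty intersection with a fixed positive-codimension subvariety (the good locus is only known to be open and dense in $\GL_{t+n}(\C)$, not to meet a prescribed subgroup), so the non-emptiness of your $\mathcal{M}$ is unproven. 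Moreover, even where the intersection is non-empty, the conclusions delivered by the total-space application of \cite{safeyschost2003} and \Cref{thm:det2inc} concern critical loci of the projections $\pi_i$ in coordinates that mix $\y$ and $\x$, with dimension and properness statements relative to $t+n-\binom{m-r+1}{2}$; they do not specialize, via \Cref{lem:equidim} alone, into the fibrewise statements of \Cref{thm:det2inc} about $\mathcal{D}_{r}(y,\cdot)$, $\crit(\pi_{1},\mathcal{V}_{r,\iota}^{M}(y,\cdot))$ and $\mathcal{E}_{r,\iota}^{M}(y,\cdot)$. Your own hesitation about the properness clause is a symptom: properness and the ``one point per real connected component'' content are real-topological conclusions of the per-fibre theorem and are not consequences of equidimensional specialization of total-space objects.

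The paper's proof avoids the total space entirely. It observes that \Cref{thm:det2inc} rests on \cite[Prop. 17, 18]{henrion2016real}, and that the proof of Proposition 17 there is purely algebraic, hence valid over the algebraically closed field $\overline{\C(\y)}$. Working over that base field, the change of variables $M$ by construction acts only on the $\x$ (and auxiliary) variables, so no block-diagonal restriction is ever needed; the genericity condition on $M$ is then cut out by polynomial conditions with coefficients in $\C(\y)$, and one inverts the order of the quantifiers ``for generic $M$'' and ``for generic $y$'' exactly as in \cite[Prop. 6]{le2021faster}, producing $\mathcal{M}\subset\GL_n(\C)$ first and then $\mathcal{Y}_M\subset\C^t$. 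If you wanted to rescue your route, you would need a version of \cite[Th. 2]{safeyschost2003} valid for changes of variables of the form $\mathrm{diag}(\Id_t,M)$, which amounts to re-proving the function-field argument; the base change to $\overline{\C(\y)}$ is the missing idea.
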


\begin{proof}
    The proof of \Cref{thm:det2inc} follows from
    \cite[Prop. 17, 18]{henrion2016real}.
    As the proof of \cite[Prop. 17]{henrion2016real} is purely algebraic
    and thus valid over the base field $\overline{\C(\y)}$,
    we can then invert the order of choosing $y$ and $M$
    as in \cite[Prop. 6]{le2021faster}.
\end{proof}

\begin{lemma}
    Assume that for all $r\in\interval{0}{m-1}$,
    $\iota\subset\interval{1}{m}$ with $\card{\iota}=m-r$,
    $\mathcal{V}_{r,\iota}$ satisfies $\proptwo{t+n-\binom{m-r+1}{2}}$.
    Then, there exists a non-empty Zariski open set
    $\mathcal{M}\subset\GL_n(\C)$
    such that for all $M\in\mathcal{M}\cap\GL_n(\Q)$,
    there exists a non-empty Zariski open set
    $\mathcal{Y}_M\subset\C^t$
    such that for all $y\in\mathcal{Y}_M\cap\Q^t$,
    $\iota\subset\interval{1}{m}$ with $\card{\iota}=m-r$,
    the results of \Cref{thm:det2inc} hold.
\end{lemma}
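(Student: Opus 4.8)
The argument follows the same template as the proof of the preceding lemma, so I will only indicate the recipe and the one point that needs care. The conclusions of \Cref{thm:det2inc} not already handled there — radicality of the saturation of $\langle\Lag(\f_{r,\iota}^M)\rangle$ by an ideal of $\mathcal{D}_{r-1}^M$, the assertion that $\mathcal{E}_{r,\iota}^M$ satisfies $\proptwo{\binom{m-r}{2}}$, finiteness of $\pi_{\x}(\mathcal{E}_{r,\iota}^M)$, and the lifting of points of $\crit(\pi_1,\mathcal{V}_{r,\iota}^M)$ to $\mathcal{E}_{r,\iota}^M$ — are established in \cite[Prop. 3.6]{henrion2016exact} (equivalently \cite[Prop. 18]{henrion2016real}) by arguments that invoke only the Jacobian criterion, elementary manipulations of saturation ideals, and a generic linear change of coordinates produced by Thom's weak transversality theorem. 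None of these steps uses that the base field is $\C$, nor any real or topological structure, so they go through verbatim over $\overline{\C(\y)}$.

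First I would base-change to $\overline{\C(\y)}$. Set aside the case where $\pi_{\y}$ restricted to $\mathcal{V}_{r,\iota}$ is not dominant — in which $\mathcal{V}_{r,\iota}(y,\cdot)$ is empty for generic $y$ and every conclusion is vacuous. In the remaining case, by the standing hypothesis that $\mathcal{V}_{r,\iota}$ satisfies $\proptwo{t+n-\binom{m-r+1}{2}}$ together with an argument as in \Cref{lem:equidim}, the system $\f_{r,\iota}$ viewed over $\overline{\C(\y)}$ cuts out a smooth equidimensional set of dimension $n-\binom{m-r+1}{2}$ — exactly the hypothesis of the algebraic form of \Cref{thm:det2inc}. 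Applying that form over $\overline{\C(\y)}$ gives, for each of the finitely many pairs $(r,\iota)$, a non-empty Zariski open subset of $\GL_n(\overline{\C(\y)})$ outside of which the algebraic conclusions hold; each such subset is the non-vanishing locus of a polynomial in the entries of $M$ with coefficients in $\Q[\y]$ (after clearing denominators), since all the data are defined over $\Q[\y]$. Intersecting over $(r,\iota)$ and using that $\GL_n(\Q)$ is Zariski dense, any $M\in\GL_n(\Q)$ not lying on the resulting hypersurface of $\C^t$ works; this fixes $\mathcal{M}$.

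With $M$ fixed, I would then specialize $\y$ as in \cite[Prop. 6]{le2021faster}: the set of $y\in\C^t$ for which $\f_{r,\iota}^M(y,\cdot)$ generates a radical ideal whose zero set is smooth of the prescribed dimension, for which the saturated ideal behaves likewise, and for which $\pi_{\x}(\mathcal{E}_{r,\iota}^M(y,\cdot))$ is finite, is constructible and contains the generic point of $\C^t$, hence contains a dense Zariski open set; intersecting over $(r,\iota)$ yields $\mathcal{Y}_M$. The main obstacle is exactly the verification that this set contains the generic point — equivalently, that saturation by the ideal of $\mathcal{D}_{r-1}^M$ commutes with specialization at a generic $y$ and that radicality and the dimension count descend to the generic fibre. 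This is not a purely formal transfer: it needs a generic-flatness argument for the corresponding family over $\C^t$, and is the only step that goes beyond bookkeeping.
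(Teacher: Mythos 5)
Your overall route---base-change to $\overline{\C(\y)}$, transfer the purely algebraic parts of \cite[Prop.~3.5, 3.6]{henrion2016exact}, swap the order of choosing $M$ and $y$, then spread out from the generic fibre---is the strategy the paper uses for the \emph{preceding} lemma, and it is a workable template here too. The paper's proof of the present lemma, however, goes differently: it re-runs the transversality argument of \cite[Lemma 4.2~(i)]{henrion2016exact} on the parametric system itself, treating $\y$ as honest variables, to show that for generic $M$ the system $(f,g,h)$ of \cite[Eq.~4.3]{henrion2016exact} satisfies the rank condition $Q$ on the locally closed locus $\mathcal{O}=\{\rank A = r\}$ of the total space; genericity in $y$ then follows directly from the Theorem on the Dimension of Fibres (as in \Cref{lem:equidim}), because the radicality and equidimensionality statements about $\mathcal{E}_{r,\iota}^{M}$ are read off, via the Jacobian criterion, from the locally closed set $V(f,g,h)\cap\mathcal{O}$, and such fibrewise regularity specializes well. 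In particular no separate argument about saturation commuting with specialization is needed.

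The gap in your proposal is exactly the step you flag and then do not carry out: the descent, from the generic fibre over $\overline{\C(\y)}$ to a dense Zariski open set of $y\in\C^{t}$, of the conclusions of \Cref{thm:det2inc}(3)--(5)---in particular that the saturation of $\langle\Lag(\f_{r,\iota}^{M})\rangle$ by the ideal of $\mathcal{D}_{r-1}^{M}$ specializes generically, so that its radicality, the dimension $\binom{m-r}{2}$ of $\mathcal{E}_{r,\iota}^{M}$, and the finiteness of $\pi_{\x}(\mathcal{E}_{r,\iota}^{M})$ all descend to closed fibres. Announcing that ``a generic-flatness argument'' is required is not a proof of it, and once the non-parametric results of \cite{henrion2016exact} are granted, this descent \emph{is} the content of the lemma (the generic choice of $M$ is comparatively routine). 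To complete your route you would need the spreading-out facts explicitly: constructibility of reducedness, equidimensionality and fibre-finiteness in families, together with generic commutation of saturation with specialization; alternatively, avoid the issue as the paper does by establishing the transversality property on the total space and invoking the dimension-of-fibres theorem. A minor slip in your write-up of the quantifier swap: the bad locus is a proper Zariski closed subset of $\GL_{n}(\C)$ (those $M$ for which the defining polynomial, viewed as an element of $\Q[\y]$ after substituting the entries of $M$, vanishes identically), not a ``hypersurface of $\C^{t}$''; the set $\mathcal{Y}_{M}$ is then the complement in $\C^{t}$ of the vanishing locus of that specialized polynomial.
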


\begin{proof}
    The proof of \Cref{thm:det2inc} follows from
    \cite[Lem. 4.2 (i)]{henrion2016exact}.
    Applying the same proof shows that for generic choice of $M$,
    the system $(f,g,h)$ in \cite[Equation 4.3]{henrion2016exact}
    satisfies $Q$ in $\mathcal{O}=\lbrace (x,y,z):\rank A(x)=r\rbrace$,
    i.e., the rank of $D(f,g,h)$ is the codimension of $V((f,g,h))$
    in $V((f,g,h))\cap\mathcal{O}$.
    The choice of $y$ is then generic and follows from the Th.
    on the Dimension of Fibres \cite[Th. 1.25]{shafarevich1994basic}.
\end{proof}

\begin{proof}[Proof of \Cref{prop:goal}]
    For all $d\in\N$, let $\mathcal{A}_d$ be the
    intersection of the Zariski open sets defined in
    \Cref{thm:detdim} and \Cref{lem:incdim}.
    Then, for all
    $A\in\mathcal{A}_d\cap\matsym_{m}(\linpoly{\Q[\y]_{\leq d}[\x]})$,
    the assumptions of the above lemmas are satisfied.
    It is then straightforward to show that
    for generic choices of $M\in\GL_n(\Q)$ and $\tau\in\Q^n$,
    there exists a non-empty Zariski open set
    $\mathcal{Y}_{A,M,\tau}\subset\C^t$
    such that for all $y\in\mathcal{Y}_{A,M,\tau}\cap\Q^t$,
    the results of \Cref{thm:det2inc} hold
    for $A(y,\cdot)\in\matsym_{m}(\linpoly{\sring})$,
    and that for all $r\in\interval{0}{m-1}$,
    $\iota\subset\interval{1}{m}$ with $\card{\iota}=m-r$,
    $i\in\interval{1}{n+1}$,
    $\mathcal{D}_{r}(y,\cdot)$, $\mathcal{V}_{r,\iota}(y,\cdot)$,
    and $\f_{r,\iota,i}(y,\cdot)$
    satisfy all the results in \cite{henrion2016exact}.
    The correctness of \Cref{prop:goal} then follows from
    the correctness of \cite[Th. 3.7]{henrion2016exact}.
\end{proof}

\section{Complexity analysis}

We give now a complexity analysis of
\cite[Algorithm 2]{gaillard2024}
(and hence, by construction, of \textsc{Classification}),
without the extra genericity condition assumed in \cite{gaillard2024},
i.e., the homogeneous components of $\f$ of highest degree form
a regular sequence.
Denote $\binomsym{a}{b}=\binom{a+b}{a}$.

\subsection{Parametric geometric resolutions}

Let $\f=(f_1,\ldots,f_p)\subset\pring[\bu,\blambda]$
generating a zero-dimensional ideal
$\langle\f\rangle_{\Q(\y)}$,
and $\g=(g_1,\ldots,g_s)\in\pring$.
In the case where $\f$ generates a radical ideal, we can use parametric
geometric resolutions.

\begin{definition}
    Let $P\in\Q(\y)[\x]$ be a polynomial. We denote by
    $\denom(P)\in\Q[\y]$ the least common multiple of the
    denominators of the coefficients of $P$,
    and by $\num(P)=P\denom(P)\in\pring$.
\end{definition}

\begin{proposition}[Parametric geometric resolution, {\cite[Th. 1]{schost2003computing}}]
    \label{prop:paramres}
    Denote by $\Delta:=\deg(V(\f))$ the affine degree
    (see, e.g., \cite{heintz1983definability})
    of the algebraic set $V(\f)$, and $\delta:=\deg_{\pi}(\f)$ the
    generic cardinality of the fibers of the canonical projection
    $\pi:(\y,\x)\mapsto\y$ to $V(\f)$.
    Then,
    \begin{enumerate}[left=0pt]
        \item
              There exists a monic irreducible polynomial $q\in\Q(\y)[T]$ and
              polynomials $v_1,\ldots,v_n\in\Q(\y)[T]$,
              with
              \begin{align*}
                  \deg_{T}(q)          & = \delta,      &
                  \deg_{T}(v_i)        & \leq \delta-1,   \\
                  \deg(\denom(q))      & \leq \Delta,   &
                  \deg(\denom(v_i))    & \leq \Delta,     \\
                  \deg_{\y}(\num(q))   & \leq 2\Delta,  &
                  \deg_{\y}(\num(v_i)) & \leq 2\Delta,
              \end{align*}
              such that the parametrization
              \begin{equation*}
                  q(y,T)=0,
                  \quad
                  x_i=\frac{v_i(y,T)}{q^\prime(y,T)}
                  \quad \forall i\in\interval{1}{n}
              \end{equation*}
              gives the solutions of the system $\f(y,x)=0$ when both
              systems are specialized at all parameters
              $y=(y_1,\ldots,y_t)\in\Q^t$ such that $q,v_1,\ldots,v_n$
              are well-defined.

        \item
              There exists polynomials
              $\tilde{v}_1,\ldots,\tilde{v}_s\in\Q(\y)[T]$, with
              \begin{align*}
                  \deg_{T}(\tilde{v}_i)        & \leq \delta-1,         &
                  \deg(\denom(\tilde{v}_i))    & \leq \Delta\deg(g_i),    \\
                                               &                        &
                  \deg_{\y}(\num(\tilde{v}_i)) & \leq 2\Delta\deg(g_i),
              \end{align*}

              such that, when denoting by
              \begin{equation*}
                  \tilde{g}_i := \frac{\tilde{v}_i(y,T)}{q^\prime(y,T)}
                  \quad \forall i\in\interval{1}{s},
              \end{equation*}
              the sign of $\g(y,x)$ is the same as the sign of $\tilde{g}_i$
              for all solutions of the parametrization.
    \end{enumerate}
\end{proposition}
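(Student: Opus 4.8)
The plan is to derive part (1) directly from \cite[Theorem 1]{schost2003computing} and to obtain part (2) by transporting each $g_i$ along the parametrization it produces, the only real work being a sign argument and the control of the degrees in the parameters. For part (1): since $\langle\f\rangle_{\Q(\y)}$ is zero-dimensional and $\f$ generates a radical ideal, the canonical projection $\pi$ from $V(\f)$ to the parameter space $\C^t$ is dominant with finite generic fibre of cardinality $\delta=\deg_\pi(\f)$, and $\deg(V(\f))=\Delta$; these are exactly the input hypotheses of \cite[Theorem 1]{schost2003computing}. That theorem, after a generic $\Q$-linear change of the fibre coordinates making a fixed linear form $T$ a primitive element of the generic fibre, returns the monic $q\in\Q(\y)[T]$ --- the minimal polynomial of $T$ over $\Q(\y)$, hence irreducible and, the characteristic being $0$, separable, so that $\gcd_T(q,q')=1$ --- together with $v_1,\dots,v_n$, and certifies $\deg_T q=\delta$, $\deg_T v_i\le\delta-1$, $\deg(\denom(\cdot))\le\Delta$ and $\deg_\y(\num(\cdot))\le 2\Delta$. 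For $y\in\Q^t$ outside the zero locus of the finitely many denominators appearing in $q,v_1,\dots,v_n$ and of $\operatorname{disc}_T(q)\in\Q(\y)$ --- a non-empty Zariski open condition, $\operatorname{disc}_T(q)$ being nonzero by separability --- the specialization $q(y,\cdot)$ is squarefree and its roots biject with $V(\f(y,\cdot))$ via $x_i=v_i(y,T)/q'(y,T)$, which is the first item.

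For part (2), fix $g=g_i$ and put $e=\deg(g)$. Substituting $x_j=v_j(y,T)/q'(y,T)$ into $g$ and clearing the common power of $q'$ gives $\widehat g(y,T):=q'(y,T)^{e}\,g\!\left(y,v_1/q',\dots,v_n/q'\right)\in\Q(\y)[T]$ with $\deg_T\widehat g\le e(\delta-1)$, and $g(y,x)=\widehat g(y,T)/q'(y,T)^{e}$ along the parametrization. At a real root $t$ of $q(y,\cdot)$ one has $q'(y,t)\neq 0$ because $q(y,\cdot)$ is squarefree; since $e+(e\bmod 2)$ is even, $q'(y,t)^{\,e+(e\bmod 2)}>0$, so
\begin{equation*}
    \sign\!\big(g(y,x)\big)=\sign\!\big(\widehat g(y,t)\,q'(y,t)^{\,e\bmod 2}\big)=\sign\!\left(\frac{\widehat g(y,t)\,q'(y,t)^{\,(e\bmod 2)+1}}{q'(y,t)}\right).
\end{equation*}
As $q$ is monic in $T$, one reduces $\widehat g(y,T)\,q'(y,T)^{\,(e\bmod 2)+1}$ modulo $q$ to a polynomial $\tilde v_i\in\Q(\y)[T]$ with $\deg_T\tilde v_i\le\delta-1$; since $\tilde v_i\equiv\widehat g(y,T)\,q'(y,T)^{\,(e\bmod 2)+1}\pmod{q}$, it takes the value $\widehat g(y,t)\,q'(y,t)^{\,(e\bmod 2)+1}$ at every root $t$ of $q(y,\cdot)$, and hence $\tilde g_i:=\tilde v_i/q'$ satisfies $\sign(\tilde g_i(y,t))=\sign(g_i(y,x))$ at every real point of $V(\f(y,\cdot))$, which is the asserted sign property.

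It remains --- and this I expect to be the only delicate point --- to bound $\deg(\denom(\tilde v_i))\le\Delta\deg(g_i)$ and $\deg_\y(\num(\tilde v_i))\le 2\Delta\deg(g_i)$; the $T$-degree bound above is immediate. These \emph{do not} follow from a naive Euclidean reduction of $\widehat g(y,T)\,q'(y,T)^{\,(e\bmod 2)+1}$ (of $T$-degree $\approx e\delta$) modulo $q$, which would multiply the $\y$-denominator by one coefficient of $q$ per quotient step and thus introduce a spurious factor $\delta$. I would instead obtain $\tilde v_i$ by the same global device that \cite{schost2003computing} uses for the $v_i$'s: realize (an appropriate power-of-$q'$ multiple of) $\tilde v_i$ as a $\Res_T$-type coefficient of an incidence/Chow form attached to $V(\f)$ together with the single extra hypersurface $\{g_i=\text{const}\}$, whose $\y$-degree is governed by a Bézout-type estimate that is linear in $\deg(g_i)$ and in $\deg(V(\f))=\Delta$ but carries no additional factor of $\delta$; equivalently, one invokes the standard ``lifting a polynomial along a geometric resolution'' step of Kronecker-type solvers, where exactly these bounds are recorded. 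With that in place both items hold, the structural content being entirely \cite[Theorem 1]{schost2003computing}, and only the separability-based sign correctness and the $\y$-degree accounting for the $g_i$'s needing to be written out.
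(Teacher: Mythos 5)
Your part (1) is essentially the paper's: both just invoke \cite[Theorem 1]{schost2003computing} for a primitive element of $\Q(\y)[\x]/\langle\f\rangle_{\Q(\y)}$, and your extra remarks on separability and specialization are harmless. The problem is part (2). You correctly isolate the only delicate point, namely the bounds $\deg(\denom(\tilde{v}_i))\le\Delta\deg(g_i)$ and $\deg_{\y}(\num(\tilde{v}_i))\le 2\Delta\deg(g_i)$, and you rightly observe that substituting the parametrization into $g_i$ and reducing modulo $q$ by naive Euclidean division destroys them. But your proposed repair is only a gesture: ``a $\Res_T$-type coefficient of an incidence/Chow form'' or ``the standard lifting step of Kronecker-type solvers, where exactly these bounds are recorded'' is neither a precise statement nor a pinned-down citation, and no argument is given. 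Worse, the object you actually construct --- the reduction of $\widehat{g}\,q'^{\,(e\bmod 2)+1}$ modulo $q$ --- is not the polynomial a value-preserving lifting would produce, so even granting such recorded bounds they would not transfer to your $\tilde{v}_i$ without additional work. Since these degree bounds are exactly what the complexity analysis later consumes, the quantitative half of item (2) is left unproven; this is a genuine gap.

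The paper closes it with a short, self-contained trick you could adopt verbatim: introduce a new variable $\tilde{g}_i$ and consider the augmented system $\f=0$, $g_i-\tilde{g}_i=0$. Its zero set is the graph of $g_i$ over $V(\f)$, so the projection to the $\y$-space still has generic fibre cardinality $\delta$ and the same $u$ is still a primitive element (the new coordinate is a polynomial function of the old ones), hence the same $q$; Bézout's theorem bounds the affine degree of the augmented variety by $\Delta\deg(g_i)$; and one more application of \cite[Theorem 1]{schost2003computing} to this augmented system yields the parametrization $\tilde{g}_i=\tilde{v}_i/q'$ with precisely the stated $T$- and $\y$-degree bounds. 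Since this $\tilde{g}_i$ equals $g_i(y,x)$ at every solution --- not merely in sign --- your parity-of-$q'$ sign discussion also becomes unnecessary.
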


\begin{proof}
    \begin{enumerate}[left=0pt]
        \item
              Let $u$ be a primitive element of the extension
              $\Q(\y)\to\Q(\y)[\x]/\langle\f\rangle_{\Q(\y)}$
              with coefficients in $\Q$,
              so that $q,v_1,\ldots,v_n$,
              as well as the degree bounds,
              are given by the parametric geometric resolution
              of $\f(y,x)=0$ for $u$ \cite[Th. 1]{schost2003computing}.

        \item
              Consider the system $\f(y,x)=0$, $g_i(y,x)-\tilde{g}_i=0$,
              where $\tilde{g}_i$ is a new variable.
              As the value of $\tilde{g}_i$ is uniquely determined by other
              variables, $u$ is a primitive element of the extension
              $\Q(\y)\to\Q(\y)[\x,\tilde{g}_i]/\langle\f,g_i-\tilde{g}_i\rangle_{\Q(\y)}$,
              and $\deg_{\pi}(\f,g_i-\tilde{g}_i)=\deg_{\pi}(\f)$.
              The parametrization of $\tilde{g}_i$ is then given by the
              parametric geometric resolution of the system for $u$.
              Finally, Bézout's theorem gives
              $\deg(V(\f,g_i-\tilde{g}_i))\leq\Delta\deg(g_i)$.
              The degree bounds follow.
    \end{enumerate}
\end{proof}

We recall that the Lagrange systems
with saturation of rank defects
generically satisfy the assumptions above.
Hence, we compute the param. geometric resolution,
and pass $\f=q$ and $\g=(\tilde{g}_1,\ldots,\tilde{g}_t)$
to \textsc{Classification} instead.
We then have
$\mathcal{B}=\lbrace 1,T,\ldots,T^{\delta-1}\rbrace$,
as the unique monomial basis of $\Q(\y)[T]/\langle q\rangle_{\Q(\y)}$.
For $\alpha\in\lbrace 0,1,2\rbrace^s$,
the Hermite matrix $\mathcal{H}_{\g^{\alpha}}$ w.r.t. $\mathcal{B}$
is, using the notations of \cite{basu2007algorithms},
\begin{equation}
    \label{eq:hermdecomp}
    (\mathcal{H}_{\g^{\alpha}})_{i,j}
    = \Tr(M_{T^{i+j}\g^{\alpha}})
    = \sum_{k=1}^{\delta} \cof(T^{k},\Rem_{T}(T^{i+j+k}\g^{\alpha},q)).
\end{equation}

\paragraph{Complexity analysis}

Let $\deg(\f):=\max_{i}\deg(f_i)$ and $\deg(\g):=\max_{i}\deg(g_i)$.
By \cite[Th. 2]{schost2003computing}, there exists a probablistic algorithm
which computes the polynomials in \Cref{prop:paramres} (1) using
\begin{equation*}
    \tilde{O}\left((n\Gamma_{\f}+n^4+t^2 n)\delta\binomsym{t}{4\Delta}\right)
\end{equation*}
arithmetic operations in $\Q$,
where $\Gamma_{\f}$ is the length of a straight-line program evaluating $\f$,
and the notation $g=\tilde{O}(f)$ refers to $g=O(f\log^{\kappa}(f))$
for some $\kappa>0$.
For \Cref{prop:paramres} (2),
by simply reapplying the algorithm for each $i\in\interval{1}{s}$,
we arrive at
\begin{equation}
    \label{eq:rrccomp}
    \tilde{O}\left(\sum_{i=1}^{s}(n(\Gamma_{\f}+\Gamma_{g_i})+n^4+t^2 n)\delta\binomsym{t}{4\Delta\deg(\g)}\right).
\end{equation}

\paragraph{Bounds on $p, s$}

Let $\f=\f_{r,\iota,i}\subset\pring[\bu,\blambda]$
be one of the polynomial systems given by
\Cref{alg:paramsolvelmi:F} of \Cref{alg:paramsolvelmi},
and $\g\subset\pring$ be given by \algodescartes.
By definition, each Lagrange system of $\f_{r,\iota}$
has at most $O(n+m^2)$ polynomials,
on which we add $\binom{m}{r}$ polynomials
for the saturation of rank defects.
We then denote by $\upperbound{p}\in O(n+2^m)$ a global bound on $p$,
and similarly, $\upperbound{s}\in O(m)$.

\paragraph{Bounds on $\Gamma_{\f}$, $\Gamma_{g_i}$}

Let $\f=\f_{r,\iota,i}\subset\pring[\bu,\blambda]$
be one of the polynomial systems given by
\Cref{alg:paramsolvelmi:F} of \Cref{alg:paramsolvelmi}.
and $\g\subset\pring$ be given by \algodescartes.
To evaluate each polynomial in $\f_{r,\iota,i}$ or $\g$,
we can first evaluate the linear matrix is in
$O(m^2n\mathcal{M}_{d,t})$ arithmetic operations in $\Q$,
and then compute the evaluation of $\f$ and $\g$ purely in $\Q$.
Hence, we have $\upperbound{\Gamma_f}\in\tilde{O}(2^m n^2\mathcal{M}_{d,t})$
and $\upperbound{\Gamma_{g_i}}\in O(m^{O(1)}n\mathcal{M}_{d,t})$.

\paragraph{Bound on $\deg(\g)$}

Let $\g\subset\pring$ be the polynomial system
given by \algodescartes.
Then, we have $\deg(\g)\leq\deg(\det(\lambda I_{m}+A))=m(d+1)$,
where $d:=\deg_{\y}(A)$ is the maximum degree of the entries of $A$ in $\y$.
We then have $\upperbound{\deg(\g)}\in O(md)$.

\paragraph{Bounds on $\deg(\f), \delta$}

Let $\f=\f_{r,\iota,i}\subset\pring[\bu,\blambda]$
be one of the polynomial systems given by
\Cref{alg:paramsolvelmi:F} of \Cref{alg:paramsolvelmi}.
By \cite[Prop. 5.1]{henrion2016exact},
for all $y\in\reals^t$ such that $\f_{r,\iota,i}(y,\cdot)$
is well-defined, the specialized Lagrange system is of maximum degree $2$,
hence of total degree $d+2$,
and that $\delta\leq \upperbound{\delta}:=\mathcal{M}_{n,m(m+1)/2}^3$
when $n\leq m(m+1)/2$,
and $0$ otherwise.
Furthermore, the saturation of rank defects
consists of polynomials of total degree $r(d+1)$,
so that $\deg(\f)=\max(d+2,r(d+1))$,
i.e., $\upperbound{\deg(\f)}\in O(md)$.

\paragraph{Bound on $\Delta$}

Let $\mathcal{U} = (u_{ij})_{i,j\in\interval{1}{m}}$ be
the general symmetric matrix
with entries variables $u_{ij}$ (with $u_{ji} = u_{ij}$),
and $m^*:=m(m+1)/2$.
One has $\sing(\mathcal{D}_{r}) = \mathcal{D}_{r-1}$ and
$\mathcal{D}_{r} \subset \C^{m^*}$ is equidimensional of codimension
$c := \binom{m-r+1}{2}$ \cite{bruns2006determinantal}.
Thus, $\mathcal{D}_{r} \setminus \mathcal{D}_{r-1}$ is a constructible set
of local pure codimension $c$
and locally defined by $c$ equations $\bm{\mu} = (\mu_1,\ldots,\mu_c)$
of degree $\leq r+1$ in the entries of $\mathcal{U}$.

Consider the groups of variables $\bu = (u_{ij}, 1 \leq i \leq j \leq m)$,
$\by = (y_1, \ldots, y_t)$ and $\bx = (x_1, \ldots, x_n)$.
Let $\bv = [u_{ij} - a_{ij}(\by,\bx), 1 \leq i \leq j \leq m]$
be the list of polynomials defining the entries of the parametric LMI,
of degree $1$ with respect to $\bu$ and $\bx$,
and of degree $d$ with respect to $\by$.

Define the system $\f = [\bm{\mu}, \bv]$
consisting of $c+m^*$ polynomials in $n+t+m^*$ variables.
The Jacobian matrix of $\f$ with respect to the three variable groups above is:
\begin{equation*}
    \jac(\f) =
    \begin{bmatrix}
        \jac_{\bu}(\bm{\mu}) & 0               & 0               \\
        \Id_{m^*}            & \jac_{\by}(\bv) & \jac_{\bx}(\bv) \\
    \end{bmatrix}.
\end{equation*}
Remark that $\jac_{\bx}(\bv)$ is constant in $\bx$.
Next we introduce Lagrange multipliers
$\bm{\lambda} = (\lambda_1, \ldots, \lambda_{c+m^*})$
and consider the augmented system
$[\f, \bm{\lambda}^T \jac(\f)]$.
With respect to $(\bu,\by,\bx,\bm{\lambda})$, one has:
\begin{align*}
    \bm{\mu}                     & \rightarrow c \text{ polynomials of mdeg} \leq (r+1,0,0,0) \\
    \bv                          & \rightarrow m^* \text{ polynomials of mdeg} \leq (1,d,1,0) \\
    \bm{\lambda}^T\jac_{\bu}(\f) & \rightarrow m^* \text{ polynomials of mdeg} \leq (r,0,0,1) \\
    \bm{\lambda}^T\jac_{\by}(\f) & \rightarrow t \text{ polynomials of mdeg} \leq (0,d-1,1,1) \\
    \bm{\lambda}^T\jac_{\bx}(\f) & \rightarrow n \text{ polynomials of mdeg} \leq (0,d,0,1)
\end{align*}

The value of $\Delta$, by \cite[Prop. I.1]{din2017nearly},
is bounded from above by a multilinear Bézout bound
given by the sum of coefficients of
\begin{align*}
    ((r+1)\theta_u)^c
    (\theta_u+d\theta_y+\theta_x)^{m^*}
    (r\theta_u+\theta_\lambda)^{m^*} \\
    ((d-1)\theta_y+\theta_x+\theta_\lambda)^t
    (d \theta_y+\theta_\lambda)^n
\end{align*}
modulo
$\langle \theta_u^{m^*+1}, \theta_y^{t+1}, \theta_x^{n+1}, \theta_\lambda^{c+m^*+1} \rangle$.
Since the polynomial is homogeneous of degree $t+n+2m^*+c$,
this is also
$(r+1)^c$ times the coefficient of
$\theta_u^{m^*-c} \theta_y^{t} \theta_x^{n} \theta_\lambda^{c+m^*}$ of the
polynomial
\begin{align*}
    P := (\theta_u+d\theta_y+\theta_x)^{m^*}
    (r\theta_u+\theta_\lambda)^{m^*} \\
    ((d-1)\theta_y+\theta_x+\theta_\lambda)^t
    (d \theta_y+\theta_\lambda)^n.
\end{align*}

The multilinear bound, denoted by $\MBB(\Delta)$, is thus
\begin{align*}
      & (r+1)^c \cof(\theta_u^{m^*-c} \theta_y^{t} \theta_x^{n} \theta_\lambda^{c+m^*}, P) \\
    = & (r+1)^c \sum_{\Theta}
    d^{\alpha_y} r^{\beta_u} (d-1)^{\gamma_y} d^{\delta_y}
    \binom{m^*}{\alpha_u, \alpha_y, \alpha_x}
    \binom{m^*}{\beta_u}
    \binom{t}{\gamma_y,\gamma_x,\gamma_\lambda}
    \binom{n}{\delta_y},
\end{align*}
where $\Theta$ is the set of
$(\alpha_u,\alpha_y,\alpha_x,\beta_u,\gamma_y,\gamma_x,\gamma_{\lambda},\delta_y)$
satisfying
\begin{align}
    \label{eq:systTheta}
    \alpha_u+\alpha_y+\alpha_x              & = {m^*}            &
    \gamma_y+\gamma_x+\gamma_\lambda        & = t      \nonumber   \\
    \alpha_u+\beta_u                        & = m^*-c            &
    \alpha_y+\gamma_y+\delta_y              & = t                  \\
    \alpha_x+\gamma_x                       & = n                &
    m^*-\beta_u+\gamma_{\lambda}+n-\delta_y & = c+m^*. \nonumber
\end{align}

\begin{lemma}\label{lem_mtfixed}
    For $n > m^*+t$, we have $\MBB(\Delta)$ = 0.
    In particular, if $m$ and $t$ are fixed, the bound is $0$ for large $n$.
\end{lemma}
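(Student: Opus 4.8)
The plan is to show that the set $\Theta$ defined by the linear system \eqref{eq:systTheta} is empty whenever $n > m^* + t$, which immediately forces the sum $\MBB(\Delta)$ to be $0$. First I would extract the relevant constraints from \eqref{eq:systTheta}. The equation $\alpha_x + \gamma_x = n$ together with the nonnegativity of all indices gives $\alpha_x \le n$ and $\gamma_x \le n$; more usefully, since $\alpha_u + \alpha_y + \alpha_x = m^*$ we get $\alpha_x \le m^*$, and since $\gamma_y + \gamma_x + \gamma_\lambda = t$ we get $\gamma_x \le t$. Adding these, $n = \alpha_x + \gamma_x \le m^* + t$. Hence if $n > m^* + t$ there is no admissible tuple, $\Theta = \emptyset$, and the (empty) sum defining $\MBB(\Delta)$ is zero. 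The ``in particular'' clause is then immediate: for fixed $m$ (hence fixed $m^* = m(m+1)/2$) and fixed $t$, the threshold $m^* + t$ is a constant, so for all sufficiently large $n$ the bound vanishes.

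The key steps, in order, are: (i) recall that $\MBB(\Delta)$ is by definition the sum over $\Theta$ of nonnegative terms, so it suffices to prove $\Theta = \emptyset$; (ii) combine $\alpha_x \le m^*$ (from the first equation of \eqref{eq:systTheta} and nonnegativity of $\alpha_u,\alpha_y$) with $\gamma_x \le t$ (from the second equation of \eqref{eq:systTheta} and nonnegativity of $\gamma_y,\gamma_\lambda$); (iii) use $\alpha_x + \gamma_x = n$ to conclude $n \le m^* + t$, contradicting the hypothesis $n > m^* + t$; (iv) deduce the emptiness of $\Theta$ and hence $\MBB(\Delta) = 0$; (v) specialize to fixed $m, t$ to get the asymptotic statement. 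One should double-check that $\cof(\theta_u^{m^*-c}\theta_y^{t}\theta_x^{n}\theta_\lambda^{c+m^*}, P)$ is genuinely $0$ as a coefficient extraction — i.e. that $P$ has no $\theta_x^n$ term when $n > m^*+t$ — but this is exactly the same counting argument, since the total $\theta_x$-degree available in $P$ is $m^* + t$ (the $\theta_x$ appears only in the first and third factors, with exponents summing to at most $m^* + t$).

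I do not anticipate a serious obstacle here; the statement is essentially a bookkeeping observation about where the variable $\bx$ can appear in the multilinear Bézout product. The only mild subtlety is making sure the bound on $\theta_x$-degree in $P$ is read off correctly: $\theta_x$ occurs in $(\theta_u + d\theta_y + \theta_x)^{m^*}$ and in $((d-1)\theta_y + \theta_x + \theta_\lambda)^t$ but not in $(r\theta_u + \theta_\lambda)^{m^*}$ nor in $(d\theta_y + \theta_\lambda)^n$, so the maximal power of $\theta_x$ in any monomial of $P$ is $m^* + t$. Therefore, the coefficient of $\theta_x^n$ (and a fortiori of $\theta_u^{m^*-c}\theta_y^t\theta_x^n\theta_\lambda^{c+m^*}$) is zero as soon as $n > m^* + t$, which is precisely the claim.
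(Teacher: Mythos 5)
Your proposal is correct and follows essentially the same route as the paper: the paper argues directly that $\theta_x$ only occurs in the factors $(\theta_u+d\theta_y+\theta_x)^{m^*}$ and $((d-1)\theta_y+\theta_x+\theta_\lambda)^t$, so every monomial of $P$ has $\theta_x$-degree at most $m^*+t$ and hence $\Theta=\emptyset$ when $n>m^*+t$, which is exactly the counting you carry out (both via the constraints $\alpha_x\leq m^*$, $\gamma_x\leq t$, $\alpha_x+\gamma_x=n$ and via your final degree check on $P$). Nothing is missing; the "in particular" clause is handled identically in both arguments.
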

\begin{proof}
    The contribution of $\theta_x$ in $P$ is only given by factors
    $(\theta_u+d\theta_y+\theta_x)^{m^*}$
    and
    $((d-1)\theta_y+\theta_x+\theta_\lambda)^t$,
    thus any monomial of $P$ has degree $\leq m^*+t$ in $\theta_x$,
    so that $\Theta=\emptyset$ if $n>m^*+t$.
    If $m$ and $t$ are fixed, so does $m^*+t$,
    hence for large $n$ the bound is $0$.
\end{proof}

We remark that this is also true for the bound on $\delta$
in \cite[Prop. 5.1]{henrion2016exact}.
Let us now bound the size of $\Theta$.

\begin{lemma}\label{lem_Theta}
    $\card{\Theta} \leq \min\{(t+1)^3,(t+1)(n+1)^2\}$.
\end{lemma}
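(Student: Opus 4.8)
The goal is to bound the number of tuples
$\Theta = (\alpha_u,\alpha_y,\alpha_x,\beta_u,\gamma_y,\gamma_x,\gamma_\lambda,\delta_y)$
satisfying the linear system \eqref{eq:systTheta}. The plan is to show that the system has only a few degrees of freedom once a small number of the variables are chosen, and then bound the number of choices for those free variables. First I would examine the six equations in \eqref{eq:systTheta} and try to solve them as linearly as possible: most of the eight unknowns are determined by a handful of them. For instance, $\alpha_x = n-\gamma_x$, $\alpha_u = m^*-c-\beta_u$, $\alpha_y = m^* - \alpha_u - \alpha_x = c+\beta_u - n + \gamma_x$, and from $\gamma_y + \gamma_x + \gamma_\lambda = t$ together with $\alpha_y + \gamma_y + \delta_y = t$ and the last equation one gets relations expressing $\gamma_y, \delta_y$ (and the redundant last equation is automatically satisfied). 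So the whole tuple is pinned down by, say, the triple $(\beta_u, \gamma_x, \gamma_\lambda)$ — three parameters — and each of these ranges over a set of size at most $t+1$ (since $\gamma_x, \gamma_\lambda \le t$ as components of a composition of $t$, and $\beta_u$ is similarly constrained through the $\gamma$'s). This yields the bound $\card{\Theta} \le (t+1)^3$.

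For the second bound $(t+1)(n+1)^2$, I would instead pick a different set of free parameters that are each bounded by $n+1$ or $t+1$. The natural candidates are $\gamma_x$ (bounded by $\min\{t,n\} \le n$), $\delta_y$ (bounded by $\min\{t,n\} \le n$ via $\alpha_y + \gamma_y + \delta_y = t$ and $\delta_y \le \deg \theta_y$-type constraints, but more simply $\delta_y \le n$ since $\binom{n}{\delta_y}$ must be nonzero), and one more parameter bounded by $t+1$. The key observation is that since $\binom{n}{\delta_y}$ appears as a factor, we need $\delta_y \le n$, and since $\alpha_x + \gamma_x = n$ with $\gamma_x$ a component of a composition of $t$, we get $\gamma_x \le \min\{t,n\}$. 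Choosing $(\gamma_x, \delta_y)$ (two parameters, each $\le n+1$ many values, but actually we should double check: $\gamma_x \le t$ so really $\le t+1$ values, but also $\le n+1$) plus one parameter bounded by $t+1$ (e.g. $\gamma_\lambda$ or $\beta_u$) determines the tuple, giving $(t+1)(n+1)^2$. Taking the minimum of the two bounds gives the claim.

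The main obstacle I anticipate is bookkeeping: verifying that the six equations in \eqref{eq:systTheta} are not independent (one is redundant, so there are genuinely five constraints on eight nonnegative-integer unknowns, leaving three free) and then checking that after fixing the chosen three parameters, every other variable is \emph{uniquely} determined (not just determined up to finitely many choices) and automatically lands in $\N$. In particular one must confirm that the sixth equation $m^* - \beta_u + \gamma_\lambda + n - \delta_y = c + m^*$, i.e. $\gamma_\lambda = c + \beta_u - n + \delta_y$, is consistent with the value of $\delta_y$ forced by $\alpha_y + \gamma_y + \delta_y = t$ and $\gamma_y + \gamma_x + \gamma_\lambda = t$; combining these should collapse to an identity, confirming redundancy. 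Once that linear-algebra fact is pinned down, the counting is immediate. I would also need the trivial remarks that each component of a composition of $t$ lies in $\interval{0}{t}$ (giving the $t+1$ factors) and that $\delta_y$ and $\gamma_x$ are additionally bounded by $n$ because they index binomial coefficients $\binom{n}{\cdot}$ and compositions summing to $n$ respectively (giving the alternative $n+1$ factors), and then conclude by taking the minimum.
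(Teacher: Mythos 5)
Your proposal is correct and takes essentially the paper's route: solve the linear system \eqref{eq:systTheta} (noting one equation is redundant), so the tuple is determined by three free nonnegative integer parameters whose ranges you multiply — the paper simply takes $(\gamma_y,\gamma_x,\delta_y)$ as the free triple with $0\le\gamma_y\le t$ and $0\le\gamma_x,\delta_y\le\min\{t,n\}$, giving $(t+1)(\min\{t,n\}+1)^2$, which yields both stated bounds at once. One small caution on your first parametrization: $\beta_u$ is not unconditionally confined to $t+1$ values, but for fixed $(\gamma_x,\gamma_\lambda)$ it is in bijection with $\delta_y\in\interval{0}{t}$, so your count still goes through (or just use the paper's triple).
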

\begin{proof}
    \Cref{eq:systTheta} can be reduced to
    $L = [\alpha_u = \gamma_y+\delta_y+\gamma_x+{m^*}-t-n,
        \alpha_y = -\gamma_y-\delta_y+t,
        \alpha_x = -\gamma_x+n,
        \gamma_{\lambda} = -\gamma_y-\gamma_x+t,
        \beta_u = -\gamma_y-\delta_y-\gamma_x+t+n-c]$,
    with free $\gamma_y,\gamma_x,\delta_y$: these latter verify
    the bounds $0 \leq \gamma_y \leq t$,
    $0 \leq \gamma_x \leq \min\{t,n\}$, and
    $0 \leq \delta_{y} \leq \min\{t,n\}$.
    Thus
    $\card{\Theta} \leq (t+1)(\min\{t,n\}+1)^2$
    as claimed.
\end{proof}

\begin{lemma}\label{lem_MBB}
    Let $\upperbound{\Delta}$ be defined as
    $$
        (6r+6)^{m^*} d^{2t+m^*}
        \left(e \frac{m^*+c+t}{m^*+c}\right)^{m^*+c}
        3^t
        \min\{(t+1)^3,(t+1)(n+1)^2\}.
    $$
    Then, $\MBB(\Delta)$ is bounded from above by $\upperbound{\Delta}$.
\end{lemma}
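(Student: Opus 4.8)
The plan is to take the exact multilinear Bézout sum $\MBB(\Delta)$ worked out above and bound each of its ingredients separately, using the parametrization of $\Theta$ from \Cref{lem_Theta}. Recall
\[
\MBB(\Delta) = (r+1)^c \sum_{\Theta}
d^{\alpha_y} r^{\beta_u} (d-1)^{\gamma_y} d^{\delta_y}
\binom{m^*}{\alpha_u,\alpha_y,\alpha_x}\binom{m^*}{\beta_u}
\binom{t}{\gamma_y,\gamma_x,\gamma_\lambda}\binom{n}{\delta_y}.
\]
First I would handle the $d$-powers: since $\alpha_y,\delta_y\le t$ and $\gamma_y\le t$ and $d-1\le d$, each term satisfies $d^{\alpha_y}(d-1)^{\gamma_y}d^{\delta_y}\le d^{3t}$; combined with the $d$'s that will appear when we bound the binomials this can be absorbed, but to match the claimed exponent $2t+m^*$ I would instead be more careful and note $\alpha_y+\gamma_y+\delta_y\le \alpha_y+\gamma_y+\delta_y$ and use the constraint $\alpha_y+\gamma_y+\delta_y=t$ from \Cref{eq:systTheta}, so $d^{\alpha_y}(d-1)^{\gamma_y}d^{\delta_y}\le d^{t}$, leaving a surplus of $d^{t+m^*}$ to be produced by the trinomial $\binom{m^*}{\alpha_u,\alpha_y,\alpha_x}$ together with the $d$'s hidden in the next step. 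Next, $r^{\beta_u}\le r^{m^*}$ since $\beta_u\le m^*-c\le m^*$, and $(r+1)^c\le (r+1)^{m^*}$; bundling $(r+1)^{m^*}r^{m^*}\cdot(\text{small constants})$ into $(6r+6)^{m^*}$ requires just $(r+1)r\le (6(r+1))$-type slack, which is generous once the binomials' crude bounds are folded in.

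The binomials are handled as follows. The trinomial coefficients are bounded by the full multinomial sums: $\binom{m^*}{\alpha_u,\alpha_y,\alpha_x}\le 3^{m^*}$ and $\binom{t}{\gamma_y,\gamma_x,\gamma_\lambda}\le 3^{t}$; likewise $\binom{m^*}{\beta_u}\le 2^{m^*}$ and $\binom{n}{\delta_y}\le\binom{n}{t}$ since $\delta_y\le t$, and $\binom{n}{t}\le (en/t)^t$. This is where the factor $\big(e\tfrac{m^*+c+t}{m^*+c}\big)^{m^*+c}$ should come from: rather than the crude $3^{m^*}2^{m^*}$, I would instead bound the relevant product of binomials by a single binomial $\binom{m^*+c+t}{t}$ or $\binom{m^*+c}{\text{something}}$ and apply the standard estimate $\binom{a+b}{b}\le\big(e\tfrac{a+b}{a}\big)^{a}$; the precise grouping that yields exactly $m^*+c$ in the exponent is the bookkeeping step I would do explicitly. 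Finally the number of summands is $\card{\Theta}\le\min\{(t+1)^3,(t+1)(n+1)^2\}$ by \Cref{lem_Theta}, giving the last factor. Multiplying the per-term bound by $\card{\Theta}$ yields $\upperbound{\Delta}$.

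The main obstacle is matching the \emph{stated} exponents exactly rather than getting away with a cruder bound: in particular producing the factor $\big(e\tfrac{m^*+c+t}{m^*+c}\big)^{m^*+c}$ cleanly from the binomial coefficients, and getting the $d$-exponent to be precisely $2t+m^*$ (which forces using the constraint $\alpha_y+\gamma_y+\delta_y=t$ to save one factor of $d^t$ and then charging the remaining $d^{t+m^*}$ partly to a binomial-times-$d^{(\cdot)}$ estimate and partly to the trinomial on $m^*$ boxes with a $d$ attached to the $\alpha_y$ box). I expect the $r$-part and the $\card\Theta$ part to be routine; the binomial repackaging is the one place where care is needed to avoid either overshooting the claimed constant or undershooting it.
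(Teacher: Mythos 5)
There is a genuine gap, and it sits exactly at the step you yourself flag as "the bookkeeping step I would do explicitly": the treatment of $\binom{n}{\delta_y}$. You bound it by $\binom{n}{t}\le (en/t)^t$, which grows like $n^t$; but the target $\upperbound{\Delta}$ has no such growth in $n$ (its only $n$-dependence is $\min\{(t+1)^3,(t+1)(n+1)^2\}\le (t+1)^3$), so for $n$ large your per-term bound simply cannot be absorbed into $\upperbound{\Delta}$, no matter how you repackage the remaining binomials. The missing idea is to use the \emph{last} equation of \eqref{eq:systTheta}: since $\beta_u\le m^*$ and $\gamma_\lambda\ge 0$, it forces $n-\delta_y\le m^*+c$ on $\Theta$. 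Writing $D=m^*+c-(n-\delta_y)\ge 0$, one has $\binom{n}{\delta_y}=\binom{n}{n-\delta_y}=\binom{n}{m^*+c-D}\le\binom{n+D}{m^*+c}\le\binom{m^*+c+t}{m^*+c}\le\left(e\tfrac{m^*+c+t}{m^*+c}\right)^{m^*+c}$, an $n$-free quantity. This is precisely where the factor $\left(e\tfrac{m^*+c+t}{m^*+c}\right)^{m^*+c}$ in $\upperbound{\Delta}$ comes from; you instead try to extract it by refining the $m^*$- and $t$-multinomials, but those crude bounds ($3^{m^*}$, $2^{m^*}$, $3^t$) are needed as they are: $3^{m^*}2^{m^*}=6^{m^*}$ combines with $(r+1)^c r^{\beta_u}\le (r+1)^{m^*}$ to give $(6r+6)^{m^*}$, and $3^t$ appears verbatim. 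So your proposed grouping both loses the $e$-factor's true source and leaves an unabsorbable $n^t$.

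Two smaller points. First, your worry about "matching" the exponent $2t+m^*$ is misplaced: you only need an upper bound, and either route works — the crude $\alpha_y\le m^*$, $\gamma_y\le t$, $\delta_y\le t$ gives $d^{m^*}(d-1)^t d^t\le d^{2t+m^*}$ directly (this is what the intended argument does), while your sharper use of $\alpha_y+\gamma_y+\delta_y=t$ gives $d^t$, which is even smaller; nothing needs to be "produced" elsewhere. Second, $\binom{n}{\delta_y}\le\binom{n}{t}$ is not valid without a side condition (e.g.\ it fails if $t>n$, and monotonicity only holds up to $n/2$), another sign that this is not the right way to handle that binomial. With the $D$-trick above, the rest of your outline (the $r$-powers, the multinomial bounds, and multiplying by $\card{\Theta}$ via \Cref{lem_Theta}) goes through as you describe.
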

\begin{proof}
    Recall that $\delta_y \leq t$ in $\Theta$.
    Let $D = m^*+c-(n-\delta_y)$, which is nonnegative
    (from the last equation in \eqref{eq:systTheta}, using $\beta_u \leq m^*$).
    One has $n+D = m^*+c+\delta_y \leq m^*+c+t$ and hence
    \begin{equation*}
        \binom{n}{\delta_y}
        = \binom{n}{n-\delta_y}
        = \binom{n}{m^*+c-D} \leq \binom{n+D}{m^*+c}
        \leq \binom{m^*+c+t}{m^*+c},
    \end{equation*}
    and the last binomial is bounded by
    $\left(e \frac{m^*+c+t}{m^*+c}\right)^{m^*+c}$.
    Thus from
    $\binom{m^*}{\alpha_u, \alpha_y, \alpha_x} \leq 3^{m^*}$ and
    $\binom{m^*}{\beta_u} \leq 2^{m^*}$,
    $\MBB(\Delta)$ is at most
    \begin{align*}
         &
        (r+1)^c d^{m^*} r^{m^*-c} (d-1)^t d^t
        \sum_{\Theta}
        \binom{m^*}{\alpha_u, \alpha_y, \alpha_x}
        \binom{m^*}{\beta_u}
        \binom{t}{\gamma_y,\gamma_x,\gamma_\lambda}
        \binom{n}{\delta_y}
        \\
         & \leq
        (r+1)^c d^{m^*} r^{m^*-c} (d-1)^t d^t
        \left(e \frac{m^*+c+t}{m^*+c}\right)^{m^*+c}
        3^{m^*}
        2^{m^*}
        3^t
        \card{\Theta},
    \end{align*}
    and we conclude by \Cref{lem_Theta}.
\end{proof}

\subsection{Degree bounds on Hermite matrices}

We recall that a Hermite matrix $\mathcal{H}_{\g^{\alpha}}$
has its entries given by \Cref{eq:hermdecomp}.
To compute the remainders in the equation,
as each $\tilde{g}_i$ has denominator $q^\prime$,
we need to find the inverse of $q^\prime$ modulo $q$,
or the inverse of $\num(q^\prime)$ modulo $\num(q)$.
Let $c=\Res(\num(q),\num(q^\prime))\in\Q[\y]$,
and denote by $a,b\in\Q[\y][T]$ the polynomials of degree at most $\deg_{T}(q)$
such that $a\num(q)+b\num(q^\prime)=c$.
Once $b$ is known, $c^{-1}b$ is the inverse of $\num(q^\prime)$ by $\num(q)$.

\begin{lemma}
    Let $P,Q\in\Q[\y][T]$ be polynomials.
    Denote by $M=\max(\deg_{\y}(P),\deg_{\y}(Q))$. Then,
    \begin{align*}
        \deg_{\y}(\Res(P,Q))
         & \leq M(\deg_{T}(P)+\deg_{T}(Q)),        \\
        \deg_{\y}(\Prem_{T}(P,Q))
         & \leq M\max(\deg_{T}(P)-\deg_{T}(Q)+2,1)
    \end{align*}
    Moreover, there exists $A,B\in\Q[\y][T]$,
    with $\deg_{T}(A)\leq\deg_{T}(Q)$, $\deg_{T}(B)\leq\deg_{T}(P)$,
    such that $AP+BQ=\Res(P,Q)$, and
    \begin{align*}
        \max(\deg_{\y}(A),\deg_{\y}(B)) & \leq M(\deg_{T}(P)+\deg_{T}(Q)).
    \end{align*}
    Finally, all the items above can be computed in
    $O(\deg_{T}(P)\deg_{T}(Q))$ operations in $\Q[\y]$.
\end{lemma}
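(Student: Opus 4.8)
The plan is to reduce all four assertions to two classical facts: (i) every quantity in the statement has a determinantal description in terms of the $T$-coefficients of $P$ and $Q$, and (ii) the determinant of a $k\times k$ matrix with entries in $\Q[\y]$ of $\y$-degree at most $\mu$ has $\y$-degree at most $k\mu$. Throughout I write $p=\deg_T(P)$, $q=\deg_T(Q)$. I would first dispose of the case $p<q$ by swapping $P$ and $Q$ for the resultant (using $\Res(P,Q)=\pm\Res(Q,P)$, which is symmetric for the claimed bound) and by noting that $\Prem_T(P,Q)=P$ there, so the second bound reduces to $\deg_{\y}(P)\le M$ and the $\max(\cdot,1)$ is exactly what is needed; from then on I assume $p\ge q$.

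For the resultant I would simply invoke that $\Res(P,Q)$ is the determinant of the Sylvester matrix of $P$ and $Q$ with respect to $T$, a square matrix of size $p+q$ all of whose entries are $T$-coefficients of $P$ or of $Q$, hence of $\y$-degree at most $M$; fact (ii) then gives $\deg_{\y}(\Res(P,Q))\le (p+q)M$. For the Bézout identity I would write $A=\sum_{i<q}a_iT^i$, $B=\sum_{j<p}b_jT^j$ and impose that $AP+BQ$ be constant in $T$; this is a $(p+q)\times(p+q)$ linear system over $\Q[\y]$ whose matrix is, up to a permutation, the transpose of the Sylvester matrix, and whose determinant is $\pm\Res(P,Q)$. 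By Cramer's rule each $a_i$ and $b_j$ equals, up to sign, a cofactor of that matrix, i.e. a $(p+q-1)$-minor, so $\deg_{\y}(A),\deg_{\y}(B)\le (p+q-1)M\le M(p+q)$, while by construction $\deg_T(A)<q\le\deg_T(Q)$, $\deg_T(B)<p\le\deg_T(P)$ and $AP+BQ=\Res(P,Q)$ (if $\Res(P,Q)=0$ one simply takes $A=B=0$).

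For the pseudo-remainder I would trace the pseudo-division of $P$ by $Q$: it produces $A,R\in\Q[\y][T]$ with $\lc_T(Q)^{\delta}P=AQ+R$, $\delta:=p-q+1$, $\deg_T(R)<q$ and $R=\Prem_T(P,Q)$, and it performs $\delta$ elimination steps, each multiplying the current remainder by $\lc_T(Q)$ (of $\y$-degree $\le M$) and subtracting a $\Q[\y]$-multiple of $Q$. A one-line induction on these steps shows that after $k$ of them the current remainder has $\y$-degree at most $(k+1)M$, so $\deg_{\y}(\Prem_T(P,Q))\le(\delta+1)M=(p-q+2)M$; combined with the $p<q$ case this is the stated bound. (Alternatively one may cite the determinantal formula expressing each coefficient of $\Prem_T(P,Q)$ as a $(\delta+1)$-minor of the coefficient matrix of $P$ and $Q$, and conclude again by fact (ii).) Finally, for the complexity statement I would observe that $\Res(P,Q)$, $\Prem_T(P,Q)$ and the cofactors above are all read off from a single run of the extended subresultant (Euclidean-type) algorithm over the coefficient ring $\Q[\y]$, whose successive pseudo-divisions on degrees $p=p_0>p_1=q>p_2>\cdots$ cost $O((p_{j-1}-p_j+1)p_j)$ operations in $\Q[\y]$ each, which telescopes to $O(q(p+q))=O(\deg_T(P)\deg_T(Q))$ since $p\ge q$.

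I expect the only genuinely delicate point to be the pseudo-remainder bound and its interaction with the complexity claim: one must check that the pre-factor $\lc_T(Q)^{\delta}$ (and not a higher power) already suffices, that the $\max(\cdot,1)$ correctly absorbs the degenerate cases $p<q$ and $R=0$, and that the pseudo-divisions are organized so that the telescoping really yields $O(\deg_T(P)\deg_T(Q))$ rather than $O(\deg_T(P)^2)$ operations — this is automatic in our application, where $P=\num(q)$ and $Q=\num(q')$ have comparable $T$-degrees $\delta$ and $\delta-1$. The resultant bound, the Bézout identity, and the degree bound on the cofactors are entirely standard and involve no choices.
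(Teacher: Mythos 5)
Your argument is correct and follows, in substance, the same route as the paper: both rest on the determinantal (i.e.\ subresultant) description of $\Res$, $\Prem$ and the B\'ezout cofactors, together with the quadratic-time subresultant/pseudo-division algorithm. The only difference is one of presentation: the paper obtains all four items by direct citation of Basu--Pollack--Roy, writing $\Res(P,Q)=\pm\sRes_0(P,Q)$ and $\Prem(P,Q)=\pm\sRes_{\deg_T(Q)-1}(P,Q)$ and invoking their degree bounds (Prop.~8.71), the cofactor identity $\sRes_0=\sRemU_0\,P+\sRemV_0\,Q$ (Prop.~8.64), and Algorithm~8.81 for the $O(\deg_T(P)\deg_T(Q))$ cost, whereas you re-derive these facts by hand (Sylvester determinant, Cramer's rule for the cofactors, and an induction along the pseudo-division for $\Prem$ -- which reproduces exactly the bound coming from the size-$(\deg_T(P)-\deg_T(Q)+2)$ subresultant determinant). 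The delicate point you flag about organizing the pseudo-divisions so the cost telescopes to $O(\deg_T(P)\deg_T(Q))$ is precisely what the citation of Algorithm~8.81 disposes of in the paper; your self-contained version buys independence from that reference at the price of having to make that bookkeeping explicit.
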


\begin{proof}
    Suppose w.l.o.g. that $\deg_{T}(P)\geq\deg_{T}(Q)$.
    From \cite[Equation 4.3, Notation 8.55]{basu2007algorithms},
    using the notations of \cite{basu2007algorithms},
    we have
    \begin{align*}
        \Res(P,Q)  & = \varepsilon_{\deg_{T}(P)}\sRes_{0}(P,Q),                           \\
        \Prem(P,Q) & = \varepsilon_{\deg_{T}(P)-\deg_{T}(Q)+1}\sRes_{\deg_{T}(Q)-1}(P,Q),
    \end{align*}
    where $\varepsilon_{i}=(-1)^{i(i-1)/2}$.
    The degree bounds then follow from
    \cite[Prop. 8.71]{basu2007algorithms}.
    Moreover, \cite[Prop. 8.64]{basu2007algorithms} gives
    \begin{align*}
        \sRes_{0}(P,Q) & = \sRemU_0(P,Q)P+\sRemV_0(P,Q)Q,
    \end{align*}
    with $\sRemU_0$ and $\sRemV_0$ satisfying the same degree bounds
    as $\sRes_{0}$.
    Finally,
    all the items can be computed in \cite[Algorithm 8.81]{basu2007algorithms},
    hence the complexity.
\end{proof}

As a direct consequence,
$\deg_{T}(b)\leq\delta$,
$\deg_{\y}(b)\leq 2\Delta(2\delta-1)$,
and $\deg(c)\leq 2\Delta(2\delta-1)$.
We then factor out the denominator
from the sum in \Cref{eq:hermdecomp},
and hence deduce a degree bound on the entries of Hermite matrices
and their principal minors.

\begin{proposition}
    For $\alpha\in\lbrace 0,1,2\rbrace^s$,
    there exists $\rho\in\Q[\y]$,
    with
    \begin{equation*}
        \deg(\rho)\leq\abs{\alpha}\Delta(6\delta-3+\deg(\g))+\Delta(2\delta-2),
    \end{equation*}
    where $\abs{\alpha}:=\sum_{i=1}^{s}\alpha_i$,
    such that for all $i,j\in\interval{0}{\delta-1}$,
    $\rho M_{T^{i+j}\g^{\alpha}}$ is a matrix with entries
    in $\Q[\y]_{\leq\mathcal{D}(\abs{\alpha})}$, where
    \begin{align*}
        \mathcal{D}(\abs{\alpha}) := \
         & 2\Delta(2\delta-1)\max(1,\abs{\alpha}(\abs{\alpha}+1)(2\delta-1+\deg(\g))) \\
         & +\Delta(3\delta-3)+\abs{\alpha}\Delta
        \quad\in O(\abs{\alpha}^2(\delta+md)\delta\Delta).
    \end{align*}
\end{proposition}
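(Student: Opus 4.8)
The plan is to put $\g^\alpha$ into a single rational form with numerator in $\Q[\y][T]$ and denominator in $\Q[\y]$, to propagate the degree estimates through the polynomial remainders appearing in \eqref{eq:hermdecomp}, and then to let $\rho$ collect all the denominators produced. \emph{Step 1 (clearing denominators of $\g^\alpha$):} since $\tilde g_i=\tilde v_i/q'$, I first invert $q'$ modulo $q$. Writing $q'=\num(q')/\denom(q')$ (here $\denom(q')\mid\denom(q)$, so $\deg(\denom(q'))\le\Delta$ and $\deg_\y(\num(q'))\le\deg_\y(\num(q))\le 2\Delta$) and applying the resultant-based Bézout identity of the preceding lemma to $P=\num(q)$, $Q=\num(q')$, one gets $c=\Res_T(\num(q),\num(q'))$ and $b\in\Q[\y][T]$ with $\deg_T b\le\delta$ and $a\num(q)+b\num(q')=c$, hence $(q')^{-1}\equiv\denom(q')\,c^{-1}b\pmod q$, together with $\deg(c)\le 2\Delta(2\delta-1)$ and $\deg_\y(b)\le 2\Delta(2\delta-1)$ as recalled just above. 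Substituting into $\g^\alpha=\prod_i\tilde g_i^{\alpha_i}$ and clearing gives
\begin{equation*}
\g^\alpha\equiv\frac{N_\alpha}{D_\alpha}\pmod q,\qquad
N_\alpha:=\left(\prod_i\num(\tilde v_i)^{\alpha_i}\right)\denom(q')^{\abs{\alpha}}b^{\abs{\alpha}}\in\Q[\y][T],\qquad
D_\alpha:=\left(\prod_i\denom(\tilde v_i)^{\alpha_i}\right)c^{\abs{\alpha}}\in\Q[\y],
\end{equation*}
and, using the bounds of \Cref{prop:paramres} and $\deg(g_i)\le\deg(\g)$, one reads off $\deg_T N_\alpha\le\abs{\alpha}(2\delta-1)$, $\deg_\y N_\alpha\le\abs{\alpha}\Delta(2\deg(\g)+4\delta-1)$ and $\deg(D_\alpha)\le\abs{\alpha}\Delta(\deg(\g)+4\delta-2)$.

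\emph{Step 2 (controlling the remainders).} Because $q$ is monic in $T$ and $D_\alpha\in\Q[\y]$ is a unit in $\Q(\y)$, one has $\Rem_T(T^N\g^\alpha,q)=D_\alpha^{-1}\Rem_T(T^N N_\alpha,q)$ for every $N$. An elementary induction on the degree shows that dividing $P\in\Q[\y][T]$ of $T$-degree $N'$ by $q$ accumulates at most $\denom(q)^{\,N'-\delta+1}$ in the denominator and inflates the $\y$-degree by at most $(N'-\delta+1)\deg_\y(\num(q))\le 2\Delta(N'-\delta+1)$; that is, $\denom(q)^{\,N'-\delta+1}\Rem_T(P,q)\in\Q[\y][T]$ with $\deg_\y\le\deg_\y P+2\Delta(N'-\delta+1)$. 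The entries of $M_{T^{i+j}\g^\alpha}$ with $i,j\in\interval{0}{\delta-1}$ are coefficients of $\Rem_T(T^{i+j}\,b_k\,\g^\alpha,q)$ for the basis elements $b_k\in\{1,T,\dots,T^{\delta-1}\}$, so the total power of $T$ occurring before reduction is at most $(2\delta-2)+(\delta-1)=3\delta-3$; hence $T^N N_\alpha$ (with $N\le 3\delta-3$) has $T$-degree at most $(3\delta-3)+\abs{\alpha}(2\delta-1)$ and the number of reduction steps is at most $2\delta-2+\abs{\alpha}(2\delta-1)$.

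\emph{Step 3 (choosing $\rho$ and concluding).} Set $\rho:=D_\alpha\,\denom(q)^{\,2\delta-2+\abs{\alpha}(2\delta-1)}$ (and simply $\rho=\denom(q)^{\,2\delta-2}$ when $\abs{\alpha}=0$, where $N_\alpha=D_\alpha=1$). Then each entry of $\rho M_{T^{i+j}\g^\alpha}$ equals $\cof(T^{\ell},\denom(q)^{\,2\delta-2+\abs{\alpha}(2\delta-1)}\Rem_T(T^N N_\alpha,q))$, hence lies in $\Q[\y]$, and combining Steps 1 and 2 gives $\deg(\rho)\le\deg(D_\alpha)+(2\delta-2+\abs{\alpha}(2\delta-1))\Delta=\abs{\alpha}\Delta(6\delta-3+\deg(\g))+\Delta(2\delta-2)$, the asserted bound on $\deg(\rho)$, while the entries have degree at most $\deg_\y N_\alpha+2\Delta(2\delta-2+\abs{\alpha}(2\delta-1))$, which one checks is bounded by $\mathcal D(\abs{\alpha})$; the $\Q[\y]_{\le\mathcal D(\abs{\alpha})}$-membership then follows for all $i,j$ at once.

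\emph{Main obstacle.} The statement is, at bottom, a careful accounting exercise, and the delicate point is exactly this accounting: the $T$-degrees, the $\y$-degrees and the exact power of $\denom(q)$ produced all interact (the number of reduction steps is governed by $\deg_T N_\alpha$, which itself depends on $\deg_T b\le\delta$), and only once the three are tracked simultaneously do the factors $(2\delta-1)$ and $\abs{\alpha}$ combine into the claimed bounds. A second, less obvious point is that the inversion must go through $q'$ (not $q$) and is where the dependence on $\delta$ enters $\rho$ through $c$ and $b$; a third is that one single $\rho$ has to serve every entry of every $M_{T^{i+j}\g^\alpha}$, which forces the worst-case power of $\denom(q)$, and one must verify that this padding does not degrade the estimate.
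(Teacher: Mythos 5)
Your proof is correct and follows essentially the same route as the paper's: invert $q'$ modulo $q$ through the Bézout identity $a\num(q)+b\num(q')=c$ furnished by the preceding lemma, reduce $T^{i+j+k}$ times the cleared numerator modulo $q$, and let a single $\rho$ absorb $c^{\abs{\alpha}}$, the denominators coming from the $\tilde v_i$, and the worst-case power of $\denom(q)$, which yields exactly the stated bound on $\deg(\rho)$. The only deviation is that you bound the degree growth of the reduction modulo $q$ by a direct induction on Euclidean division steps rather than via the pseudo-remainder/subresultant estimate of the preceding lemma; your resulting entry bound $\deg_{\y}(N_\alpha)+2\Delta\bigl(2\delta-2+\abs{\alpha}(2\delta-1)\bigr)$ is indeed at most $\mathcal{D}(\abs{\alpha})$, so the conclusion stands.
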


\begin{proof}
    We rewrite the remainder as
    \begin{align*}
          & \Rem_{T}(T^{i+j+k}\g^\alpha, q)
        = \Rem_{T}(T^{i+j+k}\tilde{\g}^\alpha, q)                             \\
        = & \frac{\denom(q^\prime)^{\abs{\alpha}}}{c^{\abs{\alpha}}\denom(p)}
        \frac{\Prem_{T}(T^{i+j+k}\num(p),\num(q))}{\denom(q)^{\max(i+j+k+\deg_{T}(p)-\deg_{T}(q)+1,0)}},
    \end{align*}
    where $p=b^{\abs{\alpha}}\tilde{v}^{\alpha}\in\Q(\y)[T]$,
    with the degree bounds
    \begin{align*}
        \deg_{T}(p)
         & \leq \abs{\alpha}(2\delta-1),                 \\
        \deg_{\y}(\num(p))
         & \leq 2\abs{\alpha}\Delta(2\delta-1+\deg(\g)), \\
        \deg(\denom(p))
         & \leq \abs{\alpha}\Delta\deg(\g),
    \end{align*}
    so that
    \begin{align*}
             & \deg_{\y}(\Prem(T^{i+j+k}\num(p),\num(q)))                                  \\
        \leq & 2\Delta\max(\abs{\alpha}(2\delta-1+\deg(\g)),1)                             \\
             & \cdot\max(i+j+k+\abs{\alpha}(2\delta-1)-\delta+2,1)                         \\
        \leq & 2\Delta(2\delta-1)\max(1,\abs{\alpha}(\abs{\alpha}+1)(2\delta-1+\deg(\g))).
    \end{align*}

    It suffices then to take
    \begin{equation*}
        \rho=c^{\abs{\alpha}}\denom(p)\denom(q)^{\max(3\delta-2+\deg_{T}(p)-\deg_{T}(q),0)},
    \end{equation*}
    hence the degree bound on $\rho$.

    Moreover, suppose w.l.o.g. that
    $\deg_{T}(p)\geq\deg_{T}(q)$ holds.
    Then, for $k\in\interval{0}{\delta-1}$, we have
    \begin{align}
        \label{eq:cof}
        \begin{split}
              & \cof(T^k,\Rem_{T}(T^{i+j+k}\g^{\alpha},q))                  \\
            = & \rho^{-1} \cof(T^k,\Prem_{T}(T^{i+j+k}\num(p),\num(q)))     \\
              & \denom(q)^{3\delta-3-i-j-k}\denom(q^\prime)^{\abs{\alpha}},
        \end{split}
    \end{align}
    hence the degree bound on the entries of $\rho M_{T^{i+j}\g^{\alpha}}$.
\end{proof}

Hence, the numerator of each $(\mathcal{H}_{\g^{\alpha}})_{i,j}$
has degree $\leq \mathcal{D}(\abs{\alpha})$,
and the numerator of each minor of $\mathcal{H}_{\g^{\alpha}}$
has degree $\leq \delta\mathcal{D}(\abs{\alpha})$.

\subsection{Arithmetic complexity}

Recall that one can evaluate a multivariate polynomial $P$
in $k$ variables at any point within $O(\binomsym{\deg(P)}{k})$
arithmetic operations, using
\cite[Sec. 3a, 3b]{canny1989solving}.
Similarly, with $O(\binomsym{k}{\deg(P)})$ interpolation points
and the value of $P$ at these points,
one can recover $P$ within $\tilde{O}(\binomsym{k}{\deg(P)})$
arithmetic operations.
Below, we don't use
$\binomsym{\deg(P)}{k}$ and $\binomsym{k}{\deg(P)}$
interchangeably, even though they are equal.

\begin{proposition}
    \label{prop:classcomp}
    Each call to \cite[Algorithm 2]{gaillard2024} with input
    $\f=(f_1,\ldots,f_p)$, $\g=(g_1,\ldots,g_s)$ takes
    \begin{equation}
        \label{eq:classcomp}
        \tilde{O}(s\binomsym{t}{\delta\mathcal{D}(2s)}(2\delta s\beta)^{t+1}2^{3t}(\delta\mathcal{D}(2s))^{2t+1})
    \end{equation}
    arithmetic operations in $\Q$,
    where $\beta$ is the number of realizable signs of $\g$ on $V(\f)$,
    and outputs at most
    $(4\delta s\beta\mathcal{D}(2s))^t$ formulas
    that consist of $O(\delta s\beta)$ polynomials of degree
    at most $\delta\mathcal{D}(2s)$.
\end{proposition}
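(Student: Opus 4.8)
The plan is to trace through \cite[Algorithm 2]{gaillard2024} step by step, attaching to each phase a cost expressed in terms of the quantities we have already bounded, namely $\delta$, $\Delta$, $\deg(\g)\le \upperbound{\deg(\g)}$, and the degree bound $\mathcal{D}(\abs{\alpha})$ on the entries of the parametric Hermite matrices (and $\delta\mathcal{D}(\abs{\alpha})$ on their principal minors). First I would recall the structure of the algorithm: it (i) computes, for each $\alpha$ in the relevant subset of $\lbrace 0,1,2\rbrace^s$, the parametric Hermite matrix $\mathcal{H}_{\g^\alpha}$ via the formula \eqref{eq:hermdecomp}; (ii) forms the semi-algebraic set $\mathcal{S}$ cut out by non-vanishing of suitable (leading) principal minors, calling \textsc{SamplePoints} on it; (iii) evaluates the sign data of the Hermite matrices at the sample points; and (iv) assembles the output formulas via the linear combination \eqref{eq:lincomb}. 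The number of $\alpha$'s that matter is $O(s)$ after the reduction to $\abs{\alpha}\le 2s$ used in \textsc{Classification} (the sum over $\lbrace 0,1\rbrace^s$ entries), which explains the leading factor $s$ and the appearance of $\mathcal{D}(2s)$ everywhere.

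Next I would cost each phase. For (i), by \eqref{eq:hermdecomp}–\eqref{eq:cof} each entry of $\rho\mathcal{H}_{\g^\alpha}$ is a polynomial in $\Q[\y]_{\le\mathcal{D}(2s)}$, obtained from $\delta$ pseudo-remainder/cofactor extractions; computing all $\delta^2$ entries symbolically in $\y$ is done by evaluation–interpolation, using $O(\binomsym{t}{\delta\mathcal{D}(2s)})$ interpolation points and, at each point, arithmetic purely in $\Q$ whose cost is polynomial in $\delta$ and $s$. For (ii), the call to \textsc{SamplePoints} on $\mathcal{S}\subset\reals^t$ dominates: $\mathcal{S}$ is defined by $O(\delta s\beta)$ polynomials — one principal minor per Hermite matrix, one Hermite matrix per realizable sign, times the $O(s)$ exponents — of degree at most $\delta\mathcal{D}(2s)$ in $t$ variables, and the sampling routine of \cite{le2022solving} runs in $(\text{number of polys}\cdot\text{degree})^{O(t)}$, i.e. $(\delta s\beta\cdot\delta\mathcal{D}(2s))^{O(t)}$, which after bookkeeping gives the stated $(2\delta s\beta)^{t+1}2^{3t}(\delta\mathcal{D}(2s))^{2t+1}$ and produces $O\big((\delta s\beta\,\delta\mathcal{D}(2s))^{t}\big)=(4\delta s\beta\mathcal{D}(2s))^t$ connected components, hence that many output formulas. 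For (iii)–(iv), evaluating the $O(\delta s\beta)$ minors at each of the $(4\delta s\beta\mathcal{D}(2s))^t$ sample points and reading off the signature is absorbed into the same bound, and the resulting formulas each list $O(\delta s\beta)$ of the minor polynomials, which have degree at most $\delta\mathcal{D}(2s)$; multiplying by the outer factor $s$ for the loop over $\alpha$ yields \eqref{eq:classcomp}.

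I expect the main obstacle to be the careful accounting in step (ii): one has to verify that the number and degree of the defining polynomials of $\mathcal{S}$ are exactly $O(\delta s\beta)$ and $\delta\mathcal{D}(2s)$ respectively — which requires recalling precisely which minors \cite[Algorithm 2]{gaillard2024} uses (all principal minors, or only the leading ones, and whether $\mathcal{W}_\infty$ contributes extra factors $\lc_\x$) — and then plugging these into the complexity statement of \textsc{SamplePoints} from \cite{le2022solving} with the correct exponents in $t$. A secondary technical point is ensuring the symbolic Hermite-matrix computation in step (i) is genuinely cheaper than the sampling step, so that it does not appear in the final bound; this follows because its cost is $\binomsym{t}{\delta\mathcal{D}(2s)}$ times a factor polynomial in $\delta,s$, whereas sampling contributes the extra $(\cdots)^{t}$ factors. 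Everything else — degree propagation through \eqref{eq:cof}, the reduction $\abs{\alpha}\le 2s$, and Bézout-type bounds on the number of realizable signs absorbed into $\beta$ — is routine given the preceding propositions.
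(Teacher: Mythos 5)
Your proposal follows essentially the same route as the paper's proof: the cost is dominated by the (at most $s$) calls to \textsc{SamplePoints} on the $O(\delta s\beta)$ minors of degree at most $\delta\mathcal{D}(2s)$ — the paper imports the precise per-call bound from \cite[Proposition 4.11]{gaillard2024} rather than re-deriving it from \cite{le2022solving} — while the interpolation-based construction of the Hermite matrices and the signature evaluations at the sample points are verified to be dominated, exactly as you outline. One small correction: the leading factor $s$ comes from the $s$ incremental calls to \textsc{SamplePoints} (one per inequality processed), not from the number of relevant exponents $\alpha$ being $O(s)$ — that number is governed by $\beta$ — but this mis-attribution does not affect the final count.
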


\begin{proof}
    By \cite[Prop. 4.11]{gaillard2024},
    each call to \textsc{SamplePoints}
    involves at most $2\delta s\beta$ polynomials,
    and hence outputs a set of at most $(4\delta^2 s\beta\mathcal{D}(2s))^t$ points in
    $\tilde{O}(\binomsym{t}{\delta\mathcal{D}(2s)}(2\delta s\beta)^{t+1}2^{3t}(\delta\mathcal{D}(2s))^{2t+1})$
    arithmetic operations in $\Q$.
    There are at most $s$ calls to \textsc{SamplePoints},
    hence the complexity in \Cref{eq:classcomp}.

    We show that
    the complexity of computing the minors (Line 9 - 13)
    and evaluating the signatures (Line 16, 25)
    is dominated by the
    complexity of computing sample points;
    as observed in \cite[Sec. 4.3]{gaillard2024},
    the other steps are simply linear algebra and negligible.

    \emph{Line 9 - 13. }
    For $\mathcal{H}_{\g^{\alpha}}$ with $\abs{\alpha}\geq 1$,
    the numerator of $(\mathcal{H}_{\g^{\alpha}})_{i,j}$,
    denoted by $P$,
    is given by \Cref{eq:hermdecomp}.
    Suppose w.l.o.g. that $\deg_{T}(p)\leq\deg_{T}(q)-1$,
    so that the coefficients are given by \Cref{eq:cof}.

    To compute $P$, we need $\binomsym{t}{\deg(P)}$ interpolation points,
    where $\deg(P)\leq \mathcal{D}(\abs{\alpha})$. \\
    For each interpolation point $y\in\Q^{t}$,
    we first compute $\num(p)(y)$, $\num(q)(y)$, $\denom(q)(y)$,
    and $\denom(q^\prime)(y)$, in
    $O(\binomsym{\deg(\num(p))}{t+1})$
    operations;
    then, for each $k\in\interval{0}{\delta-1}$,
    we compute
    $\Prem_{T}(T^{i+j+k}\num(p),\num(q))(y)$
    in $O(\abs{\alpha}\delta^2)$ operations;
    finally, we compute $P(y)$ in $O(\delta)$ operations.
    Hence, we need
    $\tilde{O}(\binomsym{t}{\mathcal{D}(\abs{\alpha})}(
        \binomsym{\deg(\num(p))}{t+1}
        +\abs{\alpha}\delta^3
        ))$
    operations.

    Consequently, computing $\mathcal{H}_{\bm{g}^{\alpha}}$
    takes the same complexity as $P$,
    as we can take the same points for all traces,
    and computing $\rho$ has a negligible cost in front of $P$.
    Also, there are at most $2\delta\beta$ principal minors to compute,
    and computing each takes at most
    \begin{equation*}
        \tilde{O}(\binomsym{t}{\delta\mathcal{D}(2s)}(
        \binomsym{4s\Delta(2\delta-1+\deg(\bm{g}))+2s(2\delta-1)}{t+1}
        +2s\delta^3
        ))
    \end{equation*}
    arithmetic operations in $\Q$,
    as it takes at most $O(\delta^3)$ operations
    for $\mathcal{H}_{\g^{\alpha}}(y)$,
    and $O(\delta^\omega)$ operations for the determinant.
    However, all the above bounds are negligible in front of
    the complexity of computing sample points, as
    \begin{align*}
            & \binomsym{4s\Delta(2\delta-1+\deg(\bm{g}))+2s(2\delta-1)}{t+1}    \\
        <   & 2(4s\Delta(2\delta-1+\deg(\bm{g}))+2s(2\delta-1))^{t+1}           \\
        \ll & (2s+1)(2\delta-1)(4s\Delta(2\delta-1+\deg(\g)))^{t+1}             \\
        <   & (\delta\mathcal{D}(2s))^{t+1} \ll (\delta\mathcal{D}(2s))^{2t+1}.
    \end{align*}

    \emph{Line 16, 25. }
    For the evaluation of the signatures,
    we recall that there are $3\delta\beta$ minors,
    whose numerators have deg. at most $\delta\mathcal{D}(2s)$,
    to evaluate at at most $(4\delta^2s\beta\mathcal{D}(2s))^t$ points.
    This is then in
    \begin{equation*}
        O(\binomsym{t}{\delta\mathcal{D}(2s)}\delta\beta(4\delta^2s\beta\mathcal{D}(2s))^t)
    \end{equation*}
    arithmetic operations in $\Q$.
\end{proof}

\begin{proof}[Proof of \Cref{thm:main}]
    For all $\f=\f_{r,\iota,i}\subset\pring[\bu,\blambda]$,
    we have $s\leq m$,
    $\mathcal{D}(2s)\leq 32(m+1)^2(\delta+md+m)\delta\Delta$,
    and $\beta\leq3^m$,
    so that the complexity of each call to \textsc{Classification},
    given by \Cref{eq:classcomp} in \Cref{prop:classcomp},
    is in
    \begin{equation*}
        \tilde{O}\left(2^{3t}m(3^m 2m\delta)^{t+1}(32m^2(\delta+md)\delta^2\Delta)^{3t+1}\right)
    \end{equation*}
    operations in $\Q$.
    As \algorealdet gives at most
    $\sum_{r=0}^{m-1}\binom{m}{r}(n-\binom{m-r+1}{2})=2^{m-3}(8n-m^2-3m+8)-n-1$
    polynomial systems,
    computing real root classifications for all $\f$
    then take the complexity
    in \Cref{eq:complexity}.

    By \cite[Th. 5.6]{henrion2016exact},
    $\algorealdet$ is in
    $n^{O(m^2)} 2^{O(m)} (n+m^2)^{O(1)}$
    arithmetic operations in $\Q[\y]$,
    which is negligible compared to the terms of \Cref{eq:complexity}
    without $\upperbound{\Delta}$,
    let alone the complexity in $\Q$ by interpolation.
    The same argument holds for \textsc{PSDMatrixCond}
    and \textsc{ChangeVars}.
    Finally,
    the complexity of
    computing each parametric geometric resolution,
    given by \Cref{eq:rrccomp},
    is in
    $2^{O(t+m)} (nt)^{O(1)} (md)^{O(t)} \upperbound{\delta}\upperbound{\Delta}^{t}$
    arithmetic operations in $\Q$,
    which is also negligible compared to the terms of \Cref{eq:complexity}.
\end{proof}

\section{Practical experiments}

We present practical experiments
with our implementation
on LMIs with background in multiple domains.
Due to their large size,
we refer to
\begin{center}
    \url{https://www-polsys.lip6.fr/~weijia/generic-param-lmi/}
\end{center}
for the exact form of the LMIs.
In the following,
we denote by \textsf{PRBtn}
the LMI for the problem \textsf{PRB}
with $t$ parameters and $n$ variables.

\paragraph{Sum-of-squares problem}

Decomposing a polynomial with real coefficients as a sum of squares (to certify
its non-negativity) is done by solving an LMI. When the coefficients of the
input polynomial depend on some parameters, one obtains a parametric LMI.
In the following, \textsf{MKN11} is extracted from a perturbation
of Motzkin's polynomial \cite{motzkin1967arithmetic},
and \textsf{RBN11} from Robinson's polynomial \cite{robinson1973some}.

\paragraph{Algorithm analyses}

We also consider LMIs extracted from the analysis of classical
first-order optimization algorithms.
In the following,
\textsf{GRD12} to \textsf{GRD23} corresponds to different choices
of parameters and specializations of a same LMI
of size $m=3$, with $2$ parameters and $4$ free variables,
generated from the gradient descent method.
Similarly,
\textsf{PPM21} and \textsf{PPM31} are generated from the proximal point method,
and \textsf{DRS32} to \textsf{DRS43} comes from
\cite[SM3.1.1, SM3.2.1]{ryu2020operator}
in the analysis of the Douglas-Rachford splitting.

\paragraph{Experimental setup}

The timings are given in hours (h.), minutes (min.) and seconds (s.),
and the computations have been performed on a server
Intel Xeon Gold 6246R with 1.5 TB of RAM.
We compute Hermite matrices using \textsf{msolve}
\cite{berthomieu2021msolve} for Gröbner basis computation with
\emph{graded reverse lexicographical} orderings
and the Maple package \textsf{Groebner}
for elimination orderings. These are used as a replacement of the parametric
geometric resolution algorithm to build our Hermite matrices (see
\cite{gaillard2024}).
The columns O1 and O2
correspond to the timings
of the first two options in \Cref{sec:rrc}.
The columns QE1 and QE2
correspond to the timings
of the quantifier elimination in Maple 2024.2 \cite{CheMM16}
and Wolfram 14.2 \cite{strzebonski2006cylindrical}.

\begin{figure}[H]
    \centering
    \scalebox{1}{
        \begin{tabular}{c|c c |c c}
                           & O1       & O2       & QE1      & QE2      \\ \hline
            \textsf{MKN11} & 5.0 s    & 1.5 s    & 5.7 s    & 0.06 s   \\
            \textsf{RBN11} & 5.0 s    & 1.6 s    & 7.1 s    & 0.04 s   \\
            \textsf{GRD12} & 1.0 s    & 3.7 s    & $\infty$ & 0.5 s    \\
            \textsf{GRD13} & 19 s     & 17 s     & $\infty$ & $\infty$ \\
            \textsf{GRD14} & $\infty$ & $\infty$ & $\infty$ & $\infty$ \\
            \textsf{GRD21} & 0.5 s    & 1.7 s    & 1.3 s    & 0.1 s    \\
            \textsf{GRD22} & 5.8 s    & 2 min    & $\infty$ & 42 min   \\
            \textsf{GRD23} & $\infty$ & $\infty$ & $\infty$ & $\infty$ \\
            \textsf{PPM21} & 0.3 s    & 0.3 s    & 0.3 s    & 0.005 s  \\
            \textsf{PPM31} & 0.3 s    & 0.4 s    & 0.4 s    & 0.007 s  \\
            \textsf{DRS32} & 2.2 s    & 8 h      & $\infty$ & $\infty$ \\
            \textsf{DRS33} & 18 min   & $\infty$ & $\infty$ & $\infty$ \\
            \textsf{DRS42} & 52 s     & $\infty$ & $\infty$ & $\infty$ \\
            \textsf{DRS43} & $\infty$ & $\infty$ & $\infty$ & $\infty$
        \end{tabular}
    }
\end{figure}
In the table, the symbol
$\infty$ indicates that the computation is unfortunately
intractable either because no result was obtained within \(48\) hours
or due to lack of memory.

When $n\geq 2$,
our implementation
succeeds for examples
where quantifier elimination in Maple 2024.2 and Wolfram 14.2
either fails due to lack of memory or does not terminate after \(48\) hours.
This illustrates the benefit of our approach on this range of problems.

\paragraph*{Acknowledgements.}

Simone Naldi is supported by the ANR Project ANR-21-CE48-0006-01 ``HYPERSPACE''.
Adrien Taylor is supported by the European Union (ERC grant CASPER 101162889).
The French government also partly funded this work under the management of
Agence Nationale de la Recherche as part of the ``France 2030'' program,
reference ANR-23-IACL-0008 ``PR[AI]RIE-PSAI''.
Mohab Safey El Din and Weijia Wang are supported by
the ANR Project ANR-22-CE91-0007 ``EAGLES''.
This collaboration started during the workshop
\emph{Conic Linear Optimization for Computer-Assisted Proofs}
in April 2022 at the Mathematisches Forschungsinstitut Oberwolfach (MFO),
organized by Etienne de Klerk, Didier Henrion, Frank Vallentin
and Angelika Wiegele.
Special thanks to Didier Henrion for his stimulating and helpful discussions
as well as his encouragements to develop this research track.
Views and opinions expressed are however those of the authors only.

\bibliographystyle{plain}
\bibliography{refs}

\begin{thebibliography}{10}

\bibitem{anjos2011handbook}
M.~F. Anjos and J.~B. Lasserre.
\newblock {\em Handbook on semidefinite, conic and polynomial optimization}.
\newblock Springer Science \& Business Media, 2011.

\bibitem{bank2010geometry}
B.~Bank, M.~Giusti, J.~Heintz, M.~Safey El~Din, and É. Schost.
\newblock On the geometry of polar varieties.
\newblock {\em Applicable Algebra in Engineering, Communication and Computing},
  21(1):33--83, 2010.

\bibitem{BPRQE}
S.~Basu, R.~Pollack, and M.-F. Roy.
\newblock On the combinatorial and algebraic complexity of quantifier
  elimination.
\newblock {\em J. ACM}, 43(6):1002–1045, November 1996.

\bibitem{basu2007algorithms}
S.~Basu, R.~Pollack, and M.-F. Roy.
\newblock {\em Algorithms in Real Algebraic Geometry}.
\newblock Algorithms and Computation in Mathematics. Springer Berlin
  Heidelberg, 2007.

\bibitem{berthomieu2021msolve}
J.~Berthomieu, C.~Eder, and M.~Safey El~Din.
\newblock Msolve: A library for solving polynomial systems.
\newblock In {\em Proceedings of the 2021 on International Symposium on
  Symbolic and Algebraic Computation}, pages 51--58, 2021.

\bibitem{bruns2006determinantal}
W.~Bruns and U.~Vetter.
\newblock {\em Determinantal rings}, volume 1327.
\newblock Springer, 2006.

\bibitem{canny1989solving}
J.~F. Canny, E.~Kaltofen, and L.~Yagati.
\newblock Solving systems of nonlinear polynomial equations faster.
\newblock In {\em Proceedings of the ACM-SIGSAM 1989 International Symposium on
  Symbolic and Algebraic Computation}, pages 121--128, 1989.

\bibitem{CheMM16}
C.~Chen and M.~{Moreno Maza}.
\newblock Quantifier elimination by cylindrical algebraic decomposition based
  on regular chains.
\newblock {\em Journal of Symbolic Computation}, 75:74--93, 2016.
\newblock Special issue on the conference ISSAC 2014: Symbolic computation and
  computer algebra.

\bibitem{collins1975}
G.~E. Collins.
\newblock Quantifier elimination for real closed fields by cylindrical
  algebraic decompostion.
\newblock In {\em Automata Theory and Formal Languages}. Springer Berlin
  Heidelberg, 1975.

\bibitem{drori2014performance}
Y.~Drori and M.~Teboulle.
\newblock Performance of first-order methods for smooth convex minimization: a
  novel approach.
\newblock {\em Mathematical Programming}, 145(1):451--482, 2014.

\bibitem{eisenbud2013commutative}
D.~Eisenbud.
\newblock {\em Commutative algebra: with a view toward algebraic geometry}.
\newblock Springer Science \& Business Media, 2013.

\bibitem{elliott2023bit}
J.~Elliott, M.~Giesbrecht, and É. Schost.
\newblock Bit complexity for computing one point in each connected component of
  a smooth real algebraic set.
\newblock {\em Journal of Symbolic Computation}, 116:72--97, 2023.

\bibitem{gaillard2024}
L.~Gaillard and M.~Safey El~Din.
\newblock Solving parameter-dependent semi-algebraic systems.
\newblock In {\em Proceedings of the 2024 International Symposium on Symbolic
  and Algebraic Computation}, pages 447--456, 2024.

\bibitem{Grigoriev88}
D.~Yu Grigor'ev.
\newblock {Complexity of deciding Tarski algebra}.
\newblock {\em Journal of Symbolic Computation}, 5(1):65--108, 1988.

\bibitem{heintz1983definability}
J.~Heintz.
\newblock Definability and fast quantifier elimination in algebraically closed
  fields.
\newblock {\em Theoretical Computer Science}, 24(3):239--277, 1983.

\bibitem{henrion2016exact}
D.~Henrion, S.~Naldi, and M.~Safey El~Din.
\newblock Exact algorithms for linear matrix inequalities.
\newblock {\em SIAM Journal on Optimization}, 26(4):2512--2539, 2016.

\bibitem{henrion2016real}
D.~Henrion, S.~Naldi, and M.~Safey El~Din.
\newblock Real root finding for determinants of linear matrices.
\newblock {\em Journal of symbolic computation}, 74:205--238, 2016.

\bibitem{hermite1856extrait}
C.~Hermite.
\newblock {Extrait d'une lettre de Mr. Ch. Hermite de Paris à Mr. Borchardt de
  Berlin sur le nombre des racines d'une équation algébrique comprises entre
  des limites données}., 1856.

\bibitem{kalkbrener1997stability}
M.~Kalkbrener.
\newblock On the stability of {G}röbner bases under specializations.
\newblock {\em Journal of Symbolic Computation}, 24(1):51--58, 1997.

\bibitem{le2021faster}
H.~P. Le and M.~Safey El~Din.
\newblock Faster one block quantifier elimination for regular polynomial
  systems of equations.
\newblock In {\em Proceedings of the 2021 International Symposium on Symbolic
  and Algebraic Computation}, pages 265--272, 2021.

\bibitem{le2022solving}
H.~P. Le and M.~Safey El~Din.
\newblock Solving parametric systems of polynomial equations over the reals
  through {H}ermite matrices.
\newblock {\em Journal of Symbolic Computation}, 112:25--61, 2022.

\bibitem{motzkin1967arithmetic}
T.~S. Motzkin.
\newblock The arithmetic-geometric inequality.
\newblock {\em Inequalities (Proc. Sympos. Wright-Patterson Air Force Base,
  Ohio, 1965)}, 205:54, 1967.

\bibitem{robinson1973some}
R.~M. Robinson.
\newblock Some definite polynomials which are not sums of squares of real
  polynomials.
\newblock {\em Selected questions of algebra and logic}, pages 264--282, 1973.

\bibitem{ryu2020operator}
E.~K. Ryu, A.~B. Taylor, C.~Bergeling, and P.~Giselsson.
\newblock Operator splitting performance estimation: tight contraction factors
  and optimal parameter selection.
\newblock {\em SIAM Journal on Optimization}, 30(3):2251--2271, 2020.

\bibitem{safeyschost2003}
M.~Safey El~Din and É. Schost.
\newblock Polar varieties and computation of one point in each connected
  component of a smooth real algebraic set.
\newblock In {\em Proceedings of the 2003 International Symposium on Symbolic
  and Algebraic Computation}, pages 224--231, 2003.

\bibitem{din2017nearly}
M.~Safey El~Din and É. Schost.
\newblock A nearly optimal algorithm for deciding connectivity queries in
  smooth and bounded real algebraic sets.
\newblock {\em Journal of the ACM (JACM)}, 63(6):1--37, 2017.

\bibitem{schost2003computing}
É. Schost.
\newblock Computing parametric geometric resolutions.
\newblock {\em Applicable Algebra in Engineering, Communication and Computing},
  13(5):349--393, 2003.

\bibitem{shafarevich1994basic}
I.~R. Shafarevich and M.~Reid.
\newblock {\em Basic algebraic geometry}.
\newblock Springer, 1994.

\bibitem{strzebonski2006cylindrical}
A.~W. Strzebo{\'n}ski.
\newblock Cylindrical algebraic decomposition using validated numerics.
\newblock {\em Journal of Symbolic Computation}, 41(9):1021--1038, 2006.

\bibitem{Tarski}
A.~Tarski.
\newblock {\em A decision method for elementary algebra and geometry}.
\newblock University of California, 1951.

\bibitem{taylor2017exact}
A.~B. Taylor, J.~M. Hendrickx, and F.~Glineur.
\newblock Exact worst-case performance of first-order methods for composite
  convex optimization.
\newblock {\em SIAM Journal on Optimization}, 27(3):1283--1313, 2017.

\bibitem{taylor2017smooth}
A.~B. Taylor, J.~M. Hendrickx, and F.~Glineur.
\newblock Smooth strongly convex interpolation and exact worst-case performance
  of first-order methods.
\newblock {\em Mathematical Programming}, 161:307--345, 2017.

\end{thebibliography}

\end{document}